\theoremstyle{plain}
\newtheorem{theorem}{Theorem}[section]
\newtheorem{proposition}[theorem]{Proposition}
\newtheorem{corollary}[theorem]{Corollary}
\newtheorem{lemma}[theorem]{Lemma}
\theoremstyle{definition}
\newtheorem{definition}[theorem]{Definition}
\newtheorem{example}[theorem]{Example}
\newtheorem{examples}[theorem]{Examples}
\newtheorem{assumption}[theorem]{Assumption}
\newtheorem{remark}[theorem]{Remark}
\newtheorem{notation}[theorem]{Notation}
\newtheorem{construction}[theorem]{Construction}
\numberwithin{equation}{section}
\renewcommand{\phi}{\varphi}
\def\op{{\mathit{op}}}
\def\co{{\mathit{co}}}
\def\id{{\mathit{id}}}
\def\wh#1{\widehat{#1}}
\def\wt#1{\widetilde{#1}}
\def\ol#1{\overline{#1}}
\def\cotensor{\pitchfork}
\def\Lan#1#2{#2/#1}
\def\Ran#1#2{#2{\setminus}#1}
\def\dd{{\sslash}}
\def\Lim#1#2{\{#1,#2\}}
\def\Colim#1#2{#1\star #2}
\def\colim{\mathop{\mathrm{colim}}}
\def\coins{\mathop{\mathrm{coins}}}
\def\Ord{{\mathrm{Ord}}}
\def\H{{\mathcal{H}}}
\def\E{{\mathcal{E}}}
\def\M{{\mathcal{M}}}
\def\LInj#1{{\mathsf{LInj}}(#1)}
\def\wLInj#1{{\mathsf{LInj}}_{{\mathit{w}}}(#1)}
\def\RInj#1{{\mathsf{RInj}}(#1)}
\def\TT{{\mathbb{T}}}
\def\kat#1{{\mathscr{#1}}}
\def\Set{{\mathsf{Set}}}
\def\Pos{{\mathsf{Pos}}}
\def\Top{{\mathsf{Top}}}
\def\C{\kat{C}}
\def\D{\kat{D}}
\def\X{\kat{X}}
\def\Y{\kat{Y}}
\renewcommand{\to}{\longrightarrow}
\def\into{\hookrightarrow}
\begin{document}
\title[Kan injectivity in order-enriched categories]
      {Kan injectivity in order-enriched categories}
\author{Ji\v{r}\'{\i} Ad\'{a}mek}
\address{Institute of Theoretical Computer Science, Technical University of Braunschweig,
        Germany}
\email{adamek@iti.cs.tu-bs.de}
\author{Lurdes Sousa}
\address{Polytechnic Institute of Viseu \& Centre for Mathematics of the University of Coimbra, Portugal}
\email{sousa@estv.ipv.pt}
\thanks{The second author acknowledges the support of  the Centre for Mathematics of the University of Coimbra 
       (funded by the  program COMPETE and by  the Funda\c c\~ao para a Ci\^encia e a Tecnologia, 
        under the project PEst-C/MAT/UI0324/2013).}
\author{Ji\v{r}\'{\i} Velebil}
\address{Department of Mathematics, Faculty of Electrical Engineering, Czech Technical University
         in Prague, Czech Republic}
\email{velebil@math.feld.cvut.cz}
\thanks{The third author acknowledges the support
        of the grant  No.~P202/11/1632
        of the Czech Science Foundation.} 
\keywords{}
\subjclass{}
\date{7 November, 2013}

\begin{abstract}
Continuous lattices were characterised by 
Mart\'{\i}n Escard\'{o} as precisely the objects that are Kan-injective
w.r.t. a certain class of morphisms. We study Kan-injectivity in general
categories enriched in posets. An example: $\omega$-CPO's are precisely
the posets that are Kan-injective w.r.t. the embeddings 
$\omega\into\omega+1$ and $0\into 1$. 

For every class $\H$ of morphisms we study the subcategory of all objects
Kan-injective w.r.t. $\H$ and all morphisms preserving Kan-extensions.
For categories such as $\Top_0$ and $\Pos$ we prove that whenever $\H$
is a set of morphisms, the above subcategory is monadic, and the monad it creates
is a Kock-Z\"{o}berlein monad. However, this does not generalise to
proper classes: we present a class of continuous mappings in $\Top_0$
for which Kan-injectivity does not yield a monadic category.
\end{abstract}

\maketitle

\begin{flushright}
\footnotesize
Dedicated to the memory of Daniel M.~Kan (1927--2013)    
\end{flushright}

\section{Introduction}
\label{sec:intro}

Dana Scott's result characterising continuous lattices 
as precisely the injective topological $T_0$-spaces,
see~\cite{S72}, was one of the milestones of domain
theory. This was later refined by Alan Day~\cite{day} 
who characterised continuous lattices as the algebras for the open filter
monad on the category $\Top_0$ of topological $T_0$-spaces
and by Mart\'{\i}n Escard\'{o}~\cite{E1}
who used the fact that the category $\Top_0$ of topological
$T_0$-spaces is naturally enriched in the category of posets
(shortly: order-enriched).

In every order-enriched category one can define 
the {\em left Kan extension\/} $\Lan{h}{f}$ of a morphism
$f:A\to X$ along a morphism $h:A\to A'$ 
\begin{equation}
\label{eq:1.1}
\vcenter{
\xymatrixcolsep{1pc}
\xymatrix{
A
\ar[0,2]^-{h}
\ar[1,1]_{f}
&
\ar@{}[1,0]|(.4){\leq}
&
A'
\ar[1,-1]^{\Lan{h}{f}}
\\
&
X
&
}  
}
\end{equation}
as the smallest morphism from $A'$ to $X$ with
$f\leq (\Lan{h}{f})\cdot h$. An object $X$ is called
{\em left Kan-injective\/} w.r.t. $h$ iff for every
morphism $f$ the left Kan extension $\Lan{h}{f}$ exists
and fulfills $f=(\Lan{h}{f})\cdot h$. 
Mart\'{\i}n Escard\'{o} proved that in $\Top_0$ the left
Kan-injective spaces w.r.t. all subspace inclusions are
precisely the continuous lattices endowed with the Scott topology. 
And w.r.t. all dense subspace inclusions they are precisely the continuous
Scott domains (again with the Scott topology), see~\cite{E1}.

Recently, Margarida Carvalho and Lurdes Sousa~\cite{CS} 
extended the concept
of left Kan-injectivity to morphisms: a morphism is left-Kan
injective w.r.t. $h$ if it preserves left Kan extensions
along $h$.

We thus obtain, for every class
$\H$ of morphisms in an order-enriched category $\X$, a
(not full, in general) subcategory
$$
\LInj{\H}
$$
of all objects and all morphisms that are left Kan-injective
w.r.t. every member of $\H$. 

\begin{example}
For $\H$ = subspace embeddings in $\Top_0$, $\LInj{\H}$ is the
category of continuous lattices (endowed with the Scott topology)
and meet-preserving continuous maps.  
\end{example}

\begin{example}
In the category $\Pos$ of posets take $\H$ to consist of the two
embeddings $\omega\into\omega+1$ and $\emptyset\into 1$. Then 
$\LInj{\H}$ is the category of $\omega$-CPOS's, i.e., posets
with a least element and joins of $\omega$-chains, and $\omega$-continuous
strict functions.
\end{example}

We are going to prove that whenever the subcategory $\LInj{\H}$
is reflective, i.e., its embedding into $\X$ has a left
adjoint, then the monad $\TT=(T,\eta,\mu)$ on $\X$ that this adjunction defines
is a Kock-Z\"{o}berlein monad, i.e., the inequality $T\eta\leq \eta T$ holds.
And $\LInj{\H}$ is the Eilenberg-Moore category $\X^\TT$. 
Our main result is that in a wide class of order-enriched categories,
called {\em locally ranked categories\/} (they include $\Top_0$ and $\Pos$),
every class $\H$ of morphisms, such that all members of $\H$ but a set are
order-epimorphisms, defines a reflective subcategory $\LInj{\H}$.
However, this does not hold for general classes $\H$: we present a class $\H$
of continuous functions in $\Top_0$ whose subcategory $\LInj{\H}$ fails to 
be reflective.

We also study weak left Kan-injectivity: this means that for every $f$
a left Kan extension $\Lan{h}{f}$ exists but in~\eqref{eq:1.1} equality
is not required. We prove that, in a certain sense, this concept can always
be substituted by the above (stronger) one.

\section{Left Kan-injectivity}
\label{sec:lkan-inj}

Throughout the paper we work with
\begin{enumerate}
\item
{\em order-enriched categories\/} $\X$, i.e., all homsets
$\X(X,X')$ are partially ordered, and composition is monotone
(in both variables)  
\end{enumerate}
and
\begin{enumerate}
\setcounter{enumi}{1}
\item
{\em locally monotone functors\/} $F:\X\to\Y$, i..e, 
the derived functions from $\X(X,X')$ to $\Y(FX,FX')$
are all monotone. 
\end{enumerate}

\begin{notation}
Given morphisms 
$$
\xymatrixcolsep{1pc}
\xymatrix{
A
\ar[0,2]^-{h}
\ar[1,1]_{f}
&
&
A'
\\
&
X
&
}
$$
we denote by $\Lan{h}{f}:A'\to X$ the {\em left Kan extension\/}
of $f$ along $h$. That is, we have $f\leq (\Lan{h}{f})\cdot h$
and for all $g:A'\to X$
\begin{equation}
\label{eq:2.1}
\vcenter{
\xymatrixcolsep{1pc}
\xymatrix{
A
\ar[0,2]^-{h}
\ar[1,1]_{f}
&
\ar@{}[1,0]|(.4){\leq}
&
A'
\ar[1,-1]^{g}
\\
&
X
&
}  
}  
\quad
\mbox{implies}
\quad
\vcenter{
\xymatrixcolsep{1pc}
\xymatrix{
&
&
A'
\ar @/_1pc/ [1,-1]_{\Lan{h}{f}}
\ar@{} [1,-1]|{\leq}
\ar @/^1pc/ [1,-1]^{g}
\\
&
X
&
}  
}  
\end{equation}
\end{notation}

The following definition is due to Escard\'{o}~\cite{E1}
for objects and Carvalho and Sousa~\cite{CS} for morphisms:

\begin{definition}
\label{def:linj}
Let $h:A\to A'$ be a morphism of an order-enriched category.
\begin{enumerate}
\item
An object $X$ is called {\em left Kan-injective\/} w.r.t.
$h$ provided that for every morphism $f:A\to X$ there is a left
Kan extension $\Lan{h}{f}$ and it makes the following triangle
\begin{equation}
\label{eq:2.2}
\vcenter{
\xymatrixcolsep{1pc}
\xymatrix{
A
\ar[0,2]^-{h}
\ar[1,1]_{f}
&
&
A'
\ar[1,-1]^{\Lan{h}{f}}
\\
&
X
&
}  
}    
\end{equation}
commutative.
\item
A morphism $p:X\to X'$ is called {\em left Kan-injective\/}
w.r.t. $h$ if both $X$ and $X'$ are and for every $f:A\to X$
the morphism $p$ preserves the left Kan extension $\Lan{h}{f}$.
This means that the following diagram
\begin{equation}
\label{eq:2.3}
\vcenter{
\xymatrix{
A
\ar[0,1]^-{h}
\ar[1,0]_{f}
&
A'
\ar[1,-1]_{\Lan{h}{f}}
\ar[1,0]^{\Lan{h}{(pf)}}
\\
X
\ar[0,1]_-{p}
&
X'
}
}  
\end{equation}
commutes.
\end{enumerate}
\end{definition}

\begin{remark}
\mbox{}\hfill
\begin{enumerate}
\item 
Right Kan-injectivity is briefly mentioned in 
Section~\ref{sec:weak-kan-inj} below. (Escard\'{o} used
``right Kan-injective'' for left Kan-injectivity in~\cite{E1}.
We decided to follow the usual terminology, see, e.g., \cite{cwm}.)
\item
A weaker variant of left Kan-injectivity would just require
that for every $f$ the left Kan extension $\Lan{h}{f}$ exists
(i.e., we only have $f\leq \Lan{h}{f}\cdot h$, instead of equality).
We also turn to this concept in Section~\ref{sec:weak-kan-inj}, but we will
show that it can (under mild side conditions) be superseded by the
concept of Definition~\ref{def:linj}.
\end{enumerate}
\end{remark}

\begin{notation}
Let $\H$ be a class of morphisms of an order-enriched category $\X$.
We denote by 
$$
\LInj{\H}
$$  
the category of all objects and all morphisms that are left Kan-injective
w.r.t. all members of $\H$. The category $\LInj{\H}$ is order-enriched using
the enrichment of $\X$. 
\end{notation}

\begin{examples}
\label{exs:Pos}
We give examples of Kan-injectivity in $\Pos$. 
The order on homsets in $\Pos$ is defined pointwise.
\begin{enumerate}
\item 
\label{item:complete_semilattices}
{\em Complete semilattices\/}.
For $\H$ = all order-embeddings (that is, strong monomorphisms)
we have 
\begin{center}
$\LInj{\H}$ = complete join-semilattices and join-preserving maps.
\end{center}
Indeed, Bernhard Banaschewski and G\"{u}nter Bruns proved 
in~\cite{banaschewski+bruns} that every complete
(semi)lattice $X$ is left Kan-injective w.r.t. $\H$ since
for every order-embedding $h:A\to A'$ and every monotone $f:A\to X$
we have $\Lan{h}{f}$ given by
\begin{equation}
\label{eq:2.7new}
(\Lan{h}{f})(b)
=
\bigvee_{h(a)\leq b} f(a)
\end{equation}
And conversely, if $X$ is left Kan-injective, then every set
$M\subseteq X$ either has a maximum, which is $\bigvee M$,
or we have 
\begin{center}
$M\cap M^+ =\emptyset$ for $M^+$ = all upper bounds of $M$.   
\end{center}
In the latter case consider $A=M\cup M^+$ as a subposet of $X$
and let $A'$ extend $A$ by a single element $a'$ that is an upper
bound of $M$ and a lower bound of $M^+$. The embedding
$f:A\into X$ has a left Kan extension $\Lan{h}{f}$ that sends
$a'$ to $\bigvee M$.

By using the formula~\eqref{eq:2.7new} it is easy to see that a monotone map
$g:X\to Y$ between complete join-semilattices is left Kan-injective
iff $g$ preserves joins.
\item
{\em $\omega$CPOS's\/}.
Posets with joins of $\omega$-chains and $\bot$ and strict functions preserving
joins of $\omega$-chains are $\LInj{\H}$ for $\H$ consisting of the embeddings
$h:\omega\into\omega+1$ and $h':\emptyset\into 1$.
\item
\label{item:semilattices}
{\em Semilattices\/}.
For the embedding
$$
\let\objectstyle=\scriptstyle
\xy <1 pt,0 pt>:
    (000,000)  *++={};
    (030,040) *++={} **\frm{.};
    (070,000)  *++={};
    (100,040) *++={} **\frm{.}
\POS(005,015) *{\bullet};
    (025,015) *{\bullet};    
    (005,010) *{0} = "aA";
    (025,010) *{1} = "bA";
    (075,015) *{\bullet} = "a0B";
    (095,015) *{\bullet} = "b0B";    
    (075,010) *{0} = "aB";
    (095,010) *{1} = "bB";
    (085,030) *{\bullet} = "c0B";
    (085,035) *{\top} = "cB";
\POS(050,025) *{\stackrel{h}{\into}}
\POS "a0B" \ar@{-} "c0B";
\POS "b0B" \ar@{-} "c0B";
\endxy
$$
we obtain the category of join-semilattices and their
homomorphisms as $\LInj{\{ h\}}$. 
\item
{\em Conditional semilattices\/}.
For the embedding
$$
\let\objectstyle=\scriptstyle
\xy <1 pt,0 pt>:
    (000,000)  *++={};
    (030,040) *++={} **\frm{.};
    (070,000)  *++={};
    (100,040) *++={} **\frm{.}
\POS(005,015) *{\bullet} = "a0A";
    (025,015) *{\bullet} = "b0A";    
    (005,010) *{0} = "aA";
    (025,010) *{1} = "bA";
    (015,030) *{\bullet} = "c0A";
    (015,035) *{\top} ="cA";
    (075,015) *{\bullet} = "a0B";
    (095,015) *{\bullet} = "b0B";    
    (075,010) *{0} = "aB";
    (095,010) *{1} = "bB";
    (085,030) *{\bullet} = "c0B";
    (085,035) *{\top} = "cB";
    (085,020) *{\bullet} = "t0B";
    (085,015) *{} = "tB";
\POS(050,025) *{\stackrel{h}{\into}};
\POS "a0A" \ar@{-} "c0A";
\POS "b0A" \ar@{-} "c0A";
\POS "a0B" \ar@{-} "c0B";
\POS "b0B" \ar@{-} "c0B";
\POS "a0B" \ar@{-} "t0B";
\POS "b0B" \ar@{-} "t0B";
\POS "t0B" \ar@{-} "c0B";
\endxy
$$
we obtain the category of conditional join-semilattices
(where every pair with an upper bound has a join)
and maps that preserve nonempty finite joins as $\LInj{\{ h\}}$.
\item
{\em The category $\Pos_d$ of discrete posets\/}.
Form $\LInj{\{ h\}}$ for the morphism
$$
\let\objectstyle=\scriptstyle
\xy <1 pt,0 pt>:
    (000,000)  *++={};
    (030,040) *++={} **\frm{.};
    (070,000)  *++={};
    (100,040) *++={} **\frm{.}
\POS(015,015) *{\bullet} = "a0A";
    (015,030) *{\bullet} = "b0A";    
    (085,022) *{\bullet} = "a0B";
\POS(050,025) *{\stackrel{h}{\to}};
\POS "a0A" \ar@{-} "b0A";
\endxy
$$ 
\item
{\em The category $\Pos_1$ of posets of cardinality $\leq 1$\/}.
Form $\LInj{\{ h\}}$ for the mapping $h:1+1\to 1$.
\end{enumerate}
\end{examples}

Except for the trivial cases $\Pos_d$ and $\Pos_1$ all of the examples
in~\ref{exs:Pos} worked with $\H$ consisting of strong
monomorphisms. This is not coincidential: 

\begin{lemma}
Let $\H$ be a class of morphisms of $\Pos$ such that $\LInj{\H}$
is neither $\Pos_d$ nor $\Pos_1$. Then all members of $\H$ are
strong monomorphisms.  
\end{lemma}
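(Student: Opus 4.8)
The plan is to show that if some $h \in \H$ fails to be a strong monomorphism in $\Pos$, then $\LInj{\H}$ must be one of the two trivial categories $\Pos_d$ or $\Pos_1$. In $\Pos$ the strong monomorphisms are exactly the order-embeddings, so a morphism $h \colon A \to A'$ can fail to be one in two ways: either it is not injective, or it is injective but not order-reflecting (i.e. there are $a_1, a_2 \in A$ with $h(a_1) \le h(a_2)$ but $a_1 \not\le a_2$). I would treat these two failures separately and in each case extract strong constraints on any object $X$ that is left Kan-injective w.r.t.\ $h$.

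\textbf{Case 1: $h$ is not order-reflecting.} Suppose $h(a_1) \le h(a_2)$ but $a_1 \not\le a_2$ in $A$. Let $X$ be Kan-injective w.r.t.\ $h$, and take any two elements $x_1, x_2 \in X$. Define $f \colon A \to X$ by $f(a_1) = x_1$, $f(a_2) = x_2$, and send every other element appropriately — more carefully, I would first reduce to the two-element situation. The cleanest route is: pick $f \colon A \to X$ monotone with $f(a_1) = x_1$. Since $X$ is Kan-injective, $f = (\Lan{h}{f}) \cdot h$, so $x_1 = f(a_1) = (\Lan{h}{f})(h(a_1)) \le (\Lan{h}{f})(h(a_2)) = f(a_2)$ by monotonicity of $\Lan{h}{f}$. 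Thus \emph{whatever} value $f$ can legally take at $a_2$ must dominate $x_1$. Choosing $f$ to realise an arbitrary comparable pair forces: for all $x_1 \le x_2$ in $X$ we in fact have $x_1 = x_2$? No — rather, one shows $X$ is an antichain (discrete): if $x_1 < x_2$ strictly existed, build $f$ with $f(a_1) = x_2$, $f(a_2) = x_1$ (possible because $a_1 \not\le a_2$ means no monotonicity constraint forces $f(a_1) \le f(a_2)$; one needs to check a monotone such $f$ on all of $A$ exists, extending by sending the rest of $A$ to $x_2$ or using the structure of $A$), contradicting $x_2 \le x_1$. So $X$ is discrete, and hence $\LInj{\H} \subseteq \Pos_d$; combined with the (routine) reverse inclusion for this particular $h$ this gives $\LInj{\H} = \Pos_d$ or something even smaller, namely $\Pos_1$.

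\textbf{Case 2: $h$ is not injective}, say $h(a_1) = h(a_2)$ with $a_1 \ne a_2$. Then for Kan-injective $X$ and monotone $f \colon A \to X$ we get $f(a_1) = (\Lan{h}{f})(h(a_1)) = (\Lan{h}{f})(h(a_2)) = f(a_2)$, so \emph{every} monotone map $A \to X$ identifies $a_1$ and $a_2$. If $a_1, a_2$ are comparable, say $a_1 < a_2$, this already forces — by the Case 1 argument with the roles adjusted — that $X$ is an antichain (any strict pair in $X$ is the image of $a_1 < a_2$ under a suitable monotone $f$, yet must collapse). If $a_1, a_2$ are incomparable, one argues that $X$ cannot contain two distinct comparable elements \emph{and} cannot contain two distinct incomparable elements either (the latter because a monotone $f$ sending $a_1, a_2$ to an incomparable pair exists, but must collapse them), so $|X| \le 1$, i.e.\ $\LInj{\H} \subseteq \Pos_1$.

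\textbf{The main obstacle} I anticipate is the bookkeeping needed to guarantee, in each case, that the partially-specified assignment on $\{a_1, a_2\}$ actually extends to a \emph{monotone} map $f \colon A \to X$ defined on all of $A$ — the Kan-injectivity hypothesis only bites once we have a genuine morphism $f$. One clean way around this is to not extend at all but instead precompose: replace $A$ by the subposet $\{a_1, a_2\} \hookrightarrow A$ and use that Kan-injectivity w.r.t.\ $h$ interacts well with this inclusion, or more simply observe that the constant-completion (send everything $\ge a_1$ to the larger target value, everything else to the smaller, when this is order-consistent) works whenever $X$ has the relevant two-element configuration. Assembling the cases, the contrapositive is complete: if $\LInj{\H}$ is neither $\Pos_d$ nor $\Pos_1$, then no $h \in \H$ can fail injectivity or order-reflection, so every member of $\H$ is an order-embedding, i.e.\ a strong monomorphism.
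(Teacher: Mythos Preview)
Your contrapositive strategy matches the paper's, and the extension you sketch at the end (send $a \ge p$ to the larger value, everything else to the smaller) is exactly the paper's construction. However, your case split is unnecessary: ``$h$ not a strong monomorphism in $\Pos$'' is \emph{equivalent} to the existence of $p, q \in A$ with $h(p) \le h(q)$ yet $p \not\le q$, since non-injectivity is the special case $h(p) = h(q)$ (if $p \ne q$ then at least one of $p \not\le q$, $q \not\le p$ holds). The paper runs the entire argument from this single hypothesis: given $x \le x'$ in a Kan-injective $X$, the map $f(a) = x'$ if $a \ge p$ and $f(a) = x$ otherwise is monotone with $f(p) = x'$ and $f(q) = x$; then $(\Lan{h}{f}) \cdot h = f$ together with $h(p) \le h(q)$ force $x' \le x$, so $X$ is discrete. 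Your three cases all collapse into this one computation.

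Your Case~2 (incomparable) contains a false step: the conclusion $|X| \le 1$ does not follow, because the monotone $f$ sending $a_1, a_2$ to a prescribed incomparable pair in $X$ need not exist. Take $A = \{a_1, a_2, b\}$ with $b > a_1$, $b > a_2$ and $a_1, a_2$ incomparable, and let $h$ send both $a_i$ to a point $c$ and $b$ to a point $b' > c$. Any monotone map from $A$ into a discrete poset is constant (since $f(a_i) \le f(b)$ forces $f(a_i) = f(b)$), so every discrete poset is Kan-injective w.r.t.\ this $h$ and the two-element antichain refutes $|X| \le 1$. The overclaim is harmless for the lemma --- your first sub-step already gave $X$ discrete, which is all that is needed --- but as written the argument for $\LInj{\H} \subseteq \Pos_1$ in that subcase is incorrect.
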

\begin{proof}
Assume the contrary, i.e., suppose there exists $h:A\to A'$ in $\H$
such that for some $p$, $q$ in $A$ we have $h(p)\leq h(q)$ although
$p\nleq q$. Then we prove that every poset $X$ left Kan-injective
w.r.t. $h$ is discrete. It then follows easily that $\LInj{\H}$
is either $\Pos_d$ or $\Pos_1$.

Given elements $x\leq x'$ in $X$, we prove that $x=x'$.
Define $f:A\to X$ by
$$
f(a)=
\left\{
\begin{array}{rl}
x',&\mbox{ if $a\geq p$}\\
x, &\mbox{else}  
\end{array}
\right.
$$  
which is clearly monotone. Then $p\nleq q$ implies $f(q)=x$.
Consequently, $\Lan{h}{f}$ sends $h(p)$ to $x'$ and $h(q)$ to $x$.
Since $h(p)\leq h(q)$, we conclude $x'\leq x$, thus, $x=x'$.
\end{proof}

\begin{example}
\label{ex:scott}
The category $\Top_0$ of $T_0$ topological spaces and continuous maps
is order-enriched as follows. Recall the {\em specialisation order\/} 
$\sqsubseteq$ that Dana Scott~\cite{S72} used on every $T_0$-space:  
\begin{itemize}
\item[]
$x\sqsubseteq y$ iff every neighbourhood of $x$ contains $y$. 
\end{itemize}
We consider $\Top_0$ to be order-enriched by the opposite of the
pointwise specialisation order: for continuous functions $f,g:X\to Y$
we put 
\begin{itemize}
\item[]
$f\leq g$ iff $g(x)\sqsubseteq f(x)$ for all $x$ in $X$. 
\end{itemize}
\begin{enumerate}
\item 
\label{item:scott_1}
{\em Continuous lattices.\/}
For the collection $\H$ of all subspace embeddings in $\Top_0$
we have 
$$
\LInj{\H}
=
\mbox{ continuous lattices and meet-preserving continuous maps.}
$$
This was proved for objects by Escardo~\cite{EF} and for morphisms 
by Carvalho and Sousa~\cite{CS}, we present a proof for the convenience of the reader. 

Indeed, Scott proved that a $T_0$-space $X$ is injective iff
its specialisation order is a continuous lattice, i.e., a complete lattice
in which every element $y$ satisfies
\begin{equation}
\label{eq:scott_1}
y 
=
\bigsqcup_{U\in {\mathit{nbh}}(y)} 
\left( \bigsqcap U\right).
\end{equation}
Moreover, he gave, for every subspace embedding $h:A\to A'$
and every continuous map $f:A\to X$, a concrete formula 
for a continuous extension $f':A'\to X$:
\begin{equation}
\label{eq:scott_2}
f'(a')
=
\bigsqcup_{U\in {\mathit{nbh}}(a')}  
\left( \bigsqcap f(h^{-1}(U))\right)
\mbox{ for all $a'\in A'$}.
\end{equation}
This is actually the desired left Kan extension
$f'=\Lan{h}{f}$, as proved by Escard\'{o}~\cite{E1}.
His proof uses the filter monad $\mathcal{F}$
on $\Top_0$ whose Eilenberg-Moore algebras are,
as proved by Alan Day~\cite{day} and Oswald Wyler~\cite{wyler},
precisely the continuous lattices: for every 
continuous lattice $X$ the algebra $\alpha:{\mathcal{F}}X\to X$ 
is defined by
\begin{equation}
\label{eq:scott_3}
\alpha(F)
=
\bigsqcup_{U\in F} 
\left( \bigsqcap U\right)
\mbox{ for all filters $F$}.  
\end{equation}
Every continuous map $p:X\to Y$ between continuous lattices preserving
meets is Kan-injective. This follows from the formula~\eqref{eq:scott_2}
for $\Lan{h}{f}$: given $f:A\to X$ we have
$$
\begin{array}{rcll}
p\cdot (\Lan{h}{f})(a')
&=&
p
\left(
\displaystyle\bigsqcup_{U\in {\mathit{nbh}}(a')}  
\left( \bigsqcap f(h^{-1}(U))\right)
\right)  
&
\mbox{by~\eqref{eq:scott_2}}
\\
&=&
\displaystyle\bigsqcup_{U\in {\mathit{nbh}}(a')}  
p\left( \bigsqcap f(h^{-1}(U))\right)
&
\mbox{since $p$ is continuous}
\\
&=&
\displaystyle\bigsqcup_{U\in {\mathit{nbh}}(a')}  
\left( \bigsqcap pf(h^{-1}(U))\right)
&
\mbox{since $p$ preserves meets}
\\
&=&
\Lan{h}{(pf)}(a')
&
\mbox{by~\eqref{eq:scott_2}}
\end{array}
$$
Conversely, if a continuous map $p:X\to Y$ is Kan-injective,
then it preserves meets. Indeed, following Day, $p$
is a homomorphism of the corresponding monad algebras. Given
$M\subseteq X$, let $F_M$ be the filter of all subsets 
containing $M$, then~\eqref{eq:scott_3} yields 
$\alpha(F_M)=\bigsqcap M$ --- hence, the fact that $p$ is 
a homomorphism implies that $p$ preserves meets.
\item
{\em Continuous Scott Domains.\/}
For the collection $\H$ of all dense subspace embeddings
we have
$$
\LInj{\H}
=
\mbox{ continuous Scott domains and continuous functions preserving
nonempty meets.}
$$
Recall that a continuous Scott domain is a poset with bounded joins
(or, equivalently, nonempty meets) satisfying~\eqref{eq:scott_1}.
Escard\'{o} proved that the $T_0$ spaces Kan-injective w.r.t.
dense embeddings are precisely those whose order is a continuous
Scott domain. His proof uses the monad ${\mathcal{F}}^+$ of proper
filters on $\Top_0$. The conclusion that Kan-injective morphisms
are precisely those preserving nonempty meets is analogous 
to~\eqref{item:scott_1}.
\end{enumerate}
\end{example}

\begin{remark}
The order enrichment of $\Top_0$ above is frequently used in literature.
However, some authors prefer the dual enrichment (by the pointwise
specialisation order). We mention in Example~\ref{ex:E-right} 
below that this yields the same examples as above but for
the right Kan-injectivity. 
\end{remark}

\begin{example}
\label{ex:orthogonal}
Given an ordinary category, we can consider it order-enriched
by the trivial order. An object $X$ is then Kan-injective
w.r.t. $\H$ iff it is {\em orthogonal\/}, i.e., given $h:A\to A'$  
it fulfills: for every $f:A\to X$ there is a unique $f':A'\to X$
such that the triangle
$$
\xymatrixcolsep{1pc}
\xymatrix{
A
\ar[0,2]^-{h}
\ar[1,1]_{f}
&
&
A'
\ar[1,-1]^{f'}
\\
&
X
&
}
$$
commutes.

And every morphism between orthogonal objects is Kan-injective.
Thus, the Kan-injectivity subcategory is precisely
$$
\H^\perp
=
\LInj{\H}
$$
the full subcategory of all orthogonal objects.
\end{example}

\begin{remark}
\label{rem:idempotent-monads}
\mbox{}\hfill
\begin{enumerate}
\item 
A special case is given by a monad $\TT=(T,\eta,\mu)$
on the (ordinary) category which is {\em idempotent\/},
i.e., fulfills 
$$
T\eta =\eta T
$$
Consequently, every object $X$ carries at most one structure
on an Eilenberg-Moore algebra $x:TX\to X$, since $x=\eta_X^{-1}$.
Thus, the category $\X^\TT$ can be considered as a full subcategory
of $\X$. For the class $\H=\{\eta_X\mid \mbox{$X$ in $\X$}\}$
of all units of $\TT$ we then have
$$
\X^\TT
=
\H^\perp
$$
\item
Conversely, whenever the full subcategory $\H^\perp$ is
reflective, i.e., its embedding into $\X$ has a left adjoint,
then the corresponding monad $\TT$ on $\X$ is idempotent
and $\X^\TT\cong\H^\perp$. 
\item
The concepts of (i)~full reflective subcategory of $\X$,
(ii)~idempotent monad on $\X$ and (iii)~orthogonal subcategory
$\H^\perp$ coincide --- modulo the {\em orthogonal subcategory problem\/}.
This is the problem whether given a class $\H$ of morphisms
the subcategory $\H^\perp$ is reflective. Some positive solutions
can be found in~\cite{freyd+kelly} and~\cite{AHS2}, for a negative solution
in $\X = \Top$ see~\cite{ar88}.  
\end{enumerate}
\end{remark}

The situation with order-enriched categories is completely
analogous, as we prove below. The following can be found 
in~\cite{E1} and~\cite{CS}.

\begin{example}
\label{ex:KZ=lan-injective}
Let $\TT=(T,\eta,\mu)$ be a {\em Kock-Z\"{o}berlein monad\/}
on an order-enriched category $\X$, i.e., one satisfying
$$
T\eta\leq \eta T.
$$  
Kock-Z\"{o}berlein monads over order-enriched categories are a particular
case of the monads on 2-categories, independently introduced 
by Anders Kock~\cite{K} and Volker Z\"{o}berlein~\cite{zoeberlein}.

Every object $X$ carries at most one structure of an Eilenberg-Moore
algebra $\alpha:TX\to X$, since $\alpha$ is left adjoint to $\eta_X$.
Thus, $\X^\TT$ can be considered as a (not necessarily full) subcategory 
of $\X$. Then the category of $\TT$-algebras consists
precisely of all objects and morphisms Kan-injective
to all units:
$$
\X^\TT
=
\LInj{\H}
\mbox{ for $\H=\{\eta_X\mid \mbox{$X$ in $\X$}\}$}
$$
see Proposition~\ref{prop:4.5} below. Conversely, whenever the subcategory $\LInj{\H}$
is reflective, i.e., its (possibly non-full) embedding into
$\X$ has a left adjoint, then it is monadic and the corresponding
monad $\TT$ satisfies the Kock-Z\"{o}berlein
property, see Corollary~\ref{cor:reflective=>KZ-monadic} below.
\end{example}

\section{Inserters and coinserters}
\label{sec:inserters+coinserters}

Since inserters and coinserters play a central role in our paper,
we recall the facts about them we need (in our special case of
order-enriched categories) in this section.
Throughout this section we work in an order-enriched category.

\begin{definition}
\label{def:inserter}
\mbox{}\hfill
\begin{enumerate}
\item
We call a morphism $i:I\to X$ an {\em order-monomorphism\/}
provided that for all $f,g:I'\to I$ we have: $i\cdot f\leq i\cdot g$
implies $f\leq g$.
\item
An {\em inserter\/} of a parallel pair $u,v:X\to Y$
in an order-enriched category is a morphism $i:I\to X$
universal w.r.t. $u\cdot i\leq v\cdot i$.
$$
\xymatrix{
I
\ar[0,1]^-{i}
&
X
\ar@<.5ex>[0,1]^-{u}
\ar@<-.5ex>[0,1]_-{v}
&
Y
\\
J
\ar[-1,0]^{\ol{j}}
\ar[-1,1]_{j}
&
}
$$  
Universality means the following two conditions:
\begin{enumerate}
\item 
Given $j$ with $u\cdot j\leq v\cdot j$, there exists a unique
$\ol{j}$ with $j=i\cdot\ol{j}$.
\item
$i$ is an order-monomorphism.
\end{enumerate}  
\end{enumerate}
\end{definition}

\begin{example}
In $\Top_0$ the inserter of $u,v:X\to Y$ is the embedding
$I\into X$ of the subspace of $X$ on all elements $x\in X$
with $u(x)\leq v(x)$. In general, every subspace embedding is
an order-monomorphism.

In $\Pos$, analogously, the inserter of $u,v:X\to Y$ 
is the embedding $I\into X$ of the subposet of $X$
on all elements $x\in X$ with $u(x)\leq v(x)$.
In general, every subposet embedding is an 
order-monomorphism --- and vice versa (up to isomorphism).  
\end{example}

\begin{lemma}
For a morphism $i$ in $\Pos$ the following conditions
are equivalent:
\begin{enumerate}
\item 
$i$ is an order-monomorphism.
\item
$i$ is a strong monomorphism.
\item
$i$ is a subposet embedding (up to isomorphism).
\item
$i$ is an inserter of some pair.
\end{enumerate}
\end{lemma}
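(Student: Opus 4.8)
The strategy is to prove the four-way equivalence by establishing a cycle of implications, namely $(1)\Rightarrow(3)\Rightarrow(4)\Rightarrow(1)$ together with $(2)\Leftrightarrow(3)$, which is easiest to factor through the observation that in $\Pos$ the strong monomorphisms are exactly the subposet embeddings (a standard fact about $\Pos$ that I would simply cite or note, since strong monos in $\Pos$ are the embeddings of subposets with the induced order, equivalently the regular monos). So the real content is $(1)\Leftrightarrow(3)$ and $(3)\Leftrightarrow(4)$, and I would organise the argument around those.

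For $(3)\Rightarrow(4)$: given a subposet embedding $i:I\into X$, I would exhibit $I$ explicitly as an inserter. The natural candidate is to push out nothing and instead compare $X$ with a poset built to detect membership in $I$; concretely, let $Y$ be the poset on the underlying set of $X$ together with one extra point $\top$ above everything, and define $u,v:X\to Y$ so that $u(x)=v(x)$ precisely when $x\in I$ and $u(x)<v(x)$ otherwise — for instance $v$ the inclusion of $X$ into $Y$ and $u(x)=x$ for $x\in I$, $u(x)=\top$ for $x\notin I$ (checking monotonicity, which forces $I$ to be a down-set or up-set; if $I$ is neither, one instead uses the coproduct $X+X$ or a suitable relation, comparing the two copies). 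Since this last point is exactly where a naive construction can break, I expect the main obstacle to be producing a pair $u,v$ whose inserter is $I$ for an \emph{arbitrary} subposet $I\subseteq X$, not just a convex or monotone subset. The clean fix is the relation/congruence trick: take $Y$ to be the poset of order-ideals or take a large enough power of a three-element poset $\{0<\top, 1<\top\}$ indexed by the pairs $(x,x')$ with $x\not\le x'$ in $I$, defining $u,v$ coordinatewise to separate exactly those pairs. I would use that formulation.

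For $(4)\Rightarrow(1)$: this is immediate from Definition~\ref{def:inserter}(2)(b), which builds the order-monomorphism property into the definition of an inserter — so there is nothing to prove beyond citing it. For $(1)\Rightarrow(3)$: suppose $i:I\to X$ is an order-monomorphism. First, taking $f,g$ to be global elements (maps $1\to I$) shows $i$ is injective on points and reflects the order, so it is an order-reflecting injection; I then need that $i$ is also order-\emph{preserving}-and-reflecting in the strong sense, i.e. that the order on $I$ is exactly the restriction of the order on $X$. Monotonicity gives $p\le q \Rightarrow i(p)\le i(q)$ for free (it's a morphism), and order-monomorphy with $I'=1$ gives the converse, so $I$ is (isomorphic to) the subposet of $X$ on the image of $i$. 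That closes the cycle.

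Finally I would note $(2)\Leftrightarrow(3)$ separately — strong monomorphism in $\Pos$ $\Leftrightarrow$ subposet embedding up to isomorphism — which is a well-known description of strong monos in $\Pos$, obtained from the (surjective, embedding) orthogonal factorisation system on $\Pos$; a one-line citation or a short direct diagonal-fill-in argument suffices. Assembling: $(2)\Leftrightarrow(3)$, and $(1)\Rightarrow(3)\Rightarrow(4)\Rightarrow(1)$, giving all four equivalent. The only genuinely delicate step, as flagged, is the explicit realisation of an arbitrary subposet embedding as an inserter in $(3)\Rightarrow(4)$.
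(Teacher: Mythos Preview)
Your cycle $(2)\Leftrightarrow(3)$ together with $(1)\Rightarrow(3)\Rightarrow(4)\Rightarrow(1)$ is exactly the paper's route, and your treatment of the easy implications is correct.

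The one genuine gap is in $(3)\Rightarrow(4)$, precisely where you anticipate it. Your fallback construction is misstated: you propose a power of $\{0<\top,\,1<\top\}$ indexed by ``pairs $(x,x')$ with $x\not\le x'$ in $I$'', but an inserter selects the subset $\{x\in X: u(x)\le v(x)\}$, so what must be separated are the \emph{single elements} $x\in X\setminus I$, not pairs --- the pair language belongs to the equaliser/regular-mono story, not here. A product construction does work once indexed correctly by $X\setminus I$: for each $x\notin I$ put $u_x(z)=\top$ if $z\ge x$ and $u_x(z)=0$ otherwise, and $v_x(z)=0$ if $z\le x$ and $v_x(z)=\top$ otherwise; both are monotone, and $u_x(z)\le v_x(z)$ iff $z\ne x$, so the product of these maps over all $x\notin I$ yields $u,v$ whose inserter is exactly $I$.

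The paper's construction for this step is more direct and worth knowing: given the subposet embedding $i:I\hookrightarrow X$, form $Z$ by splitting every element of $X\setminus I$ into two incomparable copies (equivalently, take the pushout $X+_I X$ in $\Pos$); the two evident embeddings $X\rightrightarrows Z$ then agree exactly on $I$ and are incomparable elsewhere, so $i$ is their inserter. This avoids products entirely and makes the verification immediate.
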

\begin{proof}
It is easy to see that~(2) and~(3) are both equivalent to the
validity of the implication ``$i(x)\leq i(y)$ implies $x\leq y$''.
Therefore~(1) implies~(3). To prove~(3) implies~(4),
given a subposet embedding $i:X\into Y$, let $Z$ be the 
poset obtained from $Y$ by splitting every element outside 
of $i[X]$ to two incomparable elements. The two obvious 
embeddings of $Y$ into $Z$ have $i$ as their inserter.
Finally, (4) implies~(1) by the definition.  
\end{proof}

\begin{definition}
\label{def:coinserter}
\mbox{}\hfill
\begin{enumerate}
\item 
An {\em order-epimorphism\/} is a morphism $e:X\to Y$
such that for all $f,g:Y\to Z$ we have: $f\cdot e\leq g\cdot e$
implies $f\leq g$.
\item
A {\em coinserter\/} of a parallel pair $u,v:X\to Y$
is a morphism $c:Y\to C$ couniversal w.r.t. $c\cdot u\leq c\cdot v$.
That is, the following two conditions hold:
\begin{enumerate}
\item 
Given $d:Y\to Z$ with $d\cdot u\leq d\cdot v$ there exists a unique
$\ol{d}:C\to Z$ with $d=\ol{d}\cdot c$.
\item
$c$ is an order-epimorphism. 
\end{enumerate}
\end{enumerate}
\end{definition}

\begin{examples}
\label{exs:coinserters}
\mbox{}\hfill
\begin{enumerate}
\item 
In $\Pos$ every surjection (= epimorphism)
is an order-epimorphism, see Lemma~\ref{lem:epis-in-Pos} below. 
\item
In $\Top_0$ also every epimorphism is an order-epimorphism.
We can describe coinserters by using those in $\Pos$ and 
applying the forgetful functor 
$$
U:\Top_0\to\Pos
$$
of Example~\ref{ex:scott}.

This functor has the following universal property: given a monotone
function $c:UY\to (Z,\leq)$ where $Y$ is a $T_0$ space, there exists
a {\em semifinal solution\/} in the sense of~25.7~\cite{AHS}, which means
a pair consisting of $\ol{c}:Y\to\ol{Z}$ in $\Top_0$ and
$c_0:(Z,\leq)\to U\ol{Z}$ in $\Pos$ universal w.r.t.
$$
\xymatrixcolsep{1pc}
\xymatrix{
UY
\ar[0,2]^-{U\ol{c}}
\ar[1,1]_{c}
&
&
U\ol{Z}
\\
&
(Z,\leq)
\ar[-1,1]_{c_0}
&
}
$$
Thus given another pair $\wt{c}:Y\to\wt{Z}$ and $\wt{c_0}:(Z,\leq)\to U\wt{Z}$
with $U\wt{c}=\wt{c_0}\cdot c$ there exists a unique $p:\ol{Z}\to\wt{Z}$
in $\Top_0$ making the diagrams
$$
\vcenter{
\xymatrixcolsep{1pc}
\xymatrix{
Y
\ar[0,2]^-{\wt{c}}
\ar[1,1]_{\ol{c}}
&
&
\wt{Z}
\\
&
\ol{Z}
\ar[-1,1]_{p}
&
}
}
\quad
\mbox{and}
\quad
\vcenter{
\xymatrixcolsep{1pc}
\xymatrix{
(Z,\leq)
\ar[0,2]^-{\wt{c_0}}
\ar[1,1]_{c_0}
&
&
U\wt{Z}
\\
&
U\ol{Z}
\ar[-1,1]_{Up}
&
}
}
$$
commutative.

Indeed, to construct $\ol{c}$, let $\tau$ be the topology
on $Z$ of all lowersets whose inverse image under $c$ is open
in $Y$. Let $r:(Z,\tau)\to \ol{Z}$ be a $T_0$-reflection,
then put $\ol{c}=r\cdot c$. Consequently, we see that each such
$\ol{c}$ is an order-epimorphism in $\Pos$.

The coinserter of $u,v:X\to Y$ in $\Top_0$ is obtained
by first forming a coinserter $c:UY\to (Z,\leq)$ of 
$Uu$, $Uv$ in $\Pos$ and then taking the semifinal solution
$\ol{c}:Y\to\ol{Z}$.
\end{enumerate}
\end{examples}

\begin{lemma}
\label{lem:epis-in-Pos}
For a morphism $e$ in $\Pos$ the following conditions 
are equivalent:  
\begin{enumerate}
\item 
$e$ is an order-epimorphism.
\item
$e$ is an epimorphism.
\item
$e$ is surjective.
\item
$e$ is a coinserter of some pair.
\end{enumerate}
\end{lemma}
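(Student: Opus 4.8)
The plan is to prove the cycle of implications $(1)\Rightarrow(2)\Rightarrow(3)\Rightarrow(4)\Rightarrow(1)$, mirroring the structure of the proof already given for order-monomorphisms, inserters and subposet embeddings. The implication $(4)\Rightarrow(1)$ is immediate from the definition of coinserter, since a coinserter is by fiat couniversal and in particular an order-epimorphism. The implication $(1)\Rightarrow(2)$ is also essentially formal: if $e$ is an order-epimorphism and $f\cdot e=g\cdot e$, then a fortiori $f\cdot e\leq g\cdot e$ and $g\cdot e\leq f\cdot e$, whence $f\leq g$ and $g\leq f$, so $f=g$; thus $e$ is epic. The implication $(2)\Rightarrow(3)$ is the standard fact that epimorphisms in $\Pos$ are exactly the surjections: a monotone map $e:X\to Y$ that is not surjective can be distinguished by two monotone maps into a two-element poset (or, more carefully, by the two evident maps $Y\to Y+_{e[X]}Y$ amalgamating $Y$ with itself along the image $e[X]$, equipped with a suitable order), so failing surjectivity forces failing the epimorphism property.

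The real content is $(3)\Rightarrow(4)$: realising an arbitrary surjection as a coinserter. Given a surjection $e:X\to Y$ in $\Pos$, I would build a parallel pair $u,v:K\to X$ whose coinserter is $e$. The natural choice for $K$ is the ``kernel'' poset of $e$ together with enough data to recover the order on $Y$: concretely, let $u,v$ run over all pairs $(x,x')$ with $e(x)\leq e(x')$ in $Y$, i.e. take $K=\{(x,x')\in X\times X \mid e(x)\leq e(x')\}$ with the product order inherited from $X\times X$, and let $u,v:K\to X$ be the two projections. Then $e\cdot u\leq e\cdot v$ holds by construction. I must check couniversality: if $d:X\to Z$ is monotone with $d\cdot u\leq d\cdot v$, then $d(x)\leq d(x')$ whenever $e(x)\leq e(x')$; in particular $e(x)=e(x')$ forces $d(x)=d(x')$, so $d$ factors set-theoretically through the surjection $e$ as $d=\overline d\cdot e$ for a unique function $\overline d:Y\to Z$, and the inequality $e(x)\leq e(x')\Rightarrow d(x)\leq d(x')$ says precisely that $\overline d$ is monotone. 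Uniqueness of $\overline d$ is clear since $e$ is epic. Finally $e$ is an order-epimorphism — either cite $(1)\Leftrightarrow(3)$ just proved, or note directly that if $f\cdot e\leq g\cdot e$ then pointwise $f(y)\leq g(y)$ for all $y\in e[X]=Y$.

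I expect the only mild obstacle to be getting the pair $(u,v)$ in $(3)\Rightarrow(4)$ right so that coinserting recovers not just the quotient set but also the order of $Y$; using the ``graph of $\leq$ pulled back along $e$'' as the domain $K$ handles this cleanly, since it simultaneously encodes the equivalence relation (via symmetric pairs) and the order relation. One should double-check that $K$ with the product order is a genuine poset and that the projections are monotone, which is routine. Alternatively, one can shortcut by invoking the companion Lemma for inserters/order-monomorphisms in the dual spirit, but I find the explicit construction more transparent. I would also remark that this argument uses nothing special about $\Pos$ beyond having the relevant (co)limits, paralleling the earlier lemma.
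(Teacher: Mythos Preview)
Your proposal is correct and follows essentially the same route as the paper: the paper proves the same cycle, citing $(2)\Leftrightarrow(3)$ as well-known and, for $(3)\Rightarrow(4)$, constructs exactly your poset $K=\{(x,x')\mid e(x)\leq e(x')\}$ with the pointwise order and the two projections, leaving the verification of couniversality to the reader. Your write-up is somewhat more explicit in checking that the factorisation $\overline d$ is monotone and that $e$ is order-epic, but the construction and the logical skeleton are identical.
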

\begin{proof}
The equivalence of (2) and (3) is well-known, see,
e.g., Example~7.40(2)~\cite{AHS}.

It is clear that (1) implies~(2) and~(4) implies~(1).  
To prove that (3) implies~(4), choose a surjective map
$e:A\to B$ and define the poset $A_0$ as follows:
its elements are pairs $(x,x')$ such that $e(x)\leq e(x')$,
the pairs are ordered pointwise. Denote by 
$d_0,d_1:A_0\to A$ the obvious monotone 
projections. Then it follows easily that $e$ is a coinserter of the
pair $(d_0,d_1)$, using the fact that $e$ is surjective.
\end{proof}

\begin{definition}
An order-enriched category is said to have
{\em conical products\/} if it has products $\prod_{i\in I}X_i$
and the projections $\pi_i$ are collectively order-monic.
That is, given a parallel pair $f,g:Y\to\prod_{i\in I} X_i$
we have that
\begin{equation}
\label{eq:3.1new}
\pi_i\cdot f\leq\pi_i\cdot g
\mbox{ for all $i\in I$ implies }
f\leq g.  
\end{equation}
\end{definition}

\begin{example}
In $\Top_0$ and $\Pos$ products are clearly conical.  
\end{example}

\begin{remark}
Throughout Section~\ref{sec:inserter-ideals} 
we work with order-enriched
categories having inserters and conical products.
This can be expressed more compactly by saying that
weighted limits exist. We recall this fact (that can
be essentially found in Max Kelly's book~\cite{kelly:book})
for convenience of the reader. However, we are not going
to apply any weighted limits except inserters and conical
limits in our paper.  
\end{remark}

Given order-enriched categories $\X$ and $\D$,
where $\D$ is small, we denote by
$$
\X^\D
$$
the order-enriched category of all locally monotone
functors from $\D$ to $\X$ and all natural transformations
between them (the order on natural transformations is
objectwise: given $\alpha,\beta:F\to G$ then $\alpha\leq\beta$
means $\alpha_d\leq\beta_d$ for every $d$ in $\D$).

\begin{definition}
Let $\X$ and $\D$ be order-enriched categories, $\D$ small.
Given a locally monotone functor $D:\D\to\X$, its
{\em limit weighted by\/} $W:\D\to\Pos$, also locally
monotone, is an object
$$
\Lim{W}{D}
$$
together with an isomorphism
$$
\X(X,\Lim{W}{D})
\cong
\Pos^\D(W,\X(X,D-))
$$  
natural in $X$ in $\X$.
\end{definition}

\begin{examples}
\label{exs:3.9new}
\mbox{}\hfill
\begin{enumerate}
\item 
{\em Conical limits\/} (which means limits whose limit cones 
fulfill~\eqref{eq:3.1new}) 
are precisely the weighted limits
with weight constantly $1$ (the terminal poset).
\item
Inserters are weighted limits with the scheme
$$
\xy <1 pt,0 pt>:
    (000,000)  *++={};
    (040,020) *++={} **\frm{.};
\POS(010,010) *{\bullet} = "l";
    (030,010) *{\bullet} = "r";
    "l"+(-07,000) *{d}; 
    "r"+(007,000) *{d'};
    (-15,010) *{\D:}    
\POS "l"+(002,005) \ar@{->}^-{v} "r"+(-02,005)
\POS "l"+(002,-05) \ar@{->}_-{u} "r"+(-02,-05)
\endxy
$$  
and the weight $W$ given by
$$
\xy <1 pt,0 pt>:
    (000,005)  *++={};
    (010,015) *++={} **\frm{.};
    (040,000)  *++={};
    (050,020) *++={} **\frm{.};
\POS(005,010) *{\bullet} = "l";
    (045,005) *{\bullet} = "rd";
    (045,015) *{\bullet} = "ru";
\POS "l"+(003,001) \ar@{->}^-{Wv} "ru"+(-03,001)
\POS "l"+(003,-01) \ar@{->}_-{Wu} "rd"+(-03,-01)
\POS "rd" \ar@{-} "ru" 
\endxy
$$  
\end{enumerate}
\end{examples}

\begin{remark}
\label{rem:products+equalisers=limits}
A category with conical products and inserters has 
conical equalisers, hence all conical limits.
Indeed, an equaliser of a pair $f,g:X\to Y$
is obtained as an inserter of the pair
$$
\xymatrix{
X
\ar@<.5ex>[0,1]^-{\langle f,g\rangle}
\ar@<-.5ex>[0,1]_-{\langle g,f\rangle}
&
X\times Y
}
$$  
Just observe that a morphism $i:I\to X$ fulfills 
$\langle f,g\rangle\cdot i\leq \langle g,f\rangle\cdot i$
iff it fulfills $f\cdot i=g\cdot i$. Moreover, we see that
equalisers are order-monomorphisms (since inserters are).
\end{remark}

\begin{lemma}
\label{lem:products+inserters=limits}
An order-enriched category has weighted limits 
iff it has conical products and inserters.  
\end{lemma}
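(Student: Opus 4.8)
The plan is to prove the two directions separately. The ``only if'' direction is immediate from the definitions: conical products are a special case of weighted limits (weight constantly $1$, by Examples~\ref{exs:3.9new}(1)), and inserters are a special case of weighted limits (the scheme and weight displayed in Examples~\ref{exs:3.9new}(2)); so a category with all weighted limits certainly has both. Nothing more is needed here.

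For the ``if'' direction, assume $\X$ has conical products and inserters. By Remark~\ref{rem:products+equalisers=limits}, $\X$ then also has conical equalisers, hence all conical limits. The strategy is the classical one (as in Kelly's book): given a weight $W:\D\to\Pos$ and a diagram $D:\D\to\X$, construct $\Lim{W}{D}$ as a suitable ``equaliser-of-products'' object and then cut it down by inserters so that the monotone — not merely the underlying — naturality conditions are captured. Concretely, first form the conical product $P=\prod_{d\in\D}\prod_{w\in Wd} Dd$ indexed by elements of the posets $Wd$; a morphism $X\to P$ is the same as a family of morphisms $\phi_{d,w}:X\to Dd$. The naturality square for each $f:d\to d'$ in $\D$ — i.e.\ $Df\cdot\phi_{d,w}=\phi_{d',(Wf)(w)}$ — is imposed by a conical equaliser of two maps into an appropriate product, exactly as in the ordinary (unenriched) construction of weighted limits. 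This yields an object $L_0$ with $\X(X,L_0)\cong[\D,\Set](|W|,\X(X,D-))$, matching all the naturality \emph{equations}.

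The extra point special to the order-enriched setting is that $W$ is valued in $\Pos$, so a $\Pos$-natural transformation $W\to\X(X,D-)$ must also respect the order on each $Wd$: whenever $w\leq w'$ in $Wd$ we need $\phi_{d,w}\leq\phi_{d,w'}$ in $\X(X,Dd)$. I would impose each such inequality by an inserter: for the pair of composites $L_0\to\prod_{d,w} Dd\to Dd$ picking out $\phi_{d,w}$ and $\phi_{d,w'}$, take the inserter forcing $\phi_{d,w}\leq\phi_{d,w'}$, and intersect over all such comparable pairs $(d,w\leq w')$ (an intersection of inserters being again obtained from inserters and conical products, or handled one pair at a time transfinitely). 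Call the result $L$. Because inserters are order-monomorphisms and conical product projections are collectively order-monic, the canonical map $\X(X,L)\to\X(X,L_0)$ is an order-embedding whose image is exactly the order-preserving families; composing with the iso for $L_0$ gives a natural isomorphism $\X(X,L)\cong\Pos^{\D}(W,\X(X,D-))$, i.e.\ $L\cong\Lim{W}{D}$.

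The main obstacle is organisational rather than conceptual: one must check that the whole construction is functorial/natural in $X$, and that after imposing the order-inserters the resulting object still satisfies the \emph{universal} property (not just a hom-set bijection at each $X$), which comes down to the fact that each inserter and each conical limit was taken with its universal property and order-monicity, so these combine correctly. A secondary subtlety is that $\D$ being small guarantees the relevant (possibly large) products and intersections of inserters are over \emph{sets} — namely $\bigsqcup_{d}|Wd|$ and the set of comparable pairs in the $Wd$'s — so all the limits invoked genuinely exist in $\X$. I would remark that this is precisely the content of Kelly's book and only sketch the verification, since the paper only ever uses inserters and conical limits downstream.
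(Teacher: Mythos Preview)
Your argument is correct, but the paper takes a shorter, more modular route. Rather than building $\Lim{W}{D}$ directly, the paper invokes Kelly's Theorem~3.73 (in~\cite{kelly:book}), which says that weighted limits exist once conical limits and \emph{cotensors} do. Since conical limits are already available (Remark~\ref{rem:products+equalisers=limits}), it then suffices to construct the cotensor $P\cotensor X$ for a poset $P$ and an object $X$. For $P$ discrete this is just the $P$-fold conical product of $X$; for general $P$, one presents $P$ as a coinserter of the pair $d_0,d_1:P_1\rightrightarrows P_0$ (with $P_0$ the discrete poset on elements of $P$ and $P_1$ the discrete poset on comparable pairs), and then $P\cotensor X$ is the inserter of $d_0\cotensor X,\,d_1\cotensor X:P_0\cotensor X\rightrightarrows P_1\cotensor X$. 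Thus the entire order-theoretic content is packaged into a \emph{single} inserter, with no need to iterate or intersect inserters over all comparable pairs as you do. Your direct construction has the virtue of being self-contained (no appeal to Kelly's theorem), but the paper's decomposition via cotensors is cleaner and avoids the organisational subtleties you flag --- in particular, your ``intersection of inserters'' step is better phrased as one inserter into a product indexed by all comparable pairs, which is exactly what the cotensor construction does.
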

\begin{proof}
The necessity follows from Examples~\ref{exs:3.9new}.  
For the sufficiency, we use Theorem~3.73 of~\cite{kelly:book}.
In fact, it suffices to prove that a particular
type of weighted limits, called {\em cotensors\/}, exists in $\X$. 
Given a poset $P$ and an object $X$, then the $P$-th cotensor of 
$X$ is an object $P\cotensor X$, together with an isomorphism
$$
\X(X',P\cotensor X)
\cong
\Pos(P,\X(X',X))
$$  
natural in $X'$.

Observe that, for a discrete poset $P$, the cotensor 
$P\cotensor X$ is just the $P$-fold conical product of $X$.
Hence the category $\X$ has cotensors with discrete
posets, since it has products.

A general poset $P$ can be described as a coinserter in $\Pos$ 
of a parallel pair
$$
\xymatrix{
P_1
\ar@<.5ex>[0,1]^-{d_1}
\ar@<-.5ex>[0,1]_-{d_0}
&
P_0
}
$$
where $P_0$ is the discrete poset on elements of $P$, $P_1$
is the discrete poset on all pairs $(x,x')$ such that $x\leq x'$
holds, and $d_0$ and $d_1$ are the obvious projections.
Then one can define $P\cotensor X$ as an inserter of 
$$
\xymatrix{
P_0\cotensor X
\ar@<.5ex>[0,1]^-{d_1\cotensor X}
\ar@<-.5ex>[0,1]_-{d_0\cotensor X}
&
P_1\cotensor X
}
$$
in $\X$.
\end{proof}

Whereas inserters and conical products are required in 
Section~\ref{sec:inserter-ideals}, we work with the 
dual concepts in Section~\ref{sec:chain}. 

\begin{definition}
An order-enriched category is said to have {\em conical coproducts\/}
if it has coproducts $\coprod_{i\in I} X_i$ and the injections
$\gamma_i$ are collectively order-epic. That is, given a parlallel pair 
$f,g:\coprod_{i\in I} X_i\to Y$, we have that $f\cdot\gamma_i\leq g\cdot\gamma_i$
for all $i\in I$ implies $f\leq g$.  
\end{definition}

\begin{example}
\label{ex:coproducts}
The categories $\Pos$ and $\Top_0$ clearly have conical coproducts.
Therefore, they have conical colimits. This is dual to 
Remark~\ref{rem:products+equalisers=limits}.  
\end{example}

Again, the dual notions can
be subsumed by the concept of a weighted colimit.

\begin{definition}
Let $\X$ and $\D$ be order-enriched categories, $\D$ small.
Given a locally monotone functor $D:\D\to\X$, its
{\em colimit weighted by\/} $W:\D^\op\to\Pos$, also locally
monotone, is an object
$$
\Colim{W}{D}
$$
together with an isomorphism
$$
\X(\Colim{W}{D},X)
\cong
\Pos^{\D^\op}(W,\X(D-,X))
$$  
natural in $X$ in $\X$.  
\end{definition}

\begin{lemma}
\label{lem:coproducts+coinserters=colimits}
An order-enriched category has weighted colimits iff it has
conical coproducts and coinserters.  
\end{lemma}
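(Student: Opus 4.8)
The plan is to prove Lemma~\ref{lem:coproducts+coinserters=colimits} by dualising the argument of Lemma~\ref{lem:products+inserters=limits}; indeed, weighted colimits in $\X$ are exactly weighted limits in $\X^\op$, and the order-enrichment dualises along with it (homsets keep the same order). So one could in principle just write ``apply Lemma~\ref{lem:products+inserters=limits} to $\X^\op$'' — but since the paper gives a direct proof of the limit version, I would give the parallel direct proof here for the reader's convenience. The necessity direction is immediate: conical coproducts are weighted colimits with weight constantly $1$, and coinserters are weighted colimits for the appropriate two-object scheme (the duals of Examples~\ref{exs:3.9new}), so a category with all weighted colimits certainly has both.

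For the sufficiency direction, I would follow the same strategy as in Lemma~\ref{lem:products+inserters=limits}, using the dual of Theorem~3.73 of~\cite{kelly:book}: it suffices to show that $\X$ has all \emph{tensors} (the colimit-analogue of cotensors) in addition to conical coproducts. Given a poset $P$ and an object $X$, the tensor $P\otimes X$ is characterised by a natural isomorphism $\X(P\otimes X, X')\cong\Pos(P,\X(X,X'))$. First I would observe that for a discrete poset $P$ the tensor $P\otimes X$ is just the $P$-fold conical coproduct of $X$, so tensors with discrete posets already exist. Then, writing an arbitrary poset $P$ as a coinserter in $\Pos$ of the parallel pair $d_0,d_1:P_1\to P_0$ (with $P_0$ the discrete poset of elements of $P$ and $P_1$ the discrete poset of comparable pairs, exactly as in the proof of Lemma~\ref{lem:products+inserters=limits}), I would define $P\otimes X$ as the coinserter in $\X$ of the parallel pair
$$
\xymatrix{
P_1\cotensor X
\ar@<.5ex>[0,1]^-{d_1\otimes X}
\ar@<-.5ex>[0,1]_-{d_0\otimes X}
&
P_0\cotensor X
}
$$
where, to avoid collision with the existing $\cotensor$ macro, I would simply write these as coproducts of copies of $X$. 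A short check, dual to the cotensor computation, confirms that this coinserter has the required universal property: a morphism $P_0\otimes X\to X'$ corresponds to a $P_0$-indexed family of maps $X\to X'$, it coequalises the two legs up to $\leq$ precisely when this family is monotone as a function $P\to\X(X,X')$, and coinserter universality then yields the bijection $\X(P\otimes X,X')\cong\Pos(P,\X(X,X'))$, naturally in $X'$.

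The only genuinely substantive point — and the one I would be most careful about — is the appeal to (the dual of) Kelly's Theorem~3.73: one must check that conical coproducts together with tensors really do generate \emph{all} weighted colimits, i.e.\ that every weight $W:\D^\op\to\Pos$ can be built from representables using these two operations. This is exactly the content of the cited theorem (tensors and conical colimits suffice to construct all weighted colimits in the $\Pos$-enriched setting), so the work is in verifying the hypotheses of that theorem rather than reproving it; the rest of the argument is a routine dualisation of the preceding lemma. I would therefore keep the proof short, essentially writing ``dual to Lemma~\ref{lem:products+inserters=limits}'', and spell out only the construction of tensors via coinserters.
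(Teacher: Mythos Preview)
Your proposal is correct and takes essentially the same approach as the paper: the paper's proof is literally the single sentence ``This is dual to Lemma~\ref{lem:products+inserters=limits}.'' Your expanded sketch of how the dualisation goes (tensors built as coinserters of discrete copowers, then the dual of Kelly's Theorem~3.73) is accurate, though note the typo in your display where you wrote $P_1\cotensor X$, $P_0\cotensor X$ with the cotensor symbol rather than the tensor.
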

\begin{proof}
This is dual to Lemma~\ref{lem:products+inserters=limits}.  
\end{proof}

\section{KZ-monadic subcategories and inserter-ideals}
\label{sec:inserter-ideals}

In this section we prove that whenever the Kan-injectivity 
subcategory $\LInj{\H}$ is reflective, then the monad $\TT$ this
generates is a Kock-Z\"{o}berlein monad and the Eilenberg-Moore
category $\X^\TT$ is precisely $\LInj{\H}$. In the subsequent
sections we prove that for small collections $\H$ in
``reasonable'' categories $\LInj{\H}$ is always reflective.
A basic concept we need is that of an inserter-ideal subcategory.

\begin{definition}
A subcategory of an order-enriched category $\X$ 
is {\em inserter-ideal\/}
provided that it contains with every morphism $u$
also inserters of the pairs $(u,v)$, where $v$
is any morphism in $\X$ parallel to $u$.  
\end{definition}

\begin{lemma}
\label{lem:inserter-ideal}
Every Kan-injectivity subcategory $\LInj{\H}$ is inserter-ideal.  
\end{lemma}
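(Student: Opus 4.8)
The plan is to show directly that $\LInj{\H}$ is closed in $\X$ under the two operations that make up an inserter: it contains the object $I$ and the morphism $i:I\to X$ whenever it contains $X$ and the parallel pair $u,v:X\to Y$. So fix $h:A\to A'$ in $\H$ and let $i:I\to X$ be an inserter of $(u,v)$ with $X$ (hence also $Y$) in $\LInj{\H}$, and $u,v,i$ all morphisms of $\LInj{\H}$. I need to verify two things: (1) $I$ is left Kan-injective w.r.t. $h$, and (2) $i:I\to X$ is left Kan-injective w.r.t. $h$.

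For (1), take any $f:A\to I$. Consider $i\cdot f:A\to X$; since $X\in\LInj{\H}$ there is a left Kan extension $g':=\Lan{h}{(i f)}:A'\to X$ with $g'\cdot h=i\cdot f$. The key step is to check that $g'$ factors through $i$, i.e.\ that $u\cdot g'\leq v\cdot g'$. We know $u\cdot i\leq v\cdot i$ (as $i$ is the inserter), so $u\cdot(i f)=u\cdot i\cdot f\leq v\cdot i\cdot f=v\cdot(i f)$; but $u,v$ are Kan-injective morphisms w.r.t.\ $h$, so they preserve the Kan extension of $i f$ along $h$, whence $u\cdot g'=\Lan{h}{(u i f)}$ and $v\cdot g'=\Lan{h}{(v i f)}$. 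From $u i f\leq v i f$ and the universal property of left Kan extensions (monotonicity of $\Lan{h}{(-)}$, which follows from~\eqref{eq:2.1}) we get $u\cdot g'\leq v\cdot g'$. Hence $g'=i\cdot \wh{g}$ for a unique $\wh{g}:A'\to I$. I then claim $\wh{g}=\Lan{h}{f}$: it satisfies $i\cdot(\wh{g}\cdot h)=g'\cdot h=i\cdot f$, so $\wh g\cdot h=f$ because $i$, being an order-monomorphism, is in particular monic; and it is the least such morphism because any $k:A'\to I$ with $f\leq k\cdot h$ gives $i f\leq i k h$, so $g'\leq i k$, so (again using that $i$ is an order-monomorphism) $\wh g\leq k$. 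This proves $I\in\LInj{\H}$ and simultaneously identifies $\Lan{h}{f}$ in $I$ with the factorisation of $\Lan{h}{(i f)}$ through $i$.

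For (2), this last identification is exactly what is needed: it says $i\cdot\Lan{h}{f}=\Lan{h}{(i f)}$, which is precisely the assertion that $i$ preserves left Kan extensions along $h$, i.e.\ diagram~\eqref{eq:2.3} commutes with $p=i$. So $i\in\LInj{\H}$, and the proof is complete.

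The main obstacle is step~(1)'s use of the hypothesis that $u$ and $v$ are \emph{morphisms} of $\LInj{\H}$, not merely that $X$ and $Y$ are objects of it — without Kan-injectivity of $u$ and $v$ one cannot commute them past the Kan extension to conclude $u\cdot g'\leq v\cdot g'$, and the argument breaks down. Everything else is a routine chase through the universal properties of inserters, order-monomorphisms, and left Kan extensions; the only subtlety worth stating carefully is the monotonicity of $\Lan{h}{(-)}$, which is immediate from~\eqref{eq:2.1}.
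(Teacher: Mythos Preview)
Your argument is almost the paper's, but you have misread Definition~\ref{def:D-rank}\textemdash sorry, the definition of \emph{inserter-ideal}. It requires only that $u$ lie in the subcategory; the parallel morphism $v$ is an \emph{arbitrary} morphism of $\X$. So you are not entitled to assume that $v$ is Kan-injective, and your step $v\cdot g'=\Lan{h}{(vif)}$ is unjustified. (Your opening sentence also lists $i$ among the assumed morphisms of $\LInj{\H}$, which would be circular, but since you then set out to prove exactly that, I take this to be a slip.)

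The fix is a one-line change, and it is precisely what the paper does: you do not need equality for the $v$ side, only the inequality $\Lan{h}{(vif)}\leq v\cdot g'$. This follows directly from the universal property of the Kan extension, since $g'\cdot h=i f$ gives $(v\cdot g')\cdot h=v i f$ and hence $\Lan{h}{(vif)}\leq v\cdot g'$. Combined with $u\cdot g'=\Lan{h}{(uif)}$ (which \emph{does} use Kan-injectivity of $u$) and the monotonicity $\Lan{h}{(uif)}\leq\Lan{h}{(vif)}$, you get $u\cdot g'\leq v\cdot g'$ as needed. So your final paragraph is wrong on this point: the argument does \emph{not} break down without Kan-injectivity of $v$; only $u$ is required, exactly as the definition demands. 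The remainder of your proof (factorisation through $i$, verification that $\wh g=\Lan{h}{f}$, and the observation that this simultaneously proves $i$ is Kan-injective) is correct and matches the paper.
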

\begin{proof}
Suppose that we have an inserter 
$i$ of $(u,v)$ in $\X$.
It is our task to prove that if $u$
is left Kan-injective w.r.t. $h:A\to A'$ in $\H$,
then so is $i$. We first verify that $I$ is
left Kan-injective. Consider an arbitrary $f:A\to I$.
In the following diagram
$$
\xymatrix{
A
\ar[0,2]^-{h}
\ar[1,0]_{f}
&
&
A'
\ar[1,0]^{\Lan{h}{(if)}}
\ar@{.>}[1,-2]_{f^*}
&
\\
I
\ar[0,2]_-{i}
&
&
X
\ar@<.5ex>[0,1]^-{v}
\ar@<-.5ex>[0,1]_-{u}
&
Y
Y
}
$$  
the morphism $\Lan{h}{(if)}:A'\to X$ exists since
$X$ is left Kan-injective. Also, $u$ is left Kan-injective
and therefore we have
$$
u\cdot \Lan{h}{(if)} 
=
\Lan{h}{(uif)}
\leq
\Lan{h}{(vif)}
\leq
v\cdot\Lan{h}{(if)} 
$$
proving that $\Lan{h}{(if)}$ factorises through $i$ 
as indicated above. 

That the morphism $f^*:A'\to I$ is $\Lan{h}{f}$
follows immediately from the two aspects of the universal 
property of an inserter. This proves that the object $I$ 
is left Kan-injective w.r.t. $h$. 

Moreover, we also have the 
equality $\Lan{h}{(if)}=i\cdot f^*=i\cdot\Lan{h}{f}$, proving
that the morpism $i:I\to X$ is left Kan-injective w.r.t.
$h$, as desired.
\end{proof}

\begin{corollary}
$\LInj{\H}$ is closed under weighted limits.  
\end{corollary}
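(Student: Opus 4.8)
The plan is to reduce the statement, via Lemma~\ref{lem:products+inserters=limits}, to closure under conical products and under inserters. Closure under inserters is immediate from Lemma~\ref{lem:inserter-ideal}: if $u,v\colon X\to Y$ are parallel morphisms of $\LInj{\H}$ and $i\colon I\to X$ is their inserter in $\X$, then $u$ is left Kan-injective w.r.t.\ every $h\in\H$, so the proof of Lemma~\ref{lem:inserter-ideal} shows that both $I$ and $i$ are left Kan-injective w.r.t.\ every $h\in\H$.

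For conical products I would argue directly, paralleling that proof. Let $(X_t)_{t\in T}$ be objects of $\LInj{\H}$ with conical product $\prod_t X_t$ and projections $\pi_t$, and fix $h\colon A\to A'$ in $\H$ and $f\colon A\to\prod_t X_t$; set $f_t=\pi_t f$. Each $\Lan{h}{f_t}$ exists with $(\Lan{h}{f_t})\cdot h=f_t$, so there is a unique $g\colon A'\to\prod_t X_t$ with $\pi_t g=\Lan{h}{f_t}$ for all $t$. Then $\pi_t\cdot g\cdot h=f_t=\pi_t f$ for every $t$, so $g\cdot h=f$ by \eqref{eq:3.1new}; and if $f\le g'h$ then $f_t\le\pi_t\cdot g'\cdot h$, hence $\Lan{h}{f_t}\le\pi_t g'$, i.e.\ $\pi_t g\le\pi_t g'$ for all $t$, so $g\le g'$ again by \eqref{eq:3.1new}. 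Thus $g=\Lan{h}{f}$ and $g\cdot h=f$, so $\prod_t X_t$ is left Kan-injective; and $\pi_t\cdot\Lan{h}{f}=\pi_t g=\Lan{h}{f_t}=\Lan{h}{(\pi_t f)}$ shows each $\pi_t$ is too. Now, by the construction in the proof of Lemma~\ref{lem:products+inserters=limits} every weighted limit is built from conical products, cotensors with posets and conical equalisers (Remark~\ref{rem:products+equalisers=limits}), the cotensors in turn from conical products and inserters; in each such inserter one of the two legs is a morphism already known to lie in $\LInj{\H}$, so Lemma~\ref{lem:inserter-ideal} applies at every stage, and closure under weighted limits follows.

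The point requiring most care is precisely this last one: since $\LInj{\H}$ is not a full subcategory of $\X$, at each stage one must check that the structural morphisms --- the cotensored morphisms, the legs of the cotensor inserters, and ultimately the limit cone legs --- preserve Kan extensions, not merely that the objects produced are left Kan-injective. I would avoid tracking the construction by proving the general statement directly, just as for products: for $D\colon\D\to\LInj{\H}$ and $W\colon\D\to\Pos$ with weighted limit $L$ and universal cone $\lambda$ --- whose legs $\lambda_d(w)\colon L\to Dd$ are collectively order-monic because the isomorphism defining $\Lim{W}{D}$ is an order-isomorphism --- and for $h\colon A\to A'$ in $\H$, $f\colon A\to L$, one verifies that $(d,w)\mapsto\Lan{h}{(\lambda_d(w)\cdot f)}$ is again a weighted cone on $D$ with apex $A'$ (monotonicity in $w$ from monotonicity of the Kan extension in the extended morphism, naturality in $d$ from the fact that each $D\alpha$ preserves Kan extensions), hence factors through a unique $g\colon A'\to L$; collective order-monicity then gives $g=\Lan{h}{f}$ and $g\cdot h=f$ exactly as above, the identity $\lambda_d(w)\cdot\Lan{h}{f}=\Lan{h}{(\lambda_d(w)\cdot f)}$ shows every leg $\lambda_d(w)$ is left Kan-injective, and a final use of collective order-monicity shows the comparison morphism out of any cone in $\LInj{\H}$ also preserves Kan extensions --- so $L$, its universal cone and the comparison morphisms all lie in $\LInj{\H}$.
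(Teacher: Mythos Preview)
Your proof is correct, and the overall plan---reduce closure under weighted limits to closure under inserters (Lemma~\ref{lem:inserter-ideal}) and conical products, then invoke the construction in Lemma~\ref{lem:products+inserters=limits}---matches the paper's. The paper is terser: it cites \cite{CS}, Proposition~2.10, for closure under conical limits rather than arguing conical products directly as you do, and then simply says ``the rest is analogous to the proof of Lemma~\ref{lem:products+inserters=limits}''.

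Where you go beyond the paper is in your second paragraph. You correctly flag that, because $\LInj{\H}$ is not full, one must check that the structural morphisms produced at each stage of the reduction are themselves Kan-injective, not merely that the objects are; the paper's one-line appeal to Lemma~\ref{lem:products+inserters=limits} leaves this implicit. Your alternative---proving closure under a general weighted limit directly by exhibiting the Kan extension componentwise and using the collective order-monicity of the universal cone---sidesteps that bookkeeping entirely and is a cleaner, self-contained argument. It also makes explicit that the comparison morphism from any cone in $\LInj{\H}$ again lies in $\LInj{\H}$, which is what ``closed under weighted limits'' should mean for a non-full subcategory and which the paper does not spell out.
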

\begin{proof}
Indeed, it is closed under inserters by Lemma~\ref{lem:inserter-ideal}
and under conical limits by~\cite{CS}, Proposition~2.10.  
The rest is analogous to the proof of
Lemma~\ref{lem:products+inserters=limits} above.
\end{proof}

\begin{definition}
\label{def:KZ-monadic}
A subcategory of an order-enriched category $\X$ is called
{\em KZ-monadic\/} if it is the Eilenberg-Moore category
$\X^\TT$ of a Kock-Z\"{o}berlein monad $\TT$ on $\X$. 
\end{definition}

\begin{example}
\label{ex:KZ-monads}
\mbox{}\hfill
\begin{enumerate}
\item
Continuous lattices, see Example~\ref{ex:scott}\eqref{item:scott_1},
are KZ-monadic for the filter monad on $\Top_0$, as proved by
Escard\'{o}~\cite{E1}.  
\item
Complete semilattices, see Example~\ref{exs:Pos}\eqref{item:complete_semilattices},
are KZ-monadic w.r.t. the lowerset monad  $\TT=(T,\eta,\mu)$ on $\Pos$.
More in detail: $TX$ is the poset of all lowersets on a poset $X$,
$\eta_X:X\to TX$ assigns the principal lowerset ${\downarrow}x$ 
to every $x\in X$, $\mu_X:TTX\to TX$ is the union.
\end{enumerate}
\end{example}

\begin{remark}
\label{rem:left-adj-retr}
Recall the concept of a projection-embedding pair
of Mike Smyth and Gordon Plotkin~\cite{smyth+plotkin}. We use the dual concept and 
call a morphism $r:C\to X$ a {\em coprojection\/} if there exists $s:X\to C$ with 
$$
r\cdot s=\id_C 
\mbox{ and } 
\id_X\leq s\cdot r.
$$ 
In the terminology of~\cite{CS} the morphism $r$ would be called reflective left adjoint.
\end{remark}

\begin{definition}
\label{def:left-adj-retr}
A subcategory  $\C$ of an order-enriched category $\X$ is said
to be {\em closed under coprojections\/} if
(a)~for every coprojection $r:C\to X$ whenever $C$ is in $\C$,
then so is $X$, and (b)~for any commutative square in $\X$
$$
\xymatrix{
C_1
\ar[r]^f
\ar[d]_{r_1}
&
C_2
\ar[d]^{r_2}
\\
X_1
\ar[r]_g
&
X_2
}
$$
whenever $f$ is in $\C$ and $r_1$, $r_2$ are coprojections, 
then also $g$ is in $\C$.
\end{definition}

\begin{proposition}[Proposition~2.13 of~\cite{CS}]
\label{prop:coprojections}
Every Kan-injectivity subcategory $\LInj{\H}$ is closed
under coprojections.  
\end{proposition}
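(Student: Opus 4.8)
The plan is to fix an arbitrary $h\colon A\to A'$ in $\H$ and verify, separately, the two clauses of Definition~\ref{def:left-adj-retr} for $\C=\LInj\H$, in each case transporting the Kan-injectivity statement near $X$ (resp.\ near the bottom map $g$) back to the assumed statement near $C$ (resp.\ near the top map $f$) along the coprojection. Throughout, write $r\colon C\to X$, $s\colon X\to C$ for a coprojection together with its section, so $r\cdot s=\id_X$ and $\id_C\leq s\cdot r$.

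\emph{Clause (a).} Assuming $C$ left Kan-injective w.r.t.\ $h$, I would show that for each $f\colon A\to X$ the morphism $r\cdot\Lan{h}{(sf)}$ is the left Kan extension $\Lan{h}{f}$ and makes~\eqref{eq:2.2} commute. Commutativity is $(r\cdot\Lan{h}{(sf)})\cdot h=r\cdot(\Lan{h}{(sf)}\cdot h)=r\cdot sf=f$, using $\Lan{h}{(sf)}\cdot h=sf$ (as $C$ is Kan-injective) and $r\cdot s=\id_X$. For the universal property~\eqref{eq:2.1}: from $f\leq g\cdot h$, compose on the left with $s$ to get $sf\leq sg\cdot h$, hence $\Lan{h}{(sf)}\leq sg$, hence $r\cdot\Lan{h}{(sf)}\leq rsg=g$. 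Ranging over $h\in\H$, this gives $X\in\LInj\H$; note that only the retraction equation $r\cdot s=\id_X$ enters here.

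\emph{Clause (b).} Given a commutative square $g\cdot r_1=r_2\cdot f$ with $f\in\LInj\H$ and with coprojections $r_i\colon C_i\to X_i$, $s_i\colon X_i\to C_i$: since $f\in\LInj\H$ its domain and codomain $C_1,C_2$ are Kan-injective objects, so clause (a) applied to $r_1$ and $r_2$ shows $X_1,X_2$ are Kan-injective objects and supplies, for $k\colon A\to X_1$, the formulas $\Lan{h}{k}=r_1\cdot\Lan{h}{(s_1k)}$ and $\Lan{h}{(gk)}=r_2\cdot\Lan{h}{(s_2gk)}$. Then $g\cdot\Lan{h}{k}=g\cdot r_1\cdot\Lan{h}{(s_1k)}=r_2\cdot f\cdot\Lan{h}{(s_1k)}=r_2\cdot\Lan{h}{(fs_1k)}$, the last step because $f$ preserves left Kan extensions along $h$; so it remains to compare $\Lan{h}{(fs_1k)}$ with $\Lan{h}{(s_2gk)}$. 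For ``$\leq$'', I would derive $fs_1\leq s_2\cdot g$ from $g=r_2\cdot f\cdot s_1$ (compose the square with $s_1$) and the lax unit $\id_{C_2}\leq s_2\cdot r_2$, and combine it with the monotonicity $\varphi\leq\psi\Rightarrow\Lan{h}{\varphi}\leq\Lan{h}{\psi}$ (immediate from $\varphi\leq\psi\leq\Lan{h}{\psi}\cdot h$ and~\eqref{eq:2.1}) to get $g\cdot\Lan{h}{k}\leq\Lan{h}{(gk)}$. For ``$\geq$'', note $(g\cdot\Lan{h}{k})\cdot h=g\cdot(\Lan{h}{k}\cdot h)=gk$ (since $\Lan{h}{k}\cdot h=k$ by clause (a)), so $g\cdot\Lan{h}{k}$ is a competitor for $\Lan{h}{(gk)}$. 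Thus $g\cdot\Lan{h}{k}=\Lan{h}{(gk)}$ for all $h\in\H$, i.e.\ $g\in\LInj\H$.

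\emph{Main obstacle.} Essentially everything is bookkeeping with the universal property~\eqref{eq:2.1} and monotonicity of composition; the one genuinely order-enriched step — and the place to be careful — is $fs_1\leq s_2\cdot g$ in clause (b), which is exactly where the lax unit $\id_C\leq s\cdot r$ of a coprojection is used (a mere retraction would not suffice) together with the induced monotonicity of $\Lan{h}{(-)}$. One should also keep in mind that all the Kan extensions invoked exist because $C$ (resp.\ $C_1,C_2$) is left Kan-injective w.r.t.\ $h$, and that $\LInj\H$ is treated as a possibly non-full subcategory, so that ``$g\in\LInj\H$'' means both that $X_1,X_2$ are Kan-injective objects and that $g$ preserves the relevant Kan extensions — both of which are shown above.
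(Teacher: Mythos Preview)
The paper does not supply its own proof of this proposition: it is simply quoted as Proposition~2.13 of~\cite{CS}. Your argument is correct and is essentially the argument one would expect (and indeed the one in~\cite{CS}): transport the Kan extension along the section $s$, use that $r\cdot s=\id$ to get commutativity of~\eqref{eq:2.2}, and use monotonicity to get the universal property; for clause~(b), combine the formulas from clause~(a) with preservation by $f$ and the inequality $fs_1\leq s_2g$ coming from $\id\leq s_2r_2$.

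One minor remark: your ``$\leq$'' half in clause~(b) is actually redundant. Once you know $(\Lan{h}{k})\cdot h=k$ (from clause~(a) for $X_1$), the inequality $\Lan{h}{(gk)}\leq g\cdot\Lan{h}{k}$ follows immediately from the universal property, as you note; but the reverse inequality $g\cdot\Lan{h}{k}\leq\Lan{h}{(gk)}$ also follows without invoking $fs_1\leq s_2g$, since $\Lan{h}{(gk)}\cdot h=gk$ (from clause~(a) for $X_2$) and hence $g\cdot\Lan{h}{k}$ and $\Lan{h}{(gk)}$ are both extensions of $gk$ along $h$, with the latter least among all $g'$ satisfying $gk\leq g'h$. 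Wait --- that only gives $\Lan{h}{(gk)}\leq g\cdot\Lan{h}{k}$ again. So your ``$\leq$'' step, using $fs_1\leq s_2g$ and monotonicity of $\Lan{h}{(-)}$, is genuinely where the coprojection hypothesis $\id\leq s_2r_2$ enters, exactly as you flag in your ``Main obstacle'' paragraph. The proof is complete as written.
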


\begin{proposition}[See~\cite{bunge+funk} and~\cite{CS}]
\label{prop:4.5}
Every KZ-monadic category is the
Kan-injectivity subcategory w.r.t. all units, i.e.,
$$
\X^\TT = \LInj{\H}
\; 
\mbox{ for }
\; 
\H = \{\eta_X:X\to TX \mid \mbox{$X$ in $\X$}\}.
$$
\end{proposition}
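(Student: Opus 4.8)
We must show that for a Kock-Zöberlein monad $\TT=(T,\eta,\mu)$ on an order-enriched category $\X$, the Eilenberg-Moore category $\X^\TT$ coincides (as a possibly non-full subcategory of $\X$, via the forgetful functor, which is injective on objects and faithful because $\alpha$ is determined as a left adjoint to $\eta_X$) with $\LInj{\H}$ for $\H=\{\eta_X\mid X\in\X\}$. The argument splits into the two inclusions, and the heart of it is the standard fact that for a KZ-monad an object $X$ admits a (necessarily unique) $\TT$-algebra structure if and only if $\eta_X$ has a left adjoint in $\X$, namely a morphism $\alpha:TX\to X$ with $\alpha\cdot\eta_X=\id_X$ and $\eta_X\cdot\alpha\leq\id_{TX}$.

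\medskip\noindent\emph{$\X^\TT\subseteq\LInj{\H}$.} Let $\alpha:TX\to X$ be an algebra and let $h=\eta_A:A\to TA$ be a unit. Given $f:A\to X$, I would set $\Lan{\eta_A}{f}:=\alpha\cdot Tf:TA\to X$. The naturality square of $\eta$ gives $(\alpha\cdot Tf)\cdot\eta_A=\alpha\cdot\eta_X\cdot f=f$, so the triangle~\eqref{eq:2.2} commutes (with equality, not merely $\leq$). For the universal property~\eqref{eq:2.1}: suppose $g:TA\to X$ satisfies $f\leq g\cdot\eta_A$. Apply the locally monotone functor $T$ and precompose: $T f\leq Tg\cdot T\eta_A\leq Tg\cdot\eta_{TA}$ (using $T\eta_A\leq\eta_{TA}$), hence $\alpha\cdot Tf\leq\alpha\cdot Tg\cdot\eta_{TA}=\alpha\cdot\eta_X\cdot g=g$, where the middle equality is naturality of $\eta$ at $g$. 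So $\alpha\cdot Tf$ is the least such morphism, i.e. it is $\Lan{\eta_A}{f}$, and $X$ is Kan-injective w.r.t. $\eta_A$. For a homomorphism $p:(X,\alpha)\to(Y,\beta)$ one has $p\cdot\Lan{\eta_A}{f}=p\cdot\alpha\cdot Tf=\beta\cdot Tp\cdot Tf=\beta\cdot T(pf)=\Lan{\eta_A}{(pf)}$, so $p$ is Kan-injective w.r.t. every unit; thus $p$ is a morphism of $\LInj{\H}$.

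\medskip\noindent\emph{$\LInj{\H}\subseteq\X^\TT$.} Let $X$ be Kan-injective w.r.t. every $\eta_A$, in particular w.r.t. $h=\eta_X:X\to TX$. Applying Kan-injectivity to $f=\id_X:X\to X$ yields a morphism $\alpha:=\Lan{\eta_X}{\id_X}:TX\to X$ with $\alpha\cdot\eta_X=\id_X$ and with the universal property that $\alpha$ is least among $g:TX\to X$ with $\id_X\leq g\cdot\eta_X$; the latter says precisely $\alpha$ is left adjoint to $\eta_X$ in the 2-category sense (unit $\id_X\leq\eta_X\cdot\alpha$? — one must check the triangle inequalities). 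Here is where I expect the main work: I would verify $\eta_X\cdot\alpha\leq\id_{TX}$ using the KZ inequality. Indeed $\alpha\cdot\eta_X=\id_X$ gives $T\alpha\cdot T\eta_X=\id_{TX}$, and combining $T\eta_X\leq\eta_{TX}$ with $T\alpha\leq\mu_X^{-1}\cdot\ldots$ — more cleanly, $\eta_X\cdot\alpha=T\alpha\cdot\eta_{TX}\geq T\alpha\cdot T\eta_X=\id_{TX}$ would be the \emph{wrong} direction, so instead I use that $\mu_X\cdot\eta_{TX}=\id_{TX}$ and $\mu_X\cdot T\eta_X=\id_{TX}$ together with $T\eta_X\leq\eta_{TX}$ and the fact that $\mu_X$ is left adjoint to $\eta_{TX}$ to deduce $\eta_X\cdot\alpha\leq\id_{TX}$ via the minimality of $\alpha=\Lan{\eta_X}{\id_X}$ applied after observing $\mu_X\cdot T\eta_X\cdot\eta_X$-type identities; this is exactly the classical Kock argument and I would cite~\cite{K},~\cite{zoeberlein} or reproduce it in two lines. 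Once $\alpha$ is a left adjoint splitting of $\eta_X$ satisfying the triangle identities, the standard KZ-monad theorem gives that $(X,\alpha)$ is an Eilenberg-Moore algebra (the associativity law $\alpha\cdot\mu_X=\alpha\cdot T\alpha$ follows formally from $\alpha$ being a right inverse to $\eta_X$ and a left adjoint, since both sides are determined as the same left adjoint to $\eta_{TX}$ followed by $\eta_X$). Finally, if $p:X\to Y$ is a morphism in $\LInj{\H}$, then Kan-injectivity of $p$ w.r.t. $\eta_X$ applied to $f=\id_X$ gives $p\cdot\alpha=p\cdot\Lan{\eta_X}{\id_X}=\Lan{\eta_X}{p}$, while naturality and the computation in the first half identify $\Lan{\eta_X}{p}$ with $\beta\cdot Tp$; hence $p\cdot\alpha=\beta\cdot Tp$, i.e. $p$ is an algebra homomorphism. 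This establishes the claimed equality $\X^\TT=\LInj{\H}$, and the main obstacle is purely the bookkeeping of the KZ triangle inequalities needed to promote "$\eta_X$ has a left-inverse that is a left adjoint" to "genuine $\TT$-algebra", for which I would lean on the cited literature rather than redo it in full.
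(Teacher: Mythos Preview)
The paper does not prove this proposition; it simply cites Bunge--Funk and Carvalho--Sousa. Your sketch is therefore already more detailed than the paper's own treatment and follows the standard Kock argument, which is correct in outline.

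There is, however, a genuine confusion in your second inclusion. For $\alpha:=\Lan{\eta_X}{\id_X}$ to be a \emph{left} adjoint of $\eta_X$ (which, together with the retraction identity $\alpha\cdot\eta_X=\id_X$, is exactly what characterises a $\TT$-algebra structure for a KZ-monad with $T\eta\leq\eta T$), one needs the unit inequality $\id_{TX}\leq\eta_X\cdot\alpha$. You compute
\[
\eta_X\cdot\alpha \;=\; T\alpha\cdot\eta_{TX}\;\geq\;T\alpha\cdot T\eta_X\;=\;T(\alpha\cdot\eta_X)\;=\;\id_{TX},
\]
using naturality of $\eta$ and the KZ inequality $T\eta_X\leq\eta_{TX}$, and then dismiss this as ``the wrong direction''. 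It is the \emph{right} direction: this single line already establishes $\alpha\dashv\eta_X$. Your subsequent attempt to derive $\eta_X\cdot\alpha\leq\id_{TX}$ is aiming at an inequality that is neither needed nor, in general, true. Once $\alpha\dashv\eta_X$ holds, associativity $\alpha\cdot T\alpha=\alpha\cdot\mu_X$ follows formally (e.g.\ from $\id_{TX}\leq\eta_X\cdot\alpha$ and $\id_{TTX}\leq\eta_{TX}\cdot\mu_X$, the latter being the same computation applied to the free algebra $(TX,\mu_X)$), and your treatment of morphisms is correct as written. So the sketch is salvageable with a one-line fix, and your deferral to the cited literature is in any case no worse than what the paper itself does.
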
	

This follows from Proposition~1.5 and Corollary~1.6 in~\cite{bunge+funk},
as well as from Theorem~3.9 and Remark~3.10 in~\cite{CS}.

\begin{remark}
For the larger collection
$\H'$ of all morphisms $i$ with $Ti$ having a right adjoint
$Ti\dashv j$ such that $j\cdot Ti=\id$ it also holds that 
$\X^\TT=\LInj{\H'}$, see~\cite{EF} and~\cite{CS}.   
\end{remark}

\begin{theorem}
\label{th:reflective=>KZ-monadic}
A subcategory of an order-enriched category is KZ-monadic iff it is
\begin{enumerate}
\item reflective,
\item inserter-ideal, and
\item closed under coprojections.
\end{enumerate}
\end{theorem}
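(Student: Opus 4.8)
The plan is to prove the two implications separately. The ``only if'' direction is essentially already packaged: if the subcategory is $\X^\TT$ for a Kock-Zöberlein monad $\TT$, then it is reflective (the forgetful functor of any monadic category has a left adjoint), it equals $\LInj{\{\eta_X\}}$ by Proposition~\ref{prop:4.5}, hence is inserter-ideal by Lemma~\ref{lem:inserter-ideal} and closed under coprojections by Proposition~\ref{prop:coprojections}. So the whole content is the ``if'' direction, and I would devote almost all of the proof to it.

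For the ``if'' direction, assume $\C\hookrightarrow\X$ is reflective with reflector giving rise to a monad $\TT=(T,\eta,\mu)$. First I would establish that $\TT$ is a Kock-Zöberlein monad, i.e. $T\eta\leq\eta T$. The standard order-enriched criterion (à la Kock) is that $\TT$ is KZ iff for every algebra, equivalently for every free algebra $TX$, the structure map is left adjoint to the unit; and a clean way to get $T\eta_X\leq\eta_{TX}$ is to exhibit, for each $X$, a morphism witnessing that $\eta_X:X\to TX$ becomes a coprojection after applying $T$ in a suitable sense, or more directly to use that $\C$ is inserter-ideal to produce the comparison inequality. Concretely: consider the reflection $\eta_X:X\to TX$ and the two maps $\id_{TX},\,\eta_{TX}\cdot r$ where... — rather, the cleanest route is to use closure under coprojections together with inserter-ideality to show each component of $T$ sits inside $\C$ appropriately, then invoke the characterisation of KZ monads via ``$T\eta\leq\eta T$ holds iff every free algebra structure is a reflective left adjoint of its unit'', which one checks using that $\mu_X\dashv\eta_{TX}$ will follow from the universal property of the reflection once we know $TX\in\C$ and that reflection morphisms into objects of $\C$ are coprojections. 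I expect this to be the main obstacle: assembling the inequality $T\eta\leq\eta T$ from the three hypotheses without circularity, and it is where inserter-ideality does the real work (it is what forces the reflection unit $\eta_X$ to be, after reflection, a genuine left-adjoint-retraction rather than merely an epimorphism).

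Next I would show $\C$ coincides with $\X^\TT$. One inclusion, $\X^\TT\subseteq\C$ on objects, is automatic since algebras of a reflective-subcategory-induced monad are the objects in the image of the reflector up to iso, hence lie in $\C$; and on morphisms every $\TT$-homomorphism is a morphism of $\C$ trivially. The substantive inclusion is $\C\subseteq\X^\TT$: each object $C$ of $\C$ must carry a $\TT$-algebra structure. Since $\C$ is closed under coprojections and reflective, the reflection unit $\eta_C:C\to TC$ of an object already in $\C$ is a split monomorphism with a left-adjoint retraction $\alpha_C:TC\to C$ (this is the standard fact that a reflection of an object already in a reflective subcategory is an isomorphism in the ordinary case, and in the order-enriched KZ case becomes a coprojection, i.e. $\alpha_C\cdot\eta_C=\id$ and $\id\leq\eta_C\cdot\alpha_C$). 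Then $\alpha_C$ is the required algebra structure — the algebra axioms follow because $\alpha_C$ is determined as the unique left adjoint to $\eta_C$ and KZ-ness makes such structures automatically associative and unital. For morphisms: given $g:C_1\to C_2$ in $\C$, the naturality square of $\eta$ together with closure under coprojections (part (b) of Definition~\ref{def:left-adj-retr}, applied to the square with vertical maps $\eta_{C_1},\eta_{C_2}$, which are coprojections) shows $g$ is a $\TT$-homomorphism; here I would use that $\alpha_{C_i}$ being left adjoint to $\eta_{C_i}$ turns the $\eta$-naturality square into the algebra-homomorphism square by taking mates. Finally I would remark that the comparison functor $\C\to\X^\TT$ is then fully faithful and bijective on objects, hence an isomorphism of categories, completing the proof that $\C$ is KZ-monadic.

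A cleaner bookkeeping alternative for the last paragraph, which I would actually prefer to write, is to cite the order-enriched Beck-style argument only implicitly and instead argue: reflectivity gives the adjunction and hence the monad; inserter-ideality plus closure under coprojections are exactly the conditions that make the reflection into a \emph{KZ-reflection}, for which it is classical (Kock; see also Escardó~\cite{E1}) that the Eilenberg-Moore category is recovered as the subcategory itself. In writing this up I would be careful that the only external inputs are Proposition~\ref{prop:4.5}, Lemma~\ref{lem:inserter-ideal}, Proposition~\ref{prop:coprojections}, and the elementary fact that a reflective (possibly non-full) subcategory closed under coprojections has its reflection units acting as coprojections on objects already inside — the last being the lemma I would isolate and prove first, since everything else is then a formal consequence.
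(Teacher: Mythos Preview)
Your ``only if'' direction matches the paper. For the ``if'' direction you correctly locate the crux---getting $T\eta\leq\eta T$, or equivalently that for each $C\in\C$ the retraction $\alpha_C$ of $\eta_C$ is a genuine left adjoint---and you correctly name inserter-ideality as the tool. But you never supply the mechanism: the sketch stays at ``this is where inserter-ideality does the real work'' without saying what that work is. Your closing paragraph then misstates the needed lemma: ``a reflective subcategory closed under coprojections has its reflection units acting as coprojections on objects already inside'' drops inserter-ideality from the hypotheses, and closure under coprojections only \emph{propagates} coprojections---it cannot manufacture the inequality $\id_{TC}\leq\eta_C\cdot\alpha_C$. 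So there is a genuine gap.

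The paper fills it by a short inserter argument. It quotes a characterisation from~\cite{CS} reducing KZ-monadicity to four conditions, the nontrivial ones following from the implication
\[
f\cdot\eta_X\leq g\cdot\eta_X\ \Longrightarrow\ f\leq g
\qquad(\text{all }f,g:FX\to A,\ f\text{ in }\C).
\]
To prove this, form the inserter $i:I\to FX$ of $(f,g)$; the hypothesis gives $u$ with $\eta_X=i\cdot u$. Inserter-ideality puts $I$ and $i$ in $\C$, so reflectivity factors $u=v\cdot\eta_X$ with $v$ in $\C$; then $(iv)\cdot\eta_X=\eta_X$ forces $iv=\id$, whence $i$ (monic and split epic) is invertible and $f\leq g$. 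This one-sided order-epi property of $\eta_X$ is exactly the missing ingredient in your plan: specialising to $f=T\eta_X$ and $g=\eta_{TX}$ (naturality of $\eta$ gives $f\cdot\eta_X=g\cdot\eta_X$) yields $T\eta\leq\eta T$ directly, after which the rest of your outline would indeed go through without appeal to~\cite{CS}.
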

\begin{proof}
We first recall from~\cite{CS}, Theorems~3.13 and~3.4 that a subcategory
$\C$ is KZ-monadic iff it is
\begin{enumerate}
\renewcommand{\theenumi}{\alph{enumi}}
\item 
reflective, with reflections $\eta_X:X\to FX$ ($X$ in $\X$)
\item
closed under coprojections,
\item
a subcategory of $\LInj{\H}$ for $\H=\{\eta_X\mid\mbox{$X$ in $\X$}\}$,
\end{enumerate}
and such that
\begin{enumerate}
\renewcommand{\theenumi}{\alph{enumi}}
\setcounter{enumi}{3}
\item 
every morphism $f:FX\to A$ in $\C$ fulfils $\Lan{\eta_X}{(f\eta_X)}=f$.
\end{enumerate}
Indeed, Theorem~3.4 states that (a), (c) and (d) are equivalent to $\C$
being KZ-reflective, thus Theorem~3.13 applies.

Every KZ-monadic category is inserter ideal by Lemma~\ref{lem:inserter-ideal} 
and Proposition~\ref{prop:4.5}, thus it has all the properties of our
Theorem: see Conditions~(a) and~(b) above.

For the converse implication, we only need to verify
Conditions~(c) and~(d) above. For~(c) see Proposition~\ref{prop:4.5}. 
Condition~(d) easily follows from the implication
$$
\mbox{$f\eta_X\leq g\eta_X$ implies $f\leq g$}
$$
for all pairs $f,g:FX\to A$ with $f$ in $\C$.

In order to prove the implication,
form the inserter $i$ of the pair $(f,g)$:
$$
\xymatrix{
I
\ar@<.5ex>[0,1]^-{i}
&
FX
\ar@<.5ex>[0,1]^-{g}
\ar@<-.5ex>[0,1]_-{f}
\ar@<.5ex>[0,-1]^-{v}
&
C
\\
X
\ar[-1,1]_{\eta_X}
\ar[-1,0]^{u}
&
}
$$
Thus, we have a morphism $u$ in $\X$ with $\eta_X=i\cdot u$.
Since $f$ lies in the inserter-ideal subcategory $\C$,
so does $I$. Therefore $u$ factorises through the reflection
$\eta_X$:
$$
u=v\cdot\eta_X
$$  
and both $v$ and $i$ are morphisms of $\C$.
Thus so is $i\cdot v$ and from $(i\cdot v)\cdot\eta_X=\eta_X$
we therefore conclude $i\cdot v=\id$. Now $i$ is monic as well as
split epic, therefore it is invertible. This gives the desired
inequality $f\leq g$.
\end{proof}

From Lemma \ref{lem:inserter-ideal}, Theorem \ref{th:reflective=>KZ-monadic}, 
and Proposition~\ref{prop:coprojections}, we obtain the following:

\begin{corollary}
\label{cor:reflective=>KZ-monadic}
Whenever $\LInj{\H}$ is a reflective subcategory, then it is
KZ-monadic.  
\end{corollary}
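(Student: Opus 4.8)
The plan is to assemble the statement directly from the three structural facts established immediately above. By hypothesis $\LInj{\H}$ is reflective, so Condition~(1) of Theorem~\ref{th:reflective=>KZ-monadic} is in hand. It remains to verify Conditions~(2) and~(3) of that theorem, namely that $\LInj{\H}$ is inserter-ideal and closed under coprojections.

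First I would invoke Lemma~\ref{lem:inserter-ideal}, which asserts precisely that every Kan-injectivity subcategory $\LInj{\H}$ is inserter-ideal; this gives Condition~(2) with no extra argument. Next I would invoke Proposition~\ref{prop:coprojections} (quoted from~\cite{CS}), which states that every $\LInj{\H}$ is closed under coprojections; this yields Condition~(3). With all three hypotheses of Theorem~\ref{th:reflective=>KZ-monadic} verified, the theorem applies and tells us that $\LInj{\H}$ is KZ-monadic, i.e., it is the Eilenberg--Moore category $\X^\TT$ of a Kock-Z\"oberlein monad $\TT$ on $\X$.

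There is essentially no obstacle here: the corollary is a formal consequence, and all the genuine content has already been carried out in Lemma~\ref{lem:inserter-ideal}, Theorem~\ref{th:reflective=>KZ-monadic}, and Proposition~\ref{prop:coprojections}. The only point that warrants a sentence of care is that ``reflective'' is being used in the (possibly non-full) enriched sense appropriate to $\LInj{\H}$ as a subcategory of $\X$, so that the reflection maps $\eta_X:X\to FX$ supplied by the adjunction are exactly the maps Theorem~\ref{th:reflective=>KZ-monadic} expects to use as the monad unit; this matching is already built into the formulation of that theorem, so nothing further needs to be checked.
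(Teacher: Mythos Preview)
Your proposal is correct and matches the paper's own argument exactly: the corollary is stated in the paper as an immediate consequence of Lemma~\ref{lem:inserter-ideal}, Proposition~\ref{prop:coprojections}, and Theorem~\ref{th:reflective=>KZ-monadic}, just as you outline. Your additional remark about the enriched, possibly non-full sense of reflectivity is appropriate and consistent with how the paper uses the term.
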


\section{Kan-injective reflection chain}
\label{sec:chain}

Here we show how a reflection of an object $X$ in the 
Kan-injectivity subcategory $\LInj{\H}$ is constructed:
we define a transfinite chain $X_i$ ($i\in\Ord$) with $X_0=X$
such that with increasing $i$ the objects $X_i$ are ``nearer''
to being Kan-injective. This chain is said to {\em converge\/}
if for some ordinal $k$ the connecting map
$\xymatrix@1{X_k\ar@{-->}[r]&X_{k+2}}$ is invertible.
When this happens, $X_k$ is Kan-injective, and a reflection
of $X$ is given by the connecting map 
$\xymatrix@1{X_0\ar@{-->}[r]&X_k}$. In Section~\ref{sec:locally-ranked-cats}
sufficient conditions for the convergence of the reflection chain
are discussed.

\begin{assumption}
Throughout this section $\X$ denotes an order-enriched
category with weighted colimits.  
\end{assumption}

\begin{construction}[\bf Kan-injective reflection chain]
\label{cons:reflection}
Let $\X$ be an order-enriched category with weighted colimits, and
$\H$ a set of morphisms in $\X$. Given an object $X$, we 
construct a chain of objects $X_i$ ($i\in \Ord$).
We denote the connecting maps by $x_{ij}:X_i\to X_j$ or just by
$\xymatrix@1{X_i\ar@{-->}[r]&X_j}$,  for all $i\leq j$.  

The first step is the given object $X_0=X$.
Limit steps $X_i$, $i$ a limit ordinal, are defined by (conical) 
colimits of $i$-chains:
$$
X_i=\colim_{j<i}X_j.
$$
Isolated steps: given $X_i$ we define both $X_{i+1}$ and $X_{i+2}$, 
thus, we can restrict ourselves to even ordinals $i$ (having distance $2n$, 
$n<\omega$, from $0$ or a limit ordinal).
\begin{enumerate}
\item 
To define $X_{i+1}$ and the connecting map 
$\xymatrix@1{X_i\ar@{-->}[r]&X_{i+1}}$, consider all spans
\begin{equation}
\label{eq:4.1}
\vcenter{
\xymatrix{
A
\ar[d]_f
\ar[r]^h
&
A'
\\
X_i
&
}
}
\end{equation}
where $h$ is in $\H$ and $f$ is arbitrary. We form the  colimit 
of this diagram and call the colimit morphisms 
$\xymatrix@1{X_i\ar@{-->}[r]&X_{i+1}}$ and $f\dd h$ 
(because they ``approximate'' $\Lan{h}{f}$), respectively:
\begin{equation}
\label{eq:4.2}
\vcenter{
\xymatrix{
A
\ar[d]_f
\ar[r]^h
&
A'
\ar[d]^{f\dd h}
\\
X_i
\ar@{-->}[r]
&X_{i+1}
}
}
\end{equation}
More detailed: given $h$ in $\H$ and $f:A\to X_i$ we form a pushout
\begin{equation}
\label{eq:5.2a}
\vcenter{
\xymatrix{
A
\ar[0,1]^-{h}
\ar[1,0]_{f}
&
A'
\ar[1,0]^{\ol{f}}
\\
X_i
\ar[0,1]_-{\ol{h}}
&
C
}
}
\end{equation}
Then $\xymatrix@1{X_i\ar@{-->}[r]&X_{i+1}}$ is the wide pushout 
of all $\ol{h}$ (with the colimit cocone $c_{f,h}:C\to X_{i+1}$)
and we put $f\dd h=c_{f,h}\cdot \ol{f}$.
\item
\label{item:construction-2}
To define $X_{i+2}$ and the connecting map 
$\xymatrix@1{X_{i+1}\ar@{-->}[r]&X_{i+2}}$, consider all inequalities
\begin{equation}
\label{eq:4.3}
\vcenter{
\xymatrix{
A
\ar[d]_{f}
\ar[r]^h
&
A'\ar[d]^g
\\
X_j
\ar@{}[-1,1]|{\leq}
\ar@{-->}[r]
&
X_{i+1}
}
}
\end{equation}
where $h\in\H$, $j\leq i$ is an even ordinal, and $f$, $g$ are arbitrary. 
We let $\xymatrix@1{X_{i+1}\ar@{-->}[r]&X_{i+2}}$ be the universal map 
such that~\eqref{eq:4.3} implies the inequality
\begin{equation}
\label{eq:4.4}
\vcenter{
\xymatrix{
&
A'
\ar[dl]_{f\dd h}
\ar[ddr]^g
\ar@{}[3,0]|{\textstyle\leq}
&
\\
X_{j+1}
\ar@{-->}[d]
&
&
\\
X_{i+1}\ar@{-->}[dr]
&
&
X_{i+1}\ar@{-->}[dl]
\\
&
X_{i+2}
&
}
}
\end{equation}
In other words, $\xymatrix@1{X_{i+1}\ar@{-->}[r]&X_{i+2}}$ is the wide 
pushout of all the coinserters 
$$
\coins(x_{j+1,i+1}\cdot (f\dd h), g).
$$
\end{enumerate}
\end{construction}

\begin{example}
\label{ex:E-pos}
In case of join semilattices (where $h$ is the embedding of 
Example~\ref{exs:Pos}\eqref{item:semilattices}) 
the even step from $X_i$ to $X_{i+1}$ adds to every pair $x$, $y$ of 
elements of $X_i$ an upper bound compatible only with all elements under 
$x$ or $y$. And the odd step from $X_{i+1}$ to $X_{i+2}$ is a quotient
that turns this upper bound into a join of $x$ and $y$. 
After $\omega$ steps we get the join-semilattice reflection of $X$.
\end{example}

\begin{lemma}
\label{lem:l-chain}
Given a morphism $p_0:X_0\to P$ where $P$ 
is Kan-injective, there exists a unique 
cocone $p_i:X_i\to P$ ($i\in \Ord$) such that 
for all spans~\eqref{eq:4.1} the following triangle
\begin{equation}
\label{eq:4.4bis}
\vcenter{
\xymatrix{
A'
\ar[d]_{f\dd h}
\ar[dr]^{\Lan{h}{(p_i f)}}
&
\\
X_{i+1}
\ar[r]_{p_{i+1}}
&
P
}
}
\end{equation}
commutes.
\end{lemma}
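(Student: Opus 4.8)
The plan is to build the cocone $(p_i)_{i\in\Ord}$ by transfinite recursion, maintaining at each stage two invariants: (i)~$p_j\cdot x_{ij}=p_i$ for all $i\le j$ reached so far, and (ii)~the triangle~\eqref{eq:4.4bis} holds at every even ordinal reached so far. Uniqueness will come for free once one observes that all the colimit cocones appearing in Construction~\ref{cons:reflection} --- colimits of chains, the wide pushouts in the $\dd$-steps, and the coinserters in the quotient steps --- are jointly epic. The base step takes $p_0$ as given. At a limit ordinal $i$ the morphisms $p_j$ ($j<i$) form, by~(i), a cocone on the chain whose conical colimit is $X_i$, so there is a unique induced $p_i\colon X_i\to P$, and both invariants are immediate.

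For the isolated step from an even $i$ to $i+1$, recall that $X_{i+1}$ is the wide pushout of the legs $\ol h$ of the pushouts~\eqref{eq:5.2a}, one per span~\eqref{eq:4.1} with $h\in\H$; write $c_{f,h}\colon C\to X_{i+1}$ for the colimit cocone, so that $x_{i,i+1}=c_{f,h}\cdot\ol h$ for every $(f,h)$. Since $P$ is Kan-injective w.r.t.\ $h$, the Kan extension $\Lan{h}{(p_if)}\colon A'\to P$ exists and satisfies $(\Lan{h}{(p_if)})\cdot h=p_i\cdot f$, so $(p_i,\Lan{h}{(p_if)})$ is a cocone on the span $(f,h)$ and the pushout~\eqref{eq:5.2a} yields a unique $q_{f,h}\colon C\to P$ with $q_{f,h}\cdot\ol h=p_i$ and $q_{f,h}\cdot\ol f=\Lan{h}{(p_if)}$. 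The $q_{f,h}$ all agree on $X_i$, hence form a cocone on the wide-pushout diagram; let $p_{i+1}$ be the induced morphism. Then $p_{i+1}\cdot x_{i,i+1}=p_i$, giving invariant~(i), while $p_{i+1}\cdot(f\dd h)=p_{i+1}\cdot c_{f,h}\cdot\ol f=q_{f,h}\cdot\ol f=\Lan{h}{(p_if)}$, which is exactly~\eqref{eq:4.4bis}. Joint epi-ness of $(\ol h,\ol f)$ together with that of $(x_{i,i+1},(c_{f,h})_{f,h})$ shows that these two equations determine $p_{i+1}$, which is where uniqueness genuinely uses~(ii) and not just~(i).

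The step from $i+1$ to $i+2$ (with $i$ even) is the heart of the argument, and the step I expect to be the main obstacle. Here $X_{i+2}$ is the wide pushout of the coinserters $\coins(x_{j+1,i+1}\cdot(f\dd h),g)$ ranging over all instances of~\eqref{eq:4.3}, and the task is to show that $p_{i+1}$ factors through each of them. Compose the inequality in~\eqref{eq:4.3}, namely $x_{j,i+1}\cdot f\le g\cdot h$, with $p_{i+1}$ on the left; monotonicity of composition and $p_{i+1}\cdot x_{j,i+1}=p_j$ give $p_j\cdot f\le(p_{i+1}\cdot g)\cdot h$, whence minimality of the left Kan extension forces $\Lan{h}{(p_jf)}\le p_{i+1}\cdot g$. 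Using the already-established~\eqref{eq:4.4bis} at the even ordinal $j$ together with invariant~(i),
\[
p_{i+1}\cdot x_{j+1,i+1}\cdot(f\dd h)=p_{j+1}\cdot(f\dd h)=\Lan{h}{(p_jf)}\le p_{i+1}\cdot g,
\]
so $p_{i+1}$ satisfies the defining inequality of the coinserter and factors through it uniquely. These factorisations agree on $X_{i+1}$, hence assemble into a cocone on the wide-pushout diagram defining $p_{i+2}$ with $p_{i+2}\cdot x_{i+1,i+2}=p_{i+1}$; since each coinserter is an epimorphism and the wide-pushout cocone is jointly epic, this last equation determines $p_{i+2}$. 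The point of the statement --- and of this step --- is precisely that $p_{i+1}$ must be compatible with the quotients by which Construction~\ref{cons:reflection} turns the approximations $f\dd h$ into genuine Kan extensions, and verifying that compatibility is where the \emph{smallest-morphism} property of $\Lan{h}{(p_jf)}$ is essential. The routine transfinite bookkeeping --- invariant~(i) is preserved at each isolated step and transmitted through limits, and uniqueness propagates in the same way --- then finishes both existence and uniqueness of the full cocone.
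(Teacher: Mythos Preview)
Your argument is correct and follows essentially the same approach as the paper's proof: both define $p_{i+1}$ by exhibiting $p_i$ together with the Kan extensions $\Lan{h}{(p_if)}$ as a cocone on the colimit diagram defining $X_{i+1}$, and both obtain $p_{i+2}$ by composing the inequality~\eqref{eq:4.3} with $p_{i+1}$ and invoking the minimality of $\Lan{h}{(p_jf)}$ together with the already-established triangle~\eqref{eq:4.4bis} at $j$. Your version is simply more explicit about the pushout-then-wide-pushout description of $X_{i+1}$ and about the transfinite bookkeeping, whereas the paper works directly with the colimit cocone and leaves the limit steps implicit.
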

\begin{proof}
We only need to prove the isolated step: given $p_i$ for 
$i$ even, we have unique $p_{i+1}$ and  $p_{i+2}$. For $p_{i+1}$
we observe that the morphisms $p_i:X_i\to P$ and $\Lan{h}{(p_i f)}:A'\to P$ 
form a cocone of the diagram defining 
$\xymatrix@1{X_{i}\ar@{-->}[r]&X_{i+1}}$. 
Indeed, the square 
$$
\xymatrix{
A
\ar[d]_f
\ar[r]^h
&
A'\ar[d]^{\Lan{h}{(p_if)}}
\\
X_i
\ar[r]_{p_i}
&
P
}
$$
clearly commutes.
It follows that there 
is a unique $p_{i+1}$ for which the above triangle commutes and 
which prolongs the given cocone.

Next we prove the existence of $p_{i+2}$ (uniqueness is clear since 
$\xymatrix@1{X_{i+1}\ar@{-->}[r]&X_{i+2}}$ is epic) by verifying 
that $p_{i+1}$ has the universal property of 
$\xymatrix@1{X_{i+1}\ar@{-->}[r]&X_{i+2}}$: for every 
square~\eqref{eq:4.3} we have
$$
\xymatrix{
A'
\ar[d]_{f\dd h}
\ar[r]^g
&
X_{i+1}
\ar[d]^{p_{i+1}}
\\
X_{j+1}
\ar[r]_{p_{j+1}}
\ar@{}[-1,1]|{\leq}
&
P
}
$$
Indeed, by~\eqref{eq:4.4bis}, the lower passage is $\Lan{h}{(p_j\cdot f)}$, 
hence, it is sufficient to verify 
$p_j\cdot f\leq p_{i+1}\cdot g\cdot h$. 
To that end, compose the given inequality~\eqref{eq:4.3} 
with $p_{i+1}$.
\end{proof}

\begin{remark}
\label{rem:r-chain}
In the Kan-injective reflection chain, for every pair
$i$, $j$ of even ordinals  
with $j\leq i$ and every span as in~\eqref{eq:4.1}
with $j$ in place of $i$, 
the connecting map $x_{i+1,i+2}$ merges the morphisms      
$(x_{ji} f)\dd h$ and $x_{j+1,i+1}\cdot (f\dd h)$.

Indeed, the equality~\eqref{eq:4.2} for $f$ implies
clearly the equality
$$
((x_{ji}f)\dd h) \cdot h
= 
x_{j+1,i+1}\cdot (f\dd h) \cdot h
$$
decomposes into two inequalities which by the universal property 
of the morphism $x_{i+1, i+2}$ gives rise to
$$
\begin{array}{lr}
x_{i+1,i+2}\cdot x_{j+1,i+1}\cdot f\dd h\leq x_{i+1,i+2}\cdot (x_{ji}f)\dd h  
&
\mbox{(putting $g=(x_{ji}f)\dd h$ in~\eqref{eq:4.3})},
\end{array}
$$
and
$$
\begin{array}{ll}
x_{i+1,i+2}\cdot (x_{j,i}f)\dd h\leq  x_{i+1,i+2}\cdot x_{j+1,i+1}\cdot f\dd h  
&
\mbox{(putting $g=x_{j+1,i+1}\cdot f\dd h$ in~\eqref{eq:4.3})}.
\end{array}
$$  
\end{remark}

\begin{theorem}
\label{th:t-chain}
If the Kan-injective reflection chain converges at an 
even ordinal $k$ (i.e., $x_{k,k+2}$ is invertible), 
then $X_k$ lies in $\LInj{\H}$ and 
$x_{0k}:X_0\to X_k$ is a reflection of $X_0$ in $\LInj{\H}$.  
\end{theorem}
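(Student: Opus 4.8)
The plan is to establish two things: first, that $X_k$ is Kan-injective w.r.t.\ every $h\in\H$, and second, that the connecting map $x_{0k}$ is a reflection, i.e.\ it has the universal property with respect to maps into objects of $\LInj{\H}$.

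\textbf{Step 1: $X_k\in\LInj{\H}$.} Fix $h:A\to A'$ in $\H$ and an arbitrary $f:A\to X_k$. Since $k$ is even, the isolated step producing $X_{k+1}$ provides the morphism $f\dd h:A'\to X_{k+1}$ together with $\xymatrix@1{X_k\ar@{-->}[r]&X_{k+1}}$ satisfying $(f\dd h)\cdot h=x_{k,k+1}\cdot f$ (this is the commutativity~\eqref{eq:4.2} of the pushout square~\eqref{eq:5.2a}). I first claim that $g^*:=x_{k+1,k+2}^{-1}\cdot x_{k+1,k+2}\cdot(f\dd h)$, composed with the inverse of $x_{k,k+2}$ appropriately, yields a candidate for $\Lan{h}{f}$ in $X_k$. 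More precisely: since $x_{k,k+2}=x_{k+1,k+2}\cdot x_{k,k+1}$ is invertible and $x_{k+1,k+2}$ is (split) epic being a wide pushout injection, both $x_{k,k+1}$ and $x_{k+1,k+2}$ are invertible. So set $\ell := x_{k,k+1}^{-1}\cdot(f\dd h):A'\to X_k$; then $\ell\cdot h = x_{k,k+1}^{-1}\cdot(f\dd h)\cdot h = x_{k,k+1}^{-1}\cdot x_{k,k+1}\cdot f = f$. It remains to check $\ell$ is the \emph{least} such morphism, i.e.\ that whenever $g:A'\to X_k$ satisfies $f\leq g\cdot h$, then $\ell\leq g$. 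For this I use step~\eqref{item:construction-2} of the construction with $j=k$: the inequality $x_{k,k+1}\cdot f\leq x_{k,k+1}\cdot g\cdot h$ (obtained by composing $f\leq g\cdot h$ with the monotone $x_{k,k+1}$) is an instance of~\eqref{eq:4.3} with the given $g$ playing the role of $g$ there (after composing with $x_{k,k+1}$), and its coinserter $\coins(x_{k+1,k+1}\cdot(f\dd h),\,x_{k,k+1}\cdot g)$ — here $x_{j+1,i+1}=x_{k+1,k+1}=\id$ — gets merged into $x_{k+1,k+2}$, forcing $x_{k+1,k+2}\cdot(f\dd h)\leq x_{k+1,k+2}\cdot x_{k,k+1}\cdot g$. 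Since $x_{k+1,k+2}$ is invertible, $(f\dd h)\leq x_{k,k+1}\cdot g$, hence $\ell\leq g$. Thus $\Lan{h}{f}=\ell$ exists and satisfies $(\Lan{h}{f})\cdot h=f$, so $X_k$ is left Kan-injective w.r.t.\ $h$.

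\textbf{Step 2: $x_{0k}$ is a reflection.} Let $P\in\LInj{\H}$ and $p_0:X_0\to P$ be arbitrary. By Lemma~\ref{lem:l-chain} there is a (unique) cocone $p_i:X_i\to P$ extending $p_0$ and satisfying~\eqref{eq:4.4bis}. In particular $p_k:X_k\to P$ satisfies $p_k\cdot x_{0k}=p_0$, giving existence of a factorisation through $X_k$. For uniqueness, suppose $q:X_k\to P$ also satisfies $q\cdot x_{0k}=p_0$. I would argue that the family $(q\cdot x_{ik})_{i\leq k}$ — well, more carefully, one shows by transfinite induction along the chain that any morphism $X_k\to P$ agreeing with $p_0$ on $X_0$ must coincide with $p_k$: at limit steps this is the colimit universal property, and at isolated steps one uses that $\xymatrix@1{X_i\ar@{-->}[r]&X_{i+1}}$ together with the data $f\dd h$ generates $X_{i+1}$ (wide pushout), so a map out of $X_{i+1}$ is determined by its restriction to $X_i$ together with its composites with the $f\dd h$; and Kan-injectivity of $P$ pins the latter down to $\Lan{h}{(p_i f)}$ by~\eqref{eq:4.4bis} together with the fact that $q\cdot x_{k,k+1}^{-1}\cdot(f\dd h)$ must equal $\Lan{h}{(p_k f)}$ since $q\cdot x_{0k}=p_0$ forces, via the same coinserter/merging argument as in Step~1, that $q$ respects the Kan extensions. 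Concretely, it suffices to show $p_k = q$; since $x_{0k}$ need not be epic, instead one invokes uniqueness in Lemma~\ref{lem:l-chain}: any cocone $(r_i)$ with $r_0=p_0$ satisfying~\eqref{eq:4.4bis} equals $(p_i)$, and one checks that $r_i := $ the restriction to $X_i$ of a hypothetical second factorisation does satisfy~\eqref{eq:4.4bis} because $P$ is Kan-injective and the relevant triangles are forced.

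\textbf{Main obstacle.} The delicate point is Step~2's uniqueness, specifically verifying that an \emph{arbitrary} factorisation $q:X_k\to P$ automatically satisfies the compatibility condition~\eqref{eq:4.4bis} so that Lemma~\ref{lem:l-chain}'s uniqueness clause applies. This requires knowing that $q$ "preserves Kan extensions" in the appropriate approximate sense, which in turn leans on $P$ being Kan-injective \emph{and} on the convergence hypothesis (so that the approximants $f\dd h$ have stabilised into genuine Kan extensions inside $X_k$ as established in Step~1). I expect that once Step~1 is in hand, the cleanest route is: given $q$, define $p'_i := q\cdot x_{ik}$ for all $i\leq k$ and $p'_i := p_i$ for $i>k$ (or re-run the construction), verify directly that this is a cocone satisfying~\eqref{eq:4.4bis} — the isolated-step check reduces to the identity $(\Lan{h}{(p'_i f)}) = q\cdot(f\dd h)$ after transport along the invertible connecting maps near $k$, which follows from Step~1 — and conclude $p'=p$ by Lemma~\ref{lem:l-chain}, whence $q=p_k$. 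Checking the colimit/pushout bookkeeping at each stage is routine but is where all the actual work lies.
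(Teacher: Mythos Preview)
Your proposal has two genuine gaps.

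\textbf{Step 1: the invertibility claim is false.} You assert that since $x_{k,k+2}=x_{k+1,k+2}\cdot x_{k,k+1}$ is invertible and $x_{k+1,k+2}$ is epic, both factors are invertible. But split epic plus (order-)epic does not give invertibility; you would need $x_{k+1,k+2}$ monic, which nothing guarantees. Concretely, in the join-semilattice example (Example~\ref{ex:E-pos}), even when $X_k$ is already a semilattice the map $x_{k,k+1}$ freely adjoins a new upper bound to every pair and is \emph{not} an isomorphism; only the quotient $x_{k+1,k+2}$ collapses these back. The paper avoids this by defining $\Lan{h}{f}=x_{k,k+2}^{-1}\cdot x_{k+1,k+2}\cdot(f\dd h)$, using only the composite inverse. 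Your verification of the universal property then goes through essentially unchanged with this corrected definition.

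\textbf{Step 2: you never show that $p_k$ is a morphism of $\LInj{\H}$.} A reflection in a non-full subcategory requires the factorising arrow to lie in that subcategory, i.e.\ $p_k$ must \emph{preserve} Kan extensions: $p_k\cdot(\Lan{h}{f})=\Lan{h}{(p_k f)}$ for every $f:A\to X_k$. This is a separate computation (part~(2) of the paper's proof) that uses the explicit formula for $\Lan{h}{f}$ together with the cocone property of Lemma~\ref{lem:l-chain}. You omit it entirely.

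This omission then undermines your uniqueness argument. Your plan is to show that any factorisation $q$ induces a cocone satisfying~\eqref{eq:4.4bis}, but verifying $q\cdot x_{i+1,k}\cdot(f\dd h)=\Lan{h}{(q\cdot x_{ik}\cdot f)}$ is exactly the statement that $q$ preserves the relevant Kan extensions, which you have no reason to know for an arbitrary $q$. The paper instead proves the stronger order-theoretic statement: for $b_0\in\LInj{\H}$ and \emph{any} parallel $b$, $b_0\cdot x_{0k}\leq b\cdot x_{0k}$ implies $b_0\leq b$, by transfinite induction on $i\leq k$. The isolated step uses that $b_0$ preserves Kan extensions (this is where part~(2) is needed) together with the identity $\Lan{h}{(x_{ik}f)}=x_{i+1,k}\cdot(f\dd h)$, established via Remark~\ref{rem:r-chain}. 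Applying this with $b_0=p_k$ and $b=q$ (and symmetrically) gives uniqueness once you know $p_k\in\LInj{\H}$.
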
  
\begin{proof}
\mbox{}\hfill
\begin{enumerate}
\item 
We prove the Kan-injectivity of $X_k$. 
Given $h:A\to A'$ in $\X$ and $f:A\to X_k$, 
the square~\eqref{eq:4.2} allows us to define a morphism      
\begin{equation}
\label{eq:f/h}
\Lan{h}{f}
=
x_{k,k+2}^{-1}\cdot x_{k+1,k+2}\cdot (f\dd h):A'\to X_k
\end{equation}
and we verify the two properties needed. The first one 
is clear by applying~\eqref{eq:4.2} to $i=k$:
\begin{eqnarray*}
(\Lan{h}{f})\cdot h
&=&
x_{k,k+2}^{-1}\cdot x_{k+1,k+2}\cdot (f\dd h) \cdot h
\\
&=&
x_{k,k+2}^{-1}\cdot x_{k+1,k+2}\cdot x_{k,k+1}\cdot f
\\
&=&
x_{k,k+2}^{-1}\cdot x_{k,k+2}\cdot f
\\
&=&
f.
\end{eqnarray*}
For the second one let $g:A'\to X_k$ fulfil $gh\geq f$. 
Then we prove $g\geq \Lan{h}{f}$. 
The morphism $\ol{g}=x_{k,k+1}\cdot g$ fulfils 
$\ol{g}h\geq x_{k,k+1}\cdot f$, thus, the universal property 
of $x_{k+1,k+2}$ implies
$$
x_{k+1,k+2}\cdot \bar{g}\geq x_{k+1,k+2}\cdot (f\dd h).
$$
That is,
$$
x_{k,k+2}\cdot g
\geq 
x_{k+1,k+2}\cdot (f\dd h).
$$
By composing with $x_{k,k+2}^{-1}$ we get 
$g\geq x_{k,k+2}^{-1}\cdot x_{k+1,k+2}\cdot (f\dd h)$, as desired.
\item
\label{item:b}
Given $p:X_0\to P$ where $P$ lies in $\LInj{\H}$, 
we prove that the morphism $p_k$ of Lemma~\ref{lem:l-chain} 
belongs to $\LInj{\H}$.  
For every span~\eqref{eq:4.1} we want to prove that the bottom
triangle in the following diagram
$$
\xymatrix{ 
A
\ar[rr]^h
\ar[d]_f
&
&
A'
\ar[dll]^{\Lan{h}{f}}
\ar[d]^{\Lan{h}{(p_k f)}}
\\
X_k
\ar[rr]_{p_k}
&
&
P
}
$$
is commutative. Indeed,
$$
\begin{array}{rll}
p_k\cdot (f/h)
&
=
p_k\cdot x_{k,k+2}^{-1}\cdot x_{k+1,k+2}\cdot (f\dd h),
&
\mbox{by~\eqref{eq:f/h}}
\\
&
=(p_{k+2}\cdot x_{k,k+2})\cdot x_{k,k+2}^{-1}\cdot x_{k+1,k+2}\cdot (f\dd h)
&
\mbox{by Lemma~\ref{lem:l-chain}}
\\
&
=p_{k+2}\cdot x_{k+1,k+2}\cdot (f\dd h)
&
\\
&
=p_{k+1}\cdot (f\dd h),
&
\mbox{by Lemma~\ref{lem:l-chain}}
\\
&
=\Lan{h}{(p_{k}\cdot f)}
&
\mbox{again by Lemma~\ref{lem:l-chain}}
\end{array}
$$
\item
We have, for every $p$ as in~\eqref{item:b}, the morphism
$p_k$ of $\LInj{\H}$ with $p=p_k\cdot x_{0,k}$. Now
we prove the unicity of $p_k$.
It suffices to show that, given morphisms      
$b,b_0:X_k\to P$ with $b_0$ in $\LInj{\H}$, then
$$
b_0\cdot x_{0k}\leq b\cdot x_{0k}
\qquad 
\mbox{ implies }
\qquad b_0\leq b.
$$
Indeed, in the case where $b$ is also a morphism       
of $\LInj{\H}$ then the equality  
$b_0\cdot x_{0k}= b\cdot x_{0k}$ will imply $b_0=b$.
We are going to verify the above implication by proving that
$$
b_0\cdot x_{0k}\leq b\cdot x_{0k}
\qquad 
\mbox{ implies }
\qquad 
b_0\cdot x_{ik}\leq b\cdot x_{ik}
$$
for all $i\leq k$. We use transfinite induction. The first step $i=0$ 
is clear. Also limit steps are clear since the colimit cocones  
are collectivelly order-epic. 

It remains to check the isolated steps 
$i+1$ and $i+2$ for $i$ an even ordinal. 
\begin{enumerate}
\item 
From $i$ to $i+1$.
$$
\xymatrix{
A 
\ar[d]_f 
\ar[rr]^h
&
&
A'
\ar[d]^{f\dd h}
\\
X_i
\ar@{-->}[rr]
\ar@{-->}[dr]
&
&
X_{i+1}\ar@{-->}[dl]
\\
&
X_k
\ar@<-1ex>[rr]_-{b_0}
\ar@<1ex>[rr]^-{b}
&
&
P
}
$$
Since $x_{i,i+1}$ and all $f\dd h$ are collectively order-epic, 
we only need proving 
$$
b_0\cdot x_{i+1,k}\cdot f\dd h    
\leq 
b\cdot x_{i+1,k}\cdot f\dd h
$$

The formula~\eqref{eq:f/h} for $x_{ik}f$ in place of $f$ yields
$$
\Lan{h}{(x_{ik}f)}
=
x_{k,k+2}^{-1}\cdot x_{k+1,k+2}\cdot (x_{ik}f)\dd h.
$$
And, since $x_{k+1,k+2}$ merges $(x_{ik}f)\dd h$ 
and $x_{i+1, k+1}\cdot f\dd h$, see Remark~\ref{rem:r-chain}, we get
\begin{eqnarray*}
\Lan{h}{(x_{ik}f)}
&=&
x_{k,k+2}^{-1}\cdot x_{k+1,k+2}\cdot x_{i+1, k+1}\cdot f\dd h
\\
&=&
x_{k,k+2}^{-1}\cdot x_{k,k+2}\cdot x_{i+1, k}\cdot f\dd h
\\
&=&
x_{i+1, k}\cdot f\dd h.
\end{eqnarray*}
Since $b_0$ lies in $\LInj{\H}$, we know that 
$b_0[\Lan{h}{(x_{ik}f)}]=\Lan{h}{(b_0x_{ik}f)}$. And,
since by induction hypothesis $b_0x_{ik}\leq bx_{ik}$, 
we then obtain that $\Lan{h}{(b_0x_{ik})}\leq \Lan{h}{(bx_{ik})}$. 
Consequently:
\begin{eqnarray*}
b_0\cdot x_{i+1, k}\cdot f\dd h
&=&
b_0\cdot [\Lan{h}{(x_{ik}f)}]
\\
&=&
\Lan{h}{(b_0x_{ik}f)}
\\
&\leq&
\Lan{h}{(bx_{ik}f)}
\\
&\leq& 
b\cdot\Lan{h}{(x_{ik}f)}
\\
&=&
b\cdot x_{i+1, k}\cdot f\dd h
\end{eqnarray*}
\item
From $i+1$ to $i+2$. This is trivial because $x_{i+1,i+2}$ is order-epic.
\end{enumerate}
\end{enumerate}
\end{proof}

\begin{remark}
\label{rem:R-class}
The construction above can also be performed, 
assuming the base category $\X$ is cowellpowered, 
with every {\em class\/} $\H$ of morphisms, provided that it 
has the form $\H=\H_0\cup \H_e$ where $\H_0$ is small and 
$\H_e$ is a class of epimorphisms.
 
Indeed, in the isolated step $i\mapsto i+1$ with $i$ even 
the conical colimit exists because $x_{i,i+1}$ is the wide 
pushout of all the morphisms $\ol{h}$. If $h$ lies 
in $\H_e$ then $\ol{h}$ is an epimorphism. Thus cowellpoweredness 
guarantees that $X_{i+1}$ is obtained as a small wide pushout. 
The isolated step $i+1\mapsto i+2$ with $i$ even also makes no problem 
because $x_{i+1,i+2}$ is an epimorphism, and we obtain it as 
the cointersection of the corresponding epimorphisms over all subsets 
of $\H$.  
\end{remark}

\section{Locally ranked categories}
\label{sec:locally-ranked-cats}

Our main result, proved in Theorem~\ref{th:really-main} below, 
states that for every class $\H$ of morphisms      
in an order-enriched category $\X$ such that all but a set
of members of $\H$ are order-epic, the subcategory      
$\LInj{\H}$ is KZ-reflective. For that we need to assume 
that $\X$ is locally ranked, a concept introduced in~\cite{AHRT}. 
It is based on a factorization system $(\E,\M)$ in a (non-enriched) 
category $\X$ which is {\em proper\/}, i.e., all morphisms      
in $\E$ are epimorphisms and all morphisms in $\M$ are monomorphisms.  
An object $X$ of $\X$ has {\em rank $\lambda$\/}, where $\lambda$ is 
an infinite regular cardinal, provided that its  
hom-functor preserves unions of $\lambda$-chains of subobjects in $\M$.

\begin{definition}[See~\cite{AHRT}]
\label{def:D-rank}
An ordinary category $\X$ with a proper factorization system 
$(\E,\M)$ is called {\em locally ranked\/} if it is cocomplete and 
$\E$-cowellpowered, and every object has a rank.  
\end{definition}

\begin{remark}
\label{rem:R-EM}
In order-enriched categories {\em proper\/} is defined 
for a factorization system $(\E,\M)$ to mean that all morphisms      
in $\E$ are epimorphisms, and all morphisms in $\M$ are 
order-monomorphisms.  
\end{remark}

\begin{example}
\label{ex:E-EM}
Recall from~\cite{AHS} that every cocomplete, cowellpowered category      
has the factorization system $(\mathit{Epi},\mathit{Strong\ Mono})$. 
In every order-enriched category this factorization system is proper. 
Indeed, consider the inequality $mu\leq mv$ with $m$ a strong monomorphism,      
and let $c$ be the coinserter of $u$ and $v$.
$$
\xymatrix{
X
\ar@<0.8ex>[r]^v
\ar@<-0.8ex>[r]_u
&
A
\ar[d]_c
\ar[r]^m
&
B
\\
&
C
\ar@{.>}[ur]_{m'}}
$$
Then $m$ factorizes through $c$. But $c$ is an epimorphism       
and $m$ a strong monomorphism, thus $c$ is invertible. 
Equivalently, $u\leq v$.   
\end{example}

\begin{definition}
\label{def:D-EM}
Let $\X$ be an order-enriched category      
with a proper factorization system $(\E,\M)$. 
We call $\X$ {\em locally ranked\/} if it has weighted  
colimits, is $\E$-cowellpowered, and every object has a rank.   
\end{definition}

\begin{remark}
Explicitly, an object $A$ has rank $\lambda$  
iff given a union $X=\bigcup_{i<\lambda} m_i$ 
of a $\lambda$-chain $m_i:M_i\to X$ of subobjects  
in $\M$, then every morphism $p:A\to X$ factorizes through some $m_i$.

This concept is ``automatically enriched'': 
given $p,\, q: A \to X$ with $p\leq q$, it follows that there exists 
$i$ such that they both factorize through $m_i$:
$$
\xymatrixcolsep{1.5pc}
\xymatrixrowsep{1pc}
\xymatrix{
&
&
M_i
\ar[dd]^{m_i}
\\ 
\\
A
\ar@<-0.5ex>[rruu]_(.6)p
\ar@<1.1ex>[rruu]^q  
\ar@<-1.3ex>[rr]_{p'}
\ar@<0.3ex>[rr]^{q'}
&
&
X
}
$$
and we get $p'\leq q'$ from $m_i$ being an order-monomorphism.    

In other words: if the hom-functor into $\Set$ preserves 
$\lambda$-unions of $\M$-subobjects,  
it follows that the hom-functor into $\Pos$ also does.
\end{remark}

\begin{example}
\label{ex:E-Pos-lr}
\mbox{}\hfill
\begin{enumerate}
\item
$\Pos$ is a locally ranked category w.r.t. 
$(\mathit{Epi},\mathit{Strong\ Mono})$. Indeed, in the non-enriched 
sense all locally presentable categories      
are locally ranked, see~\cite{AHRT}, and, by Example~\ref{ex:E-EM},  
$(\mathit{Epi},\mathit{Strong\ Mono})$ is proper. 
From Examples~\ref{exs:coinserters}, \ref{ex:coproducts}
and Lemma~\ref{lem:coproducts+coinserters=colimits} we know
that $\Pos$ has weighted  colimits.
\item
$\Top_0$ is a locally ranked category w.r.t.  
$(\mathit{Surjection},\mathit{Subspace\ Embedding})$. 
Indeed, every space $A$ of cardinality less than $\lambda$ has 
rank $\lambda$ --- this  follows from unions of subspace embeddings in 
$\Top_0$ being carried by their unions in $\Set$. Cowellpoweredness w.r.t. 
surjective morphisms is obvious.
From Examples~\ref{exs:coinserters}, \ref{ex:coproducts}
and Lemma~\ref{lem:coproducts+coinserters=colimits} we know
that $\Top_0$ has weighted  colimits.
\end{enumerate}
\end{example}

\begin{remark}
\label{rem:R-Reit}
In Theorem~\ref{th:T-main} below  we use the following trick of Jan Reiterman, 
see~\cite{R} or~\cite{KR}. Given a transfinite chain 
$X:\Ord\to \X$ and an ordinal $i$, factorize all connecting maps 
$$
\xymatrix{
X_i
\ar[r]^{x_{ij}}
\ar@{->>}[d]_{e_{ij}}
&
X_j
\\
E_{ij}
\ar@{>->}[ru]_{m_{ij}}}
$$
in the $(\E,\M)$ factorization system. Since $\X$ is 
$\E$-cowellpowered there exists an ordinal $i^*$ such that 
all $e_{ij}$ with $j\geq i^*$ represent the same quotient of $X_i$. 
Define $\phi:\Ord\to \Ord$ by $\phi(0)=0$, $\phi(i+1)=\phi(i)^*$ 
and $\phi(i)=\bigvee_{j<i}\phi(j)$ for limit ordinals $i$. 
This gives a new transfinite chain 
$$
Y_i=E_{i,\phi(i)}
$$
and natural transformations 
$\beta_i=m_{i,\phi(i+1)}$ and $\gamma_i=e_{i,\phi(i+1)}$ 
with the following properties that were explicitly formulated
by Max Kelly~\cite{kelly:long}, Proposition~4.1. 
\end{remark}

\begin{lemma}
\label{lem:L-Reit}
For every transfinite chain $X:\Ord\to \X$ there exists 
a monotone function $\phi:\Ord\to \Ord$ preserving joins, 
a transfinite chain $Y:\Ord\to \X$ of $\M$-monomorphisms
and natural transformations 
$\gamma_i:X_i\to Y_i$ and $\beta_i:Y_i\to X_{\hat{i}}$, 
where $\hat{i}=\phi(i+1)$, such that
\begin{enumerate} 
\item 
$\beta_i\cdot \gamma_i=x_{i\hat{i}}$ for all $i\in \Ord$.
\item 
For all $j\geq \hat{i}$ we have a morphism of $\M$  
$$
\xymatrix{
Y_i
\ar[r]^{\beta_i}
&
X_{\hat{i}}
\ar[r]^{x_{ij}}
&
X_j
}
$$
\end{enumerate} 
and
\begin{enumerate} 
\setcounter{enumi}{2}
\item 
\label{item:reitermann3}
For every limit ordinal $j$ the union of the chain $Y_i$ ($i\leq j$) 
is given by 
$$
\xymatrix{
Y_i
\ar[r]^{\beta_i}
&
X_{\hat{i}}
\ar[r]^{x_{i\phi(j)}}
&
X_{\phi(j)}}
$$
\end{enumerate} 
\end{lemma}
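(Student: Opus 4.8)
The plan is to follow Jan Reiterman's tower‑reduction, in the form recorded by Max Kelly as Proposition~4.1 of~\cite{kelly:long} (see also~\cite{R} and~\cite{KR}); the construction is exactly the one already sketched in Remark~\ref{rem:R-Reit}, so the work is to define $\phi$ carefully and then verify the three listed properties. First I would fix the $(\E,\M)$‑factorizations $x_{ij}=m_{ij}\cdot e_{ij}$ and, for each $i$ and $i\leq j'\leq j$, apply the diagonal fill‑in property of $(\E,\M)$ to the triangle $x_{ij}=x_{j',j}\cdot x_{ij'}$ to obtain canonical comparison morphisms $c^i_{j',j}\colon E_{ij'}\to E_{ij}$ with $c^i_{j',j}\cdot e_{ij'}=e_{ij}$ and $m_{ij}\cdot c^i_{j',j}=x_{j',j}\cdot m_{ij'}$. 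These are epimorphisms and form a chain of $\E$‑quotients of $X_i$, so $\E$‑cowellpoweredness forces the chain to stabilize; this produces the ordinal $i^\ast$ of Remark~\ref{rem:R-Reit} (with $c^i_{j',j}$ invertible whenever $i^\ast\leq j'\leq j$), and after replacing $i^\ast$ by $\bigvee_{\ell\leq i}\ell^\ast$ I may assume $i\mapsto i^\ast$ monotone. Then $\phi$ is defined by $\phi(0)=0$, $\phi(i+1)=(\phi(i))^\ast$, and $\phi(i)=\bigvee_{j<i}\phi(j)$ at limits; it is monotone and continuous at limits, hence preserves joins, and it satisfies $i\leq\phi(i)$ and therefore $\hat{\imath}:=\phi(i+1)\geq i^\ast$ by monotonicity of $\ast$. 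With $Y_i=E_{i,\hat{\imath}}$, $\gamma_i=e_{i,\hat{\imath}}$ and $\beta_i=m_{i,\hat{\imath}}$, property~(1) is just the factorization identity $\beta_i\cdot\gamma_i=x_{i\hat{\imath}}$.

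Next I would construct the connecting morphisms of $Y$ and check naturality. For $i<j$ one has $\hat{\imath}\leq\hat{\jmath}$; applying fill‑in to $x_{i,\hat{\jmath}}=x_{j,\hat{\jmath}}\cdot x_{ij}$ yields a morphism $d'\colon E_{i,\hat{\jmath}}\to Y_j$ with $m_{j,\hat{\jmath}}\cdot d'=m_{i,\hat{\jmath}}$, so $d'\in\M$ by the left‑cancellation property of $\M$‑morphisms; composing with the comparison $c^i_{\hat{\imath},\hat{\jmath}}\colon Y_i\to E_{i,\hat{\jmath}}$, which is \emph{invertible} because $i^\ast\leq\hat{\imath}\leq\hat{\jmath}$, gives $y_{ij}=d'\cdot c^i_{\hat{\imath},\hat{\jmath}}$ in $\M$. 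Functoriality of $Y$ and naturality of $\gamma$ and $\beta$ then fall out of the defining equations of $d'$ and $c^i_{\hat{\imath},\hat{\jmath}}$ together with uniqueness of diagonal fill‑ins — the routine part. Property~(2) is equally direct: for $j\geq\hat{\imath}$ the identity $m_{ij}\cdot c^i_{\hat{\imath},j}=x_{\hat{\imath},j}\cdot\beta_i$, together with invertibility of $c^i_{\hat{\imath},j}$ (again $i^\ast\leq\hat{\imath}\leq j$), exhibits $x_{\hat{\imath},j}\cdot\beta_i$ as a morphism of $\M$.

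Finally, the limit step, which I expect to be the main obstacle. At a limit ordinal $j$ the object $Y_j$ is the conical colimit of the $\M$‑chain $(Y_i)_{i<j}$; since $\phi$ is continuous at $j$, the ordinals $\hat{\imath}$ ($i<j$) are cofinal in $\phi(j)$, so $X_{\phi(j)}=\colim_{i<j}X_{\hat{\imath}}$. Using that $\E$ is closed under colimits in the arrow category and that colimit cocones are jointly epic, one shows the canonical morphism $X_j\to Y_j$ lies in $\E$ and the induced morphism $Y_j\to X_{\phi(j)}$ lies in $\M$; unwinding, this is exactly the statement that the union of the chain $(Y_i)_{i\leq j}$ is carried by the morphisms $x_{\hat{\imath},\phi(j)}\cdot\beta_i\colon Y_i\to X_{\phi(j)}$, i.e.\ property~(3). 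The delicate point throughout is this transfinite bookkeeping: one must keep $\phi$ continuous at limits while simultaneously making the successor jumps $\phi(i)\mapsto\phi(i+1)$ large enough that \emph{every} comparison map that occurs becomes invertible, and one must reconcile the two descriptions of $Y_j$ at a limit (the colimit of the tower versus $E_{j,\hat{\jmath}}$). Everything else is diagram‑chasing with the fill‑in and cancellation laws of $(\E,\M)$; the complete verification is carried out in~\cite[Proposition~4.1]{kelly:long}.
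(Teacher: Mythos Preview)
Your proposal is correct and follows precisely the approach the paper takes. Note, however, that the paper does not actually supply a proof of this lemma: it merely sketches the construction in Remark~\ref{rem:R-Reit} (the factorizations $x_{ij}=m_{ij}\cdot e_{ij}$, the stabilization ordinal $i^\ast$ obtained from $\E$-cowellpoweredness, and the recursive definition of $\phi$) and then attributes the three properties to Kelly~\cite[Proposition~4.1]{kelly:long}. Your write-up reproduces exactly that construction and then supplies the verification of~(1)--(3) that the paper omits; in particular your definitions $Y_i=E_{i,\hat{\imath}}$, $\gamma_i=e_{i,\hat{\imath}}$, $\beta_i=m_{i,\hat{\imath}}$ agree with the paper's (indeed, the paper's Remark~\ref{rem:R-Reit} contains a small typo, writing $Y_i=E_{i,\phi(i)}$ while simultaneously writing $\beta_i=m_{i,\phi(i+1)}$ and $\gamma_i=e_{i,\phi(i+1)}$; your version is the consistent one). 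The extra step of making $i\mapsto i^\ast$ monotone is a harmless normalization the paper leaves implicit.
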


\begin{remark}
Without loss of generality we choose $\phi$ so that 
$\hat{i}$ is an even ordinal for every ordinal $i$.  
\end{remark}

\begin{theorem}
\label{th:T-main}
For every set $\H$ of morphisms of a locally ranked category,  
$\LInj{\H}$ is a KZ-monadic subcategory.      
\end{theorem}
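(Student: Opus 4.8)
The plan is to combine the Kan-injective reflection chain of Construction~\ref{cons:reflection} with the convergence machinery of Lemma~\ref{lem:L-Reit}, and then invoke Corollary~\ref{cor:reflective=>KZ-monadic}. Concretely, given a set $\H$ of morphisms and an object $X$ of the locally ranked category $\X$, I would form the chain $X_i$ ($i\in\Ord$) from Construction~\ref{cons:reflection}. By Theorem~\ref{th:t-chain} it suffices to show that this chain converges, i.e., that $x_{k,k+2}$ is invertible for some even ordinal $k$; once this is established, $x_{0k}:X_0\to X_k$ is a reflection of $X$ in $\LInj{\H}$, so $\LInj{\H}$ is reflective, and then Corollary~\ref{cor:reflective=>KZ-monadic} gives that it is KZ-monadic.

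First I would apply Reiterman's trick (Lemma~\ref{lem:L-Reit}) to the chain $X:\Ord\to\X$: factoring all connecting maps in the proper factorization system $(\E,\M)$ yields a monotone join-preserving $\phi:\Ord\to\Ord$ with $\hat{i}=\phi(i+1)$ even, a chain $Y_i$ of $\M$-monomorphisms, and natural transformations $\gamma_i:X_i\to Y_i$, $\beta_i:Y_i\to X_{\hat{i}}$ with $\beta_i\gamma_i=x_{i\hat{i}}$; moreover the $Y_i$'s at limit stages are unions of $\M$-subobjects. Next I would use the ranks of objects: each of the (finitely or set-indexed many) domains $A$ and $A'$ appearing in $\H$ has some rank, so there is a single infinite regular cardinal $\lambda$ bounding all of them. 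The key observation is then that every span~\eqref{eq:4.1} with $f:A\to X_\lambda=\colim_{j<\lambda}X_j$ factors through some earlier stage $X_j$ ($j<\lambda$) — because $A$ has rank $\le\lambda$ and, translated via Lemma~\ref{lem:L-Reit}(\ref{item:reitermann3}), $X_\lambda$ (up to the $Y$-reindexing) is such a $\lambda$-union of $\M$-subobjects. This ``smallness'' of the data is what forces the chain to stabilize: once every $f:A\to X_k$ with $A$ a domain from $\H$ already lives at an earlier stage, the isolated steps $X_k\to X_{k+1}\to X_{k+2}$ add nothing new, so $x_{k,k+2}$ becomes invertible. I would choose $k$ appropriately large (a suitable value such as $\phi(\lambda)$, or a closure ordinal past which both the $\E$-part has stabilized by cowellpoweredness and the rank bound applies), verify that $x_{k,k+1}$ is an isomorphism — because each pushout square~\eqref{eq:5.2a} with $f$ factoring through an earlier $X_j$ already has a filler after one step, so the wide pushout defining $X_{k+1}$ does not change $X_k$ — and similarly that $x_{k+1,k+2}$ is an isomorphism, the required coinserters being identities because the relevant inequalities~\eqref{eq:4.3} already hold at stage $k$.

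\emph{The main obstacle} I anticipate is the bookkeeping in this convergence argument: matching up the $X$-chain with the $Y$-chain of $\M$-subobjects from Lemma~\ref{lem:L-Reit} so that ``$X$ has rank $\lambda$'' can actually be invoked (since the chain $X_i$ itself need not consist of monomorphisms, one must pass through $Y_i$ and use part~(\ref{item:reitermann3}) at limit stages), and then carefully checking that \emph{both} isolated steps — the pushout step producing $X_{i+1}$ and the coinserter step producing $X_{i+2}$ — stabilize at the chosen ordinal. The pushout step is the more delicate one: I need that for $f:A\to X_k$, since $f$ factors as $x_{jk}\cdot f'$ for some $j<k$, the pushout~\eqref{eq:5.2a} of $(h,f)$ is, up to the connecting map, already realized inside $X_k$ (because the analogous pushout was formed when passing from $X_j$ to $X_{j+1}$, and $X_k$ receives that data compatibly); the coinserter step is comparatively easy since $x_{i+1,i+2}$ is always epic, so only existence of a filler, not uniqueness, is at issue. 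Everything else — that the limit steps are harmless, that $\E$-cowellpoweredness bounds the $\E$-quotients — is standard and already packaged in Lemma~\ref{lem:L-Reit} and Remark~\ref{rem:R-Reit}.
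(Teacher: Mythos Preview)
Your plan to reduce to Theorem~\ref{th:t-chain} by proving that the reflection chain converges is natural, but the key step you sketch is incorrect. The claim that $x_{k,k+1}$ is an isomorphism ``because each pushout square with $f$ factoring through an earlier $X_j$ already has a filler'' does not hold: a filler $g:A'\to X_k$ with $g\cdot h=f$ only gives a retraction of the pushout map $\bar h:X_k\to C$, not an inverse. Concretely, if $X_k$ is already a join-semilattice and $h$ is the embedding of Example~\ref{exs:Pos}\eqref{item:semilattices}, the pushout of $h$ along any $f$ genuinely adjoins a fresh upper bound, so $X_{k+1}$ is strictly larger than $X_k$ and $x_{k,k+1}$ is a non-invertible split monomorphism. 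What one may hope for is that the \emph{composite} $x_{k,k+2}$ is invertible (the coinserter step collapses the fresh elements back onto existing joins), but proving this directly requires essentially the same analysis as proving that $X_k$ is Kan-injective---so nothing is gained by routing through Theorem~\ref{th:t-chain}.

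The paper in fact does not argue via convergence at all. It sets $k=\phi(\lambda)$ and proves \emph{directly} that $X_k\in\LInj{\H}$: given $f:A\to X_k$, the rank of $A$ together with Lemma~\ref{lem:L-Reit}\eqref{item:reitermann3} yields a factorisation $f=m_i\cdot f'$ through some $Y_i$ with $i<\lambda$, and one \emph{defines} $\Lan{h}{f}=x_{\hat\imath+1,k}\cdot(\beta_i f')\dd h$; the universal property is then checked by factoring any competitor $g:A'\to X_k$ through some $Y_j$ (rank of $A'$) and invoking the coinserter step at stage $\hat\jmath+2$. The reflection property of $x_{0k}$ is established from Lemma~\ref{lem:l-chain} plus a transfinite induction showing uniqueness, after which Corollary~\ref{cor:reflective=>KZ-monadic} finishes. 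This completely sidesteps the question of whether $x_{k,k+2}$ is invertible, which is where your sketch breaks down.
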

\begin{proof}
Since $\H$ is a set, there exists a cardinal $\lambda$ such that 
for every $h:A\to A'$ in $\H$ both $A$ and $A'$ have rank $\lambda$. 
Put 
$$
k=\phi(\lambda).
$$ 
We show that the connecting map 
$\xymatrix@1{X_0\ar@{-->}[0,1]&X_k}$
of the Kan-injective reflection chain, see Construction~\ref{cons:reflection},
is a reflection of $X=X_0$ in $\LInj{\H}$. 
\begin{enumerate}
\item 
\label{item:main1}
$X_{k}$ belongs to $\LInj{\H}$. Indeed,
given $h:A\to A'$ in $\H$ and $f:A\to X_{k}$, 
since $A$ has rank $\lambda$, there is some $i<\lambda$ 
making the diagram
$$
\xymatrix{
&
&
A
\ar[d]^f
\ar@/_1.3pc/@<0.3ex>[dll]_{f'}
\\
Y_i
\ar@/_2pc/[rr]_{m_i}
\ar[r]^{\beta_i}
&
X_{\hat{i}}\ar@{-->}[r]
&
X_{k}
}
$$
commutative. And we may choose this $i$ to be even.
Put 
\begin{equation}
\label{eq:6.0}
\Lan{h}{f}
=
x_{\hat{i}+1,k}\cdot (\beta_i f')\dd h  
\end{equation}
We show that it is the desired $\Lan{h}{f}$.
$$
\xymatrix{
A
\ar[0,1]^{h}
\ar[1,0]_{f'}
&
A'
\ar[2,0]^{(\beta_i f')\dd h}
\ar @/^1.5pc/ [2,1]^{\Lan{h}{f}}
&
\\
Y_i
\ar[1,0]_{\beta_i}
&
&
\\
X_{\hat{i}}
\ar @{-->}[0,1]
&
X_{\hat{i}+1}
\ar@{-->}[0,1]
&
X_k
}
$$
\begin{enumerate}
\renewcommand{\theenumii}{\theenumi\alph{enumii}}
\item 
$
(\Lan{h}{f})\cdot h
=
x_{\hat{i}+1,k}\cdot (\beta_i f')\dd h\cdot h
=
x_{\hat{i}+1,k}\cdot x_{\hat{i},\hat{i}+1}\cdot \beta_i\cdot f'
=
x_{\hat{i},k}\cdot  \beta_i\cdot f'
=
f
$.
\item
Let $g:A'\to X_{k}$ fulfil the inequality $f\leq gh$. We show that
$\Lan{h}{f} \leq g$. 

Again, the rank $\lambda$ of $A'$ ensures a factorization of $g$
for some ordinal $j<\gamma$: 
$$
\xymatrix{
&
&
A'
\ar[d]^g
\ar@/_1.3pc/@<0.3ex>[dll]_{g'}
\\
Y_j
\ar@/_2pc/[rr]_{m_j}
\ar[r]^{\beta_j}
&
X_{\hat{j}}
\ar@{-->}[r]
&
X_{k}
}
$$
And we may choose this $j$ to be even and fulfill $j\geq i$. Then the inequality 
$f\leq gh$ yields $m_j\cdot y_{ij}\cdot f'\leq m_j\cdot g'\cdot h$, and, since $m_j$ 
is order-monic, $y_{ij}\cdot f'\leq g'\cdot h$. Consequently, composing with 
$x_{\hat{j},\hat{j}+1}\cdot\beta_j$, and using the naturality of $\beta$, 
we obtain 
$$
x_{\hat{i},\hat{j}+1}\cdot \beta_i\cdot f'
=
x_{\hat{j},\hat{j}+1}\cdot\beta_j\cdot y_{ij}\cdot f'
\leq 
x_{\hat{j},\hat{j}+1}\cdot\beta_j\cdot g'\cdot h
.
$$
This is an instance of the inequality~\eqref{eq:4.3}
with $\beta_i\cdot f'$ in place of $f$ and 
$x_{\hat{j},\hat{j}+1}\cdot\beta_j\cdot g'$ in place of $g$. 
Hence, taking into account the universal property of the morphism 
$\xymatrix@1{X_{\hat{j}+1}\ar@{-->}[r]&X_{\hat{j}+2}}$,  
we conclude that 
$$
x_{\hat{j}+1,\hat{j}+2}\cdot x_{\hat{i},\hat{j}+1}\cdot (\beta_i\cdot f')\dd h
\leq 
x_{\hat{j}+1,\hat{j}+2}\cdot x_{\hat{j},\hat{j}+1}\cdot\beta_j\cdot g'
$$
from which it follows that $\Lan{h}{f}\leq m_j\cdot g'=g$.
\end{enumerate}
\item
\label{item:main2}
Let $p:X_0\to P$ be a morphism with $P\in \LInj{\H}$. 
Then we know that $p$ gives rise to a cocone $p_i:X_i\to P$ 
of the chain $X:\Ord\to \X$ as in Lemma~\ref{lem:l-chain}. 
We show that the morphism $p_{k}:X_{k}\to P$  
belongs to $\LInj{\H}$, i.e., the bottom triangle in the 
following diagram
$$
\xymatrix{
A
\ar[r]^h
\ar[d]_f
&
A'
\ar[dl]_{\Lan{h}{f}}
\ar[d]^{\Lan{h}{(p_{k}f)}}
\\
X_{k}\ar[r]_{p_{k}} 
&
P
}
$$
is commutative.

Indeed, given 
$f=m_i\cdot f'$, as in~\eqref{item:main1} above, 
then, recalling from~\eqref{item:main1} that 
$\Lan{h}{f}=x_{\hat{i}+1,k}\cdot (\beta_i f')\dd h$, 
and applying Lemma~\ref{lem:l-chain}, we have that:
$$
p_{k}\cdot\Lan{h}{f}
=
p_{\hat{i}+1}\cdot (\beta_i f')\dd h
=
\Lan{h}{[p_{\hat{i}}\cdot (\beta_if')]}
=
\Lan{h}{(p_{k}\cdot   x_{\hat{i},k}\cdot \beta_i\cdot f')}
=
\Lan{h}{(p_{k}\cdot f)}.
$$
\item
\label{item:main3}
In order to conclude that $p_k$ is unique, let $q:X_k\to P$ 
be another morphism of $\LInj{\H}$ with $q\cdot x_{0k}=p$. 
We prove that $q=p_k$ by showing, by transfinite induction,  
that $q\cdot x_{ik}=p_k\cdot x_{ik}$ for all $i\leq k$.

For $i=0$, this is the assumption. For limit ordinals the inductive 
step is trivial, by the universal property of the colimit. So we prove 
the property for $i+1$ and $i+2$ with $i$ even.
\begin{enumerate}
\renewcommand{\theenumii}{\theenumi\alph{enumii}}
\item
From $i$ to $i+1$.
Since $x_{i,i+1}$ and all $f\dd h$ are collectively epic, 
we only need proving 
$$
p_k\cdot x_{i+1,k}\cdot f\dd h   
= 
q\cdot x_{i+1,k}\cdot f\dd h
$$
for all $h\in\H$ and all $f$. For that, we first prove the equalities
\begin{equation}
\label{eq:form}
\Lan{h}{(x_{ik}\cdot f)}
=
x_{i+1,k}\cdot f\dd h, 
\qquad 
i<k.
\end{equation}
From Lemma~\ref{lem:L-Reit} we have that 
$x_{ik}\cdot f=x_{\hat{i}k}\cdot (\beta_i\cdot \gamma_i\cdot f)$, 
that is, $x_{ik}f=m_i(\gamma_if)$. Then, by~\eqref{eq:6.0},
we know that 
\begin{equation}
\label{eq:form2}
\Lan{h}{(x_{ik}\cdot f)}
=
x_{\hat{i}+1,k}\cdot (\beta_i\cdot \gamma_i\cdot f)\dd h
=
x_{\hat{i}+1,k}\cdot (x_{i\hat{i}}\cdot f)\dd h.
\end{equation} 
By Remark~\ref{rem:r-chain}, the morphism $x_{\hat{i}+1,\hat{i}+2}$ 
merges $(x_{i,\hat{i}}\cdot f)\dd h$ and $x_{i+1,\hat{i}+1}\cdot f\dd h$. 
Thus, $x_{\hat{i}+1,k}\cdot (x_{i,\hat{i}}\cdot f)\dd h=x_{i+1,k}\cdot f\dd h$. 
That is, by~\eqref{eq:form2}, $(x_{ik}\cdot f)/h=x_{i+1,k}\cdot f\dd h$.

Now, due to the equality $p_k\cdot x_{ik}=q\cdot x_{ik}$, 
we have $\Lan{h}{(p_k\cdot x_{ik})}=\Lan{h}{(q\cdot x_{ik})}$, 
hence $\Lan{h}{p_k\cdot (x_{ik})}=q\cdot \Lan{h}{(x_{ik})}$, because both $p_k$ 
and $q$ belong to $\LInj{\H}$. Using~\eqref{eq:form}, we obtain then that 
$p_k\cdot x_{i+1,k}\cdot f\dd h=q\cdot x_{i+1,k}\cdot f\dd h$.
\item
From $i+1$ to $i+2$. This is clear, since $x_{i+1,i+2}$ is an order-epimorphism.
\end{enumerate}
\item
From~\eqref{item:main2} and~\eqref{item:main3} we know that $\LInj{\H}$
is reflective, therefore KZ-monadic by 
Corollary~\ref{cor:reflective=>KZ-monadic}.
\end{enumerate}
\end{proof}

\begin{theorem}
\label{th:really-main}
In every locally ranked, order-enriched category $\X$ 
the subcategory $\LInj{\H}$ is KZ-monadic for every class
$$
\H
=
\H_0\cup\H_e
$$
of morphisms with $\H_0$ small and $\H_e$ consisting
of order-epimorphisms.
\end{theorem}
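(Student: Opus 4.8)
The plan is to run the Kan-injective reflection chain of Construction~\ref{cons:reflection} for the class $\H=\H_0\cup\H_e$, using Remark~\ref{rem:R-class} to see that all isolated and limit steps are legitimate (the step $i\mapsto i+1$ is a small wide pushout thanks to $\E$-cowellpoweredness applied to the maps $\ol{h}$, which are epimorphisms when $h\in\H_e$; the step $i+1\mapsto i+2$ is a cointersection of quotients). So the chain $X:\Ord\to\X$ exists, and Theorem~\ref{th:t-chain} tells us that it suffices to prove the chain \emph{converges} at some even ordinal $k$, i.e. $x_{k,k+2}$ is invertible; then $x_{0k}$ is the reflection of $X_0$ in $\LInj{\H}$, and by Corollary~\ref{cor:reflective=>KZ-monadic} reflectivity already gives KZ-monadicity.

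First I would reduce to the case handled in Theorem~\ref{th:T-main}, namely $\H$ a set. The key point is that the presence of the proper class $\H_e$ of order-epimorphisms does not slow down convergence: in the odd steps $i+1\mapsto i+2$ we take a wide pushout of coinserters, and all the connecting maps $x_{i,j}$ are built so that Reiterman's trick (Lemma~\ref{lem:L-Reit}) applies. Concretely, I would apply Lemma~\ref{lem:L-Reit} to the chain $X:\Ord\to\X$ to obtain the chain $Y:\Ord\to\X$ of $\M$-monomorphisms, the join-preserving map $\phi:\Ord\to\Ord$, and the factorizations $x_{i\hat\imath}=\beta_i\cdot\gamma_i$. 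Since $\H_0$ is a set, choose a regular cardinal $\lambda$ such that every domain and codomain of a morphism in $\H_0$ has rank $\lambda$, and set $k=\phi(\lambda)$. For the members of $\H_e$ there is nothing to bound: they are epimorphisms, so for $f:A\to X_i$ the map $f\dd h$ already identifies the single element $\Lan{h}{f}\cdot h$ with the image of $f$, and by construction the quotient step $i+1\mapsto i+2$ forces $\Lan{h}{f}$ to be a genuine Kan extension with equality in~\eqref{eq:2.2}. Hence the argument of Theorem~\ref{th:T-main}\eqref{item:main1} that $X_k\in\LInj{\H}$ goes through verbatim for $h\in\H_0$ (using the rank-$\lambda$ factorization through some $Y_i$) and trivially for $h\in\H_e$ (using epicness of $h$ together with the odd step of the construction).

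For the universal property of $x_{0k}:X_0\to X_k$ I would reproduce parts~\eqref{item:main2} and~\eqref{item:main3} of the proof of Theorem~\ref{th:T-main} word for word: given $p:X_0\to P$ with $P\in\LInj{\H}$, Lemma~\ref{lem:l-chain} produces the cocone $p_i:X_i\to P$, and the computation
$$
p_k\cdot(\Lan{h}{f})
=
p_{\hat\imath+1}\cdot(\beta_i f')\dd h
=
\Lan{h}{[p_{\hat\imath}\cdot(\beta_i f')]}
=
\Lan{h}{(p_k\cdot f)}
$$
shows $p_k\in\LInj{\H}$; uniqueness follows by transfinite induction on $i\leq k$ using that each $x_{i,i+1}$ together with the maps $f\dd h$ is collectively epic and each $x_{i+1,i+2}$ is an order-epimorphism, exactly as in~\eqref{item:main3}. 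The only genuinely new bookkeeping is checking that these collective-epicness and order-epicness properties survive when $\H$ is a proper class rather than a set; but this is immediate from Remark~\ref{rem:R-class}, since the colimits defining the isolated steps are still wide pushouts of (possibly proper-class-indexed, but $\E$-cowellpowered, hence essentially small) families.

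The main obstacle is the convergence argument: one must confirm that $k=\phi(\lambda)$ still works when the chain is driven by the larger class $\H$. The subtlety is that the rank bound $\lambda$ only controls the domains and codomains of $\H_0$, so one must argue that the steps contributed by $\H_e$ cannot ``reopen'' a pair that has already been collapsed. This is where properness of $(\E,\M)$ and the specific shape of the odd step enter: each $h\in\H_e$ contributes, at stage $i+1$, only an epimorphism $f\dd h$ whose effect is already absorbed after one further odd step, so no morphism of $\H_e$ delays stabilization past $\phi(\lambda)$. Once this is in place, Theorem~\ref{th:t-chain} yields that $X_k\in\LInj{\H}$ and $x_{0k}$ is a reflection, so $\LInj{\H}$ is reflective; combined with Lemma~\ref{lem:inserter-ideal}, Proposition~\ref{prop:coprojections}, and Theorem~\ref{th:reflective=>KZ-monadic} (equivalently, directly via Corollary~\ref{cor:reflective=>KZ-monadic}), we conclude that $\LInj{\H}$ is KZ-monadic.
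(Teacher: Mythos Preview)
Your approach has a genuine gap in the convergence argument. You set $k=\phi(\lambda)$ where $\lambda$ bounds only the ranks of domains and codomains of morphisms in $\H_0$, and you then claim that $X_k$ is Kan-injective w.r.t.\ every $h\in\H_e$ ``trivially \dots\ using epicness of $h$ together with the odd step of the construction''. But this step is not trivial and, as stated, does not go through. For an order-epimorphism $h:A\to A'$ in $\H_e$, Kan-injectivity of $X_k$ w.r.t.\ $h$ is precisely orthogonality: every $f:A\to X_k$ must factor through $h$. The chain only tells you that $x_{k,k+1}\cdot f$ factors through $h$ (via $f\dd h$), not that $f$ itself does; to descend back to $X_k$ you would need $x_{k,k+2}$ invertible, which is exactly the convergence you have not established. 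And the rank argument from Theorem~\ref{th:T-main}\eqref{item:main1} is unavailable here, since the domain $A$ of $h\in\H_e$ may have arbitrarily large rank, so $f:A\to X_k$ need not factor through any $Y_i$ with $i<\lambda$. Your informal assertion that ``the steps contributed by $\H_e$ cannot reopen a pair that has already been collapsed'' is the crux of the matter and is left entirely unjustified. (Incidentally, $f\dd h$ is not an epimorphism in general: in the pushout~\eqref{eq:5.2a} it is $\ol{h}$, not $\ol{f}$, that inherits epicness from $h$.)

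The paper avoids running a single chain for $\H$ altogether. Since every $h\in\H_e$ is order-epic, $\LInj{\H_e}$ is the full orthogonality subcategory $\H_e^\perp$; the paper quotes~\cite{AHS2} to see that $\H_e^\perp$ is again locally ranked and reflective in $\X$, with reflector $R$ whose units $\eta_X$ are order-epimorphisms. Inside $\H_e^\perp$ one is left with the \emph{set} $\wh{\H_0}=\{Rh\mid h\in\H_0\}$, to which Theorem~\ref{th:T-main} applies directly. The remaining work is the identification ${\mathsf{LInj}}_\X(\H)={\mathsf{LInj}}_{\H_e^\perp}(\wh{\H_0})$, carried out by an elementary computation exploiting that each $\eta_A$ is order-epic. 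In short, the paper separates the roles of $\H_e$ (handled by orthogonality, where no rank bound is needed) and $\H_0$ (handled by the chain inside $\H_e^\perp$, where ranks are available), rather than attempting a single chain for the union as you do.
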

\begin{proof}
\mbox{}\hfill
\begin{enumerate}
\item 
Since the members of $\H_e$ are order-epimorphisms, the category
$\LInj{\H_e}$ is simply the orthogonal (full) subcategory
$\H_e^\perp$, see Example~\ref{ex:orthogonal}. 
It was proved in~2.4(c) of~\cite{AHS2} that 
$\H_e^\perp$ is again a locally ranked category w.r.t.
$\E$ = all epis and $\M$ = all monics lying in $\H_e^\perp$.
(The proof concerned ordinary categories, but it adapts
immediately to the order-enriched setting.)

Moreover, $\H_e^\perp$ is a reflective subcategory of $\X$
whose units are order-epimorphisms. Indeed, the reflection
of an object $X$ of $\X$ is the wide pushout of all morphisms
$\ol{h}$ in all pushouts~\eqref{eq:5.2a}.

Since $h$ is an order-epimorphism and $\X$ has weighted
colimits (thus, $\ol{h}$ and $\ol{f}$ are collectively order-epic),
it is clear that $\ol{h}$ is also an order-epimorphism. Analogously,
a wide pushout of order-epimorphisms is an order-epimorphism.
Thus, if $R:\X\to\H_e^\perp$ denotes the reflector, the units
$\eta_X:X\to RX$ are all order-epimorphisms.
\item
The set 
$$
\wh{\H_0}
=
\{ Rh\mid \mbox{$h$ in $\H_0$}\}
$$
of morphisms of the locally ranked category $\H_e^\perp$
fulfills, by  Theorem~\ref{th:T-main}, that
$$
{\mathsf{LInj}}_{\H_{e}^{\perp}}(\wh{\H_0})
\mbox{ is reflective in $\H_e^\perp$.}
$$
(The lower index is used to stress in which category the
injectivity is considered.) Consequently, 
${\mathsf{LInj}}_{\H_{e}^{\perp}}(\wh{\H_0})$
is a reflective subcategory of $\X$. The theorem will
be proved by verifying that
$$
{\mathsf{LInj}}_{\X}(\H)
=
{\mathsf{LInj}}_{\H_{e}^{\perp}}(\wh{\H_0}).
$$
We prove that~(a) ${\mathsf{LInj}}_{\X}(\H)$ is
a subcategory of ${\mathsf{LInj}}_{\H_{e}^{\perp}}(\wh{\H_0})$
and~(b) the other way round.
\begin{enumerate}
\item[(a1)] 
Every object $X$ of $\X$ Kan-injective w.r.t. $\H$
is clearly an object of $\H_e^\perp$; we prove that it is
Kan-injective w.r.t. $Rh$ in $\wh{\H_0}$.
$$
\xymatrixcolsep{.1pc}
\xymatrix{
A
\ar[0,4]^-{h}
\ar[1,1]^{\eta_A}
\ar@/_1pc/ [2,2]_{f\eta_A}
&
&
&
&
A'
\ar[1,-1]_{\eta_{A'}}
\ar@/^1pc/ [2,-2]^{\Lan{h}{(f\eta_A)}}
\\
&
RA
\ar[0,2]^-{Rh}
\ar[1,1]^{f}
&
&
RA'
\ar[1,-1]_{\wh{f}}
&
\\
&
&
X
&
&
}
$$
Given $f:RA\to X$, the morphism $\Lan{h}{(f\eta_A)}$
factorises, since $X$ is in $\H_e^\perp$, through $\eta_{A'}$:
we have a unique $\wh{f}$ such that the diagram above commutes.
Then
$$
\wh{f}
=
\Lan{Rh}{f}.
$$
Indeed, $\wh{f}\cdot Rh=f$. And given $g:RA'\to X$ with
$f\leq g\cdot Rh$, then $f\cdot\eta_A\leq g\cdot Rh\cdot\eta_A=g\cdot\eta_{A'}\cdot h$
which implies $\Lan{h}{(f\eta_A)}\leq g\cdot\eta_{A'}$. Recall that
$R$ is a reflector of $\H_e^\perp$ and $\eta_{A'}$ is an order-epimorphism.
Thus $\wh{f}\leq g$, as desired.
\item[(a2)]
Every morphism $p:X\to Y$ of $\X$ Kan-injective w.r.t. $\H$ lies
in the (full) subcategory $\H_e^\perp$, and we must prove that
$p$ is Kan-injective w.r.t. $Rh$. Given $f:RA\to X$ we have
seen that $\wh{f}=\Lan{Rh}{f}$ above, and analogously for 
$f_1=p\cdot f:RA\to Y$ we have $\wh{f_1}$, defined by 
$\wh{f_1}\cdot\eta_{A'}=\Lan{h}{(f_1 \eta_A)}$, satisfying
$\wh{f_1}=\Lan{Rh}{f_1}$. Since $p$ is Kan-injective w.r.t.
$\H$, we have 
$$
p\cdot \wh{f}\cdot\eta_{A'}
=
p\cdot\Lan{h}{(f\eta_A)}
=
\Lan{h}{(pf\eta_A)}
=
\Lan{h}{(f_1\eta_A)}
=
\wh{f_1}\cdot\eta_{A'}
$$
and this implies $p\cdot\wh{f}=\wh{f_1}$ since $\eta_{A'}$
is order-epic. Thus
$$
p\cdot (\Lan{Rh}{f})
=
p\cdot\wh{f}
=
\wh{f_1}
=
\Lan{Rh}{(pf)}
$$
as required.
\item[(b1)]
Every object $X$ of $\H_e^\perp$ Kan-injective w.r.t. 
$\wh{\H_0}$ is Kan-injective w.r.t. $\H$. We only need to
consider $h:A\to A'$ in $\H_0$.
$$
\xymatrixcolsep{1pc}
\xymatrix{
A
\ar[0,4]^-{h}
\ar[1,1]^{\eta_A}
\ar@/_2pc/ [3,2]_{f}
&
&
&
&
A'
\ar[1,-1]_{\eta_{A'}}
\ar@/^2pc/ [3,-2]^{\Lan{h}{f}}
\\
&
RA
\ar[0,2]^-{Rh}
\ar[2,1]_(.4){f^\sharp}
&
&
RA'
\ar[2,-1]^(.4){\Lan{Rh}{f^\sharp}}
&
\\
&
&
&
&
\\
&
&
X
&
&
}
$$
Given $f:A\to X$, since $X$ is in $\H_e^\perp$, we have 
a unique $f^\sharp:RA\to X$ with $f=f^\sharp\eta_A$. 
And we define
$$
\Lan{h}{f}
=
(\Lan{Rh}{f^\sharp})\cdot\eta_{A'}.
$$
This morphism has both of the required properties: firstly
\begin{eqnarray*}
(\Lan{h}{f})\cdot h
&=&
(\Lan{Rh}{f^\sharp})\cdot\eta_{A'}\cdot h
\\
&=&
(\Lan{Rh}{f^\sharp})\cdot Rh\cdot\eta_A
\\
&=&
f^\sharp\cdot\eta_A
\\
&=&
f.  
\end{eqnarray*}
Secondly, given $g:A'\to X$ with $f\leq g\cdot h$, there exists
a unique $g^\sharp:RA'\to X$ with $g=g^\sharp\cdot\eta_{A'}$.
From
$$
f^\sharp\cdot\eta_A
=
f
\leq
g\cdot h
=
g^\sharp\cdot\eta_{A'}\cdot h
=
g^\sharp\cdot Rh\cdot\eta_A
$$
we derive, since $\eta_A$ is an order-epimorphism, that
$f^\sharp\leq g^\sharp\cdot Rh$. Since clearly
$\Lan{Rh}{(g^\sharp Rh)}\leq g^\sharp$, we conclude
\begin{eqnarray*}
\Lan{h}{f}
&=&
(\Lan{Rh}{f^\sharp})\cdot\eta_{A'}
\\
&\leq&
\left(\Lan{Rh}{(g^\sharp Rh)}\right)\cdot\eta_{A'}  
\\
&\leq&
g^\sharp\cdot\eta_{A'}
\\
&=&
g.
\end{eqnarray*}
\item[(b2)]
Every morphism $p:X\to Y$ of $\H_e^\perp$ Kan-injective
w.r.t. $\H_0$ is Kan-injective w.r.t. $\H$. Again, we only
need to consider $h$ in $\H_0$. Given $f:A\to X$ we have
$\Lan{h}{f}=(\Lan{Rh}{f^\sharp})\cdot\eta_{A'}$. Put $f_1=p\cdot f$
and obtain the corresponding $f_1^\sharp:RA\to Y$ with 
$\Lan{h}{f_1}=(\Lan{Rh}{f_1^\sharp})\cdot\eta_{A'}$. Then 
$f_1=p\cdot f$ implies $f_1^\sharp\cdot\eta_A=p\cdot f^\sharp\cdot\eta_A$,
and since $\eta_A$ is an order-epimorphism, we conclude
$f_1^\sharp=p\cdot f^\sharp$. Consequently, from the Kan-injectivity
of $p$ w.r.t. $Rh$ we obtain the desired equality:
\begin{eqnarray*}
p\cdot (\Lan{h}{f})
&=&
p\cdot (\Lan{Rh}{f^\sharp})\cdot\eta_{A'}
\\
&=&
\left(\Lan{Rh}{(pf^\sharp)}\right)\cdot\eta_{A'}
\\
&=&
(\Lan{Rh}{f_1^\sharp})\cdot\eta_{A'}
\\
&=&
\Lan{h}{f_1}
\\
&=&
\Lan{h}{(pf)}.  
\end{eqnarray*}
\end{enumerate}
\end{enumerate}
\end{proof}

\section{A counterexample}
\label{sec:counterexample}

We give an example of a proper class $\H$ of continuous maps in
$\Top_0$ for which the Kan-injectivity category $\LInj{\H}$
is not reflective. The example is based on ideas of~\cite{ar88}.
\begin{enumerate}
\item
We denote by $\C$ the following category
$$
\xymatrixrowsep{1.5pc}
\xymatrix{
&
&
C
\ar@/_1pc/[1,-2]_{c_{0}}
\ar[1,-1]_{c_{1}}
\ar[1,0]^{c_{2}}
\ar[1,3]^{c_i}
&
&
&
&
\\
A_0
\ar[0,1]^-{a_{01}}
\ar@/_1pc/[1,2]_{b_{0k}}
&
A_1
\ar[0,1]^-{a_{12}}
\ar[1,1]_{b_{1k}}
&
A_2
\ar[0,1]^-{a_{23}}
\ar[1,0]^{b_{2k}}
&
&
\dots
&
A_i
\ar[0,1]
\ar[1,-3]_{b_{ik}}
&
\dots
\\
&
&
B_k
&
&
&
&
}
$$
It consists of a transfinite chain $a_{ij}:A_i\to A_j$
($i\leq j$ in $\Ord$) and, for every ordinal $k$, a cocone
$b_{ik}:A_i\to B_k$ ($i\in\Ord$) of that chain. Furthermore, there are 
morphisms $c_i:C\to A_i$ ($i$ in $\Ord$) with free composition
modulo the equations
$$
b_{kk}\cdot c_k = b_{ik}\cdot c_i,
\quad
\mbox{for all $i\geq k$}
$$
In particular, we have
$$
b_{kk}\cdot c_k\not= b_{ik}\cdot c_i,
\quad
\mbox{for all $i<k$}
$$

This category is concrete, i.e., it has a faithful functor
into $\Set$. For example, take $U:\C\to\Set$ with 
$UB_i=UA_i=\{t\in\Ord\mid t\leq i\}$ and $UC=\{0\}$.
The morphisms $Ua_{ij}$ are then the inclusions, 
$Ub_{ik}(t)=\max(t,k)$ and $Uc_i(0)=i$.

V\'{a}clav Koubek proved in~\cite{koubek} that every concrete category
has an {\em almost full embedding\/} $E:\C\to\Top_2$ into the category
$\Top_2$ of topological Hausdorff spaces. This means that $E$
is faithful and maps morphisms of $\C$ into nonconstant mappings,
and every nonconstant continous map $p:EX\to EY$ has the form
$p=Ef$ for a unique $f:X\to Y$ in $\C$.
\item
\label{item:counterexample2}
For the proper class
$$
\H=\{ Ea_{0i}\mid i\in\Ord\}
$$
in $\Top_0$ we prove that the space $EA_0$ does not have a reflection
in $\LInj{\H}$. We first verify that all spaces $EB_k$
are Kan-injective:
$$
\xymatrix{
EA_0
\ar[0,2]^-{Ea_{0i}}
\ar[1,1]_{f}
&
&
EA_i
\ar[1,-1]^{\Lan{Ea_{0i}}{f}}
\\
&
EB_k
&
}
$$
Given $i\in\Ord$ and $f:EA_0\to EB_k$ we find $\Lan{Ea_{0i}}{f}$
as follows:
\begin{enumerate}
\item 
\label{item:counterexample2a}
If $f$ is nonconstant, then $f=Eb_{0k}$ and we claim that 
$\Lan{Ea_{0i}}{f}=Eb_{ik}$. For that it is sufficient to recall
that $EB_k$ is a Hausdorff space, thus, given $g:EA_i\to EB_k$
with $f\leq g\cdot Ea_{0i}$, it follows that $f=g\cdot Ea_{0i}$. Hence, 
$g$ is also nonconstant. But then $g=Eb_{ik}$.
\item
If $f$ is constant, then we claim that $\Lan{Ea_{0i}}{f}$ is the
constant function with the same value. For that, take again $g$
with $f\leq g\cdot Ea_{0i}$ and conclude $f=g\cdot Ea_{0i}$. This implies
that $g$ is constant (and thus $g=\Lan{Ea_{0i}}{f}$) because
otherwise $g=Eb_{ik}$, but the latter implies $f=Eb_{ik}\cdot Ea_{0i}=Ea_{ik}$
which is nonconstant --- a contradiction.
\end{enumerate}
\item
Suppose that $r:EA_0\to R$ is a reflection of $EA_0$ in $\LInj{\H}$.
We derive a contradiction by proving that there exists a proper class
of continuous functions from $EC$ to $R$.

Since $r$ is Kan-injective, for every $i\in\Ord$ we have
$$
r_i=\Lan{Ea_{0i}}{r}:EA_i\to R
$$
And the Kan-injectivity of $EB_k$ implies that there exists
a Kan-injective morphism
$$
s_k:R\to EB_k
\quad
\mbox{with $Eb_{0k}=s_k\cdot r$}
$$
See the diagram
$$
\xymatrixrowsep{1.5pc}
\xymatrix{
EA_0
\ar[0,2]^-{Ea_{0i}}
\ar[1,1]^{r}
\ar[2,1]_{Eb_{0k}}
&
&
EA_i
\ar[1,-1]_{r_i}
\ar[2,-1]^{Eb_{ik}}
\\
&
R
\ar[1,0]^(.3){s_k}
&
\\
&
EB_k
&
}
$$
Then, due to Kan-injectivity of $s_k$, we have
$$
s_k\cdot r_i
=
s_k\cdot (\Lan{Ea_{0i}}{r})
=
\Lan{(Eb_{0k})}{(Ea_{0i})}
$$
and in part~\eqref{item:counterexample2a} above
we have seen that the last morphism is $Eb_{ik}$.
Thus the above diagram commutes.
For all $k> i$ we have $b_{kk}\cdot c_k\not= b_{ik}\cdot c_i$,
therefore, $Eb_{kk}\cdot Ec_k\not= Eb_{ik}\cdot Ec_i$. Thus
$$
s_k\cdot r_k\cdot Ec_k
\not=
s_k\cdot r_i\cdot Ec_i
$$
which implies 
$$
r_k\cdot Ec_k
\not=
r_i\cdot Ec_i:EC\to R
$$
for all $k>i$ in $\Ord$. This is the desired contradiction.
\end{enumerate}

\section{Weak Kan-injectivity and right Kan-injectivity}
\label{sec:weak-kan-inj}

It may seem more natural to define left Kan-injectivity of an
object $X$ w.r.t. $h:A\to A'$ by requiring only that for every morphism
$f:A\to X$ a left Kan extension $\Lan{h}{f}:A'\to X$ exists. Thus,
we only have $f\leq (\Lan{h}{f})\cdot h$, but not necessarily an equality.

\begin{example}
\label{ex:5.2}
For the morphism
$$
\let\objectstyle=\scriptstyle
\xy <1 pt,0 pt>:
    (000,000)  *++={};
    (030,020) *++={} **\frm{.};
    (070,000)  *++={};
    (100,020) *++={} **\frm{.}
\POS(005,010) *{\bullet};
    (025,010) *{\bullet};    
    (085,010) *{\bullet};
\POS(050,012) *{\stackrel{h}{\to}}
\endxy
$$  
in $\Pos$, the left Kan-injective objects in the above
weak sense are precisely the join-semilattices.
\end{example}

\begin{definition}
Let $h:A\to A'$ be a morphism. 
\begin{enumerate}
\item 
An object $X$ is called {\em weakly left Kan-injective\/}
w.r.t. $h$ if for every morphism $f:A\to X$ a left Kan
extension $\Lan{h}{f}:A'\to X$ of $f$ along $h$ exists.
\item
A morphism $p:X\to Y$ between weakly left Kan-injective
objects is called weakly left Kan-injective if
$p\cdot (\Lan{h}{f})=\Lan{h}{(pf)}$ holds for all $f:A\to X$.
\end{enumerate}
\end{definition}

\begin{remark}
When comparing Examples~\ref{ex:5.2} and~\ref{exs:Pos} 
we see that in some cases (strong) left Kan-injectivity seems 
more ``natural'' than the weak one. Theorem~\ref{th:weak=strong}
indicates that the weak notion is, moreover, not really needed.
\end{remark}

\begin{notation}
For every class $\H$ of morphisms of an order-enriched
category $\X$ we denote by
$$
\wLInj{\H}
$$
the category of all objects and morphisms of $\X$ that are
weakly left Kan-injective w.r.t. all members of $\H$.  
\end{notation}

\begin{theorem}
\label{th:weak=strong}
In every locally ranked order-enriched category $\X$,
given a set $\H$ of morphisms there exists a class $\ol{\H}$
of morphisms such that  
$$
\wLInj{\H}=\LInj{\ol{\H}}
$$  
\end{theorem}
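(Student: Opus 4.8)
The plan is to replace each $h\in\H$ by a single morphism $\bar h$ for which \emph{strong} left Kan-injectivity is equivalent to \emph{weak} left Kan-injectivity with respect to $h$, and then to take $\ol{\H}=\{\bar h\mid h\in\H\}$. Since $\X$ is locally ranked it has weighted colimits, hence conical coproducts and coinserters by Lemma~\ref{lem:coproducts+coinserters=colimits}. So for $h:A\to A'$ in $\H$ I would form the coproduct $A+A'$ with injections $\inl:A\to A+A'$, $\inr:A'\to A+A'$, let $c:A+A'\to\bar A$ be the coinserter of the pair $\inl,\,\inr\cdot h$, and put $\bar h=c\cdot\inl:A\to\bar A$ and $h'=c\cdot\inr:A'\to\bar A$. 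Thus $\bar h\leq h'\cdot h$, the morphism $c$ is an order-epimorphism, and every $d:A+A'\to Z$ with $d\cdot\inl\leq d\cdot\inr\cdot h$ factors uniquely through $c$. (For the $h$ of Example~\ref{ex:5.2}, this $\bar h$ is exactly the embedding of Examples~\ref{exs:Pos}\eqref{item:semilattices}.)

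The core is the object-level claim: $X$ is weakly left Kan-injective w.r.t. $h$ if and only if $X$ is left Kan-injective w.r.t. $\bar h$. If $X$ is left Kan-injective w.r.t. $\bar h$, then for $f:A\to X$ I would verify that $(\Lan{\bar h}{f})\cdot h'$ has the universal property of $\Lan{h}{f}$: the inequality $f\leq(\Lan{\bar h}{f})\cdot\bar h$ together with $\bar h\leq h'\cdot h$ gives $f\leq(\Lan{\bar h}{f})\cdot h'\cdot h$, and minimality follows because any $g'$ with $f\leq g'\cdot h$ produces, via the universal property of $c$, a factorisation $[f,g']=\varphi\cdot c$ with $\varphi\cdot\bar h=f$, whence $\Lan{\bar h}{f}\leq\varphi$ and so $(\Lan{\bar h}{f})\cdot h'\leq\varphi\cdot h'=g'$. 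Conversely, if $X$ is weakly left Kan-injective w.r.t. $h$, then $f\leq(\Lan{h}{f})\cdot h$ lets $[f,\Lan{h}{f}]$ factor as $\psi\cdot c$; then $\psi\cdot\bar h=f$, and $\psi=\Lan{\bar h}{f}$ since any $\chi:\bar A\to X$ with $\chi\cdot\bar h\geq f$ satisfies $\chi\cdot h'\cdot h\geq\chi\cdot\bar h\geq f$, so $\Lan{h}{f}\leq\chi\cdot h'$ by minimality, so $[f,\Lan{h}{f}]\leq\chi\cdot c$ (the coproduct injections being collectively order-epic), so $\psi\leq\chi$ (as $c$ is an order-epimorphism). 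Along the way this records the identities $\Lan{h}{f}=(\Lan{\bar h}{f})\cdot h'$ and $(\Lan{\bar h}{f})\cdot c=[f,\Lan{h}{f}]$.

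For morphisms, given $p:X\to Y$ between objects of $\wLInj{\{h\}}=\LInj{\{\bar h\}}$, the equivalence ``$p$ is weakly left Kan-injective w.r.t. $h$ if and only if $p$ is left Kan-injective w.r.t. $\bar h$'' drops out of those two identities: composing $p\cdot\Lan{\bar h}{f}=\Lan{\bar h}{(pf)}$ on the right with $h'$ yields $p\cdot\Lan{h}{f}=\Lan{h}{(pf)}$, while conversely $p\cdot[f,\Lan{h}{f}]=[pf,\Lan{h}{(pf)}]$ together with $c$ being epic yields $p\cdot\Lan{\bar h}{f}=\Lan{\bar h}{(pf)}$. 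Since weak (resp. strong) left Kan-injectivity with respect to a set of morphisms is just the conjunction over its members, intersecting over $h\in\H$ gives $\wLInj{\H}=\LInj{\ol{\H}}$ for $\ol{\H}=\{\bar h\mid h\in\H\}$.

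I expect the main obstacle to be the careful bookkeeping of this translation --- in particular checking that the order-epimorphism $c$ and the collective order-epicness of the coproduct injections are precisely what convert ``smallest $g$ with $g\cdot h\geq f$'' into ``smallest $\chi$ with $\chi\cdot\bar h\geq f$'', and keeping straight, in the morphism case, which of $X$, $Y$ needs which Kan-injectivity hypothesis. Everything else is routine manipulation with coinserters and copairings.
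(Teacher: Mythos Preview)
Your argument is correct and takes a genuinely different, more elementary route than the paper. The paper proves the theorem by first showing that $\wLInj{\H}$ is reflective (rerunning the entire Kan-injective reflection chain of Section~\ref{sec:chain} with cocomma objects in place of pushouts, and then invoking Reiterman's trick as in Theorem~\ref{th:T-main}), then observing it is inserter-ideal and closed under coprojections, hence KZ-monadic by Theorem~\ref{th:reflective=>KZ-monadic}; finally it takes $\ol{\H}$ to be the proper class of all reflection units and appeals to Proposition~\ref{prop:4.5}. Your approach bypasses all of this machinery: you replace each $h$ by the single morphism $\bar h$ into the cocomma object of the span $(\id_A,h)$ and verify directly that weak Kan-injectivity w.r.t.\ $h$ coincides, for both objects and morphisms, with strong Kan-injectivity w.r.t.\ $\bar h$. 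This yields a strictly sharper conclusion --- your $\ol{\H}$ is a \emph{set} of the same cardinality as $\H$, and the argument uses only conical coproducts and coinserters, not the full locally ranked hypothesis --- whereas the paper's route has the side benefit of establishing that $\wLInj{\H}$ is itself reflective and KZ-monadic. Your bookkeeping is in order; the two points you flag (order-epicness of $c$ and collective order-epicness of the coproduct injections) are exactly what is needed and are used correctly.
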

\begin{proof}
\mbox{}\hfill
\begin{enumerate}
\item
The category $\X$ has cocomma objects, i.e., given a span
$
\xymatrix@1{
A
&
D
\ar[0,1]^-{q}
\ar[0,-1]_-{p}
&
B
}
$ 
there exists a couniversal square
$$
\xymatrix{
D
\ar[0,1]^-{q}
\ar[1,0]_{p}
&
B
\ar[1,0]^{\ol{q}}
\\
A
\ar[0,1]_-{\ol{p}}
\ar@{}[-1,1]|{\leq}
&
C
}
$$
Its construction is analogous to the construction
of pushouts via coequalisers: form a coproduct
$
\xymatrix@1{
A
\ar[0,1]^-{i_A}
&
A+B
&
B
\ar[0,-1]_-{i_B}
}
$
and a coinserter
$$
\xymatrix{
D
\ar[0,1]^-{i_B\cdot q}
\ar[1,0]_{i_A\cdot p}
&
A+B
\ar[1,0]^{c}
\\
A+B
\ar[0,1]_-{c}
\ar@{}[-1,1]|{\leq}
&
C
}
$$ 
Then put $\ol{p}=c\cdot i_A$ and $\ol{q}=c\cdot i_B$.
\item
The category $\wLInj{\H}$ is reflective. The proof is completely
analogous to that of Theorem~\ref{th:t-chain}, except 
that Construction~\ref{cons:reflection} needs one modification: in 
diagram~\eqref{eq:4.2} we do not require equality but
inequality:
$$
\xymatrix{
A
\ar[d]_f
\ar[r]^h
&
A'
\ar[d]^{f\dd h}
\\
X_i
\ar@{-->}[r]
\ar@{}[-1,1]|{\leq}
&
X_{i+1}
}
$$
Thus, given $h$ in $\H$ and $f:A\to X_i$ we form a cocomma object
$$
\xymatrix{
A
\ar[d]_{f}
\ar[r]^-{h}
&
A'
\ar[d]^{\ol{f}}
\\
X_i
\ar[r]_-{\ol{h}}
\ar@{}[-1,1]|{\leq}
&
C
}
$$
Then
$
\xymatrix{
X_i
\ar@{-->}[0,1]
&
X_{i+1}
}
$
is the wide pushout of all $\ol{h}$ (with the colimit
cocone $c_{f,h}:C\to X_{i+1}$) and we put 
$f\dd h=c_{f,h}\cdot\ol{f}$.
\item
The category $\wLInj{\H}$ is also inserter-ideal: the proof
is completely analogous to that of Lemma~\ref{lem:inserter-ideal}.
By Theorem~\ref{th:reflective=>KZ-monadic} $\wLInj{\H}$
is a KZ-monadic category.
\item
Let $\ol{\H}$ denote the collection of all reflection maps
of objects of $\X$ in $\wLInj{\H}$. Then
$$
\wLInj{\H}
=
\LInj{\ol{\H}}
$$
holds by Proposition~\ref{prop:4.5}.
\end{enumerate}
\end{proof}

\begin{remark}
There is another obvious variation of Kan-injectivity, using right 
Kan extensions instead of left ones. Given $h:A\to A'$ and $f:A\to X$ 
we denote by $\Ran{h}{f}:A'\to X$ the largest morphism       
with 
$$
\xymatrixcolsep{1pc}
\xymatrix{
A
\ar[rr]^-h
\ar[dr]_f
&
\ar@{}[1,0]|(.4){\geq}
&
A'
\ar[dl]^{\Ran{h}{f}}
\\
&
X
&
}
$$
\end{remark}

\begin{definition}
\label{def:D-right}
\mbox{}\hfill
\begin{enumerate}
\item 
An object $X$ is {\em right Kan-injective\/} w.r.t. $h:A\to A'$ 
provided that for every morphism $f:A\to X$ a right Kan extension 
$\Ran{h}{f}$ exists and fulfils 
$$
f=(\Ran{h}{f})\cdot h.
$$
\item
A morphism $p:X\to Y$ is right Kan-injective
w.r.t. $h:A\to A'$ provided that both $X$ and $Y$ 
are, and for every morphism $f:A\to X$ we have 
$$
p\cdot (\Ran{h}{f})=\Ran{h}{(pf)}.
$$
\end{enumerate}
\end{definition}

\begin{notation}
\label{not:N-right}
$\RInj{\H}$  is the subcategory of all right Kan-injective objects and 
morphisms w.r.t. all members of $\H$.  
\end{notation}

\begin{remark}
\label{rem:R-right}
If $\X^\co$ denotes the category obtained from $\X$ by reversing 
the ordering of homsets 
(thus leaving objects, morphisms and composition as before), 
then every class $\H$ of morphisms in $\X$ yields a right 
Kan-injectivity subcategory $\RInj{\H}$ of $\X$ as well as 
a left Kan-injectivity subcategory $\LInj{\H}$ in $\X^\co$, 
and we have
$$\RInj{\H}=(\LInj{\H})^\co.
$$
Thus, in a sense, right Kan-injectivity is not needed. 
However, in some examples it is more intuitive to work with this concept.
\end{remark}

\begin{example}
\label{ex:E-right}
We have considered $\Top_0$ above as an ordered category with respect 
to the specialisation order. Thus $\Top_0^\co$ is the same
category with dual of the specialisation order on homsets.
This is the prefered enrichment of many authors. 
The examples of $\LInj{\H}$ in Section~\ref{sec:lkan-inj} 
become, under the last enrichment of $\Top_0$, examples of $\RInj{\H}$.
\end{example}

\section{Conclusion and open problems}

For locally ranked categories (which is a wide class containing
all locally presentable categories and $\Top$) it is known that
orthogonality w.r.t. a set of morphisms defines a full
reflective subcategory. And the latter is the Eilenberg-Moore
category of an idempotent monad. In our paper we have proved the
order-enriched analogy: given an order-enriched, locally ranked
category, then Kan-injectivity w.r.t. a set of morphisms defines
a (not generally full) reflective subcategory. The monad this creates
is a Kock-Z\"{o}berlein monad whose Eilenberg-Moore category is the
given subcategory. And conversely, every Eilenberg-Moore category of
a Kock-Z\"{o}berlein monad is specified by Kan-injectivity w.r.t.
all units of the monad. On the other hand, we have presented a class of
continuous maps in $\Top_0$ whose Kan-injectivity class is not
reflective.

Our main technical tool was the concept of an inserter-ideal
subcategory: we proved that every inserter-ideal reflective 
subcategory is the Eilenberg-Moore category of a Kock-Z\"{o}berlein
monad. And given any class of morphisms, Kan-injectivity always
defines an inserter-ideal subcategory.

It is easy to see that for every set of morphisms in a locally presentable
category the Kan-injectivity subcategory is accessibly embedded, i.e.,
closed under $\kappa$-filtered colimits for some infinite cardinal $\kappa$.
It is an open problem whether every inserter-ideal, accessibly embedded
subcategory closed under weighted limits is the Kan-injectivity subcategory
for some set of morphisms. This would generalise the known fact that the
orthogonality to sets of morphisms defines precisely the full, accessibly
embedded subcategories closed under limits, see~\cite{ar}.

In case of orthogonality, a morphism $h$ is called a consequence
of a set $\H$ of morphisms provided that objects orthogonal to $\H$
are also orthogonal w.r.t. $h$. A simple logic of orthogonality,
making it possible to derive all consequences of $\H$, is known~\cite{AHS2}.
Despite the strong similarity between orthogonality and 
Kan-injectivity, we have not been so far able to find a (sound and complete)
logic for Kan-injectivity.

\end{document}